\let \oldbm \bm
\renewcommand{\vec}[1]{\oldbm{#1}}
\def\bk{{\vec k}}
\def\bt{{\vec t}}
\def\bR{{\vec R}}
\def\bG{{\vec G}}
\def\bn{{\vec n}}
\def\bm{{\vec m}}
\def\br{{\vec r}}
\def\bt{{\vec t}}
\def\btau{{\boldsymbol \tau}}
\def\bsigma{{\boldsymbol \sigma}}
\def\sgn{\mathop{\mathrm{sgn}}}
\def\gcd{\mathop{\mathrm{gcd}}}
\def\Z{\mathbbm{Z}}
\def\T{\mathcal{T}}
\def\S{\mathcal{S}}
\def\I{\mathcal{I}}
\def\H{\mathcal{H}}
\def\K{\mathcal{K}}
\newcommand{\beq}{\begin{equation}}
\newcommand{\eeq}{\end{equation}}
\newcommand{\beqarray}{\begin{eqnarray}}
\newcommand{\eeqarray}{\end{eqnarray}}
\newcommand{\Mod}{\!\! \mod}
\newcommand{\Rom}[1]{\uppercase\expandafter{\romannumeral#1}}
\newcommand{\rme}{e}
\newcommand{\rmi}{i}
\newcommand{\rmd}{d}
\newcommand{\ex}[1]{\left\langle #1 \right\rangle}
\newcommand{\ket}[1]{|#1\rangle}
\newcommand{\bra}[1]{\langle #1 |}
\renewcommand*\env@matrix[1][*\c@MaxMatrixCols c]{%
	\hskip -\arraycolsep
	\let\@ifnextchar\new@ifnextchar
	\array{#1}}
\newtheorem{theorem}{Theorem}
\begin{document}

\title{Shift insulators: rotation-protected two-dimensional topological crystalline insulators}

\author{Shang Liu}
\affiliation{Department of Physics, Harvard University, Cambridge, MA 02138, USA}

\author{Ashvin Vishwanath}
\email{avishwanath@g.harvard.edu}
\affiliation{Department of Physics, Harvard University, Cambridge, MA 02138, USA}

\author{Eslam Khalaf}
\email{eslam\_khalaf@fas.harvard.edu}
\affiliation{Department of Physics, Harvard University, Cambridge, MA 02138, USA}
\affiliation{Max Planck Institute for Solid State Research, Heisenbergstrasse\ 1, 70569 Stuttgart, Germany}

\begin{abstract}
We study a two-dimensional (2D) tight-binding model of a topological crystalline insulator (TCI) protected by rotation symmetry. The model is built by stacking two Chern insulators with opposite Chern numbers which transform under conjugate representations of the rotation group, e.g. $p_\pm$ orbitals. Despite its apparent similarity to the Kane-Mele model, it does not host stable gapless surface states. Nevertheless the model exhibits topological responses including the appearance of quantized fractional charge bound to rotational defects (disclinations) and the pumping of angular momentum in response to threading an elementary magnetic flux, which are described by a mutual Chern-Simons coupling between the electromagnetic gauge field and an effective gauge field corresponding to the rotation symmetry. In addition, we show that although the filled bands of the model do not admit a symmetric Wannier representation, this obstruction is removed upon the addition of appropriate atomic orbitals, which implies `fragile' topology. As a result, the response of the model can be derived by representing it as a superposition of atomic orbitals with positive and negative integer coefficients.  Following the analysis of the model, which serves as a prototypical example of 2D TCIs protected by rotation, we show that all TCIs protected by point group symmetries which do not have protected surface states are either atomic insulators or fragile phases. Remarkably, this implies that gapless surface states exist in free electron systems if and only if there is a stable Wannier obstruction. We then use dimensional reduction to map the problem of classifying 2D TCIs protected by rotation to a zero-dimensional (0D) problem which is then used to obtain the complete non-interacting classification of such TCIs as well as the reduction of this classification in the presence of interactions.
\end{abstract}

\maketitle

\section{Introduction}
Conventional topological insulators (TIs) of free fermions are protected by internal symmetries such as time-reversal and are characterized by several equivalent distinguishing features \cite{Molenkamp13, Hasan10, Moore09, Qi11}: (i) the bulk of a topological insulator cannot be adiabatically deformed to a trivial insulator without closing the energy gap, (ii) the surface of a topological insulator hosts anomalous gapless symmetry-protected states, and (iii) there is an obstruction to finding a basis of symmetric localized Wannier states (in dimensions larger than 1) \cite{Thonhauser06, Soluyanov11, Marzari12}. The first of these distinctions is a relative one which only characterizes whether two phases are topologically distinct without specifying which of them is topological whereas the second and third ones are absolute distinctions which do not require comparison to a reference state\footnote{The second distinction can also be thought of as a relative distinction characterizing the interface between two phases. In a typical setting, one considers the interface with a ``vacuum'' state which is trivial by definition to make it absolute}. 
Other signatures of conventional topology include stable excitations bound to topological defects and pumping of charge, spin or polarization in response to various probes \cite{laughlin1981quantized,thouless1983quantization,fu2006time,ran2008spin,qi2008topological}.

In contrast to the notion of a topological phase with {\em internal} symmetries, the definition of topological phases in the presence of crystalline symmetries poses additional subtleties. Recall that free-fermion topological phases with {\em internal} symmetries always possess gapless edge states which also serves to identify the trivial phase. Hence the topology is absolute, and the trivial insulator is clearly distinguished. Further, for the symmetry classes relevant to electronic insulators, where particle hole symmetry is absent, all of these topological phases present an obstruction to the construction of symmetric Wannier states \cite{Thonhauser06, Soluyanov11}.  

For topological phases protected by crystalline symmetries such as translation, rotation, or inversion, three cases need to be distinguished. First, there are phases with protected edge states, which most closely resemble the internal symmetry protected topological insulators. These include, for example, the four-fold rotation or reflection symmetric topological crystalline insulators (TCIs), in which stable gapless modes appear on symmetric surfaces \cite{Fu11,Hsieh12, Tanaka12,zhang2016topological}. They feature an obstruction to Wannier localization which is stable, in the sense that it is not resolved even when filled bands of atomic insulators are supplied. 

Next, there are `fragile' topological insulators \cite{Po17fragile}, which also possess an obstruction to the construction of symmetric Wannier states for the occupied bands. However, these are resolved by the addition on filled atomic bands. The resulting state can be viewed as an atomic insulator $A$. If we consider insulators in the atomic limit, (where electrons are strictly localized to sites)  to be trivial, then this fragile topology can be unwound by the addition of these trivial bands, $A'$. In other words, the fragile topological insulators can be represented as a difference between two sets of atomic insulators $A-A'$. 

Finally, we have atomic insulators. Although these may appear to be trivial, symmetry imposes distinctions between them, related to whether they can be adiabatically deformed into one another while preserving symmetry \cite{Po17Lattice,Bradlyn17, Po17}. These distinctions are captured within K-theory, but the topology is relative since no band structure is singled out as trivial. Here, we will follow the usual convention of referring to all these states as TCIs, although a safer definition maybe to restrict that term to (i) above.

The relative topological distinction between gapped Hamiltonians is captured by K-theory \cite{Kitaev09} which classifies them into equivalence classes under symmetric adiabatic deformations. The K-group was worked out for TCIs (in all symmetry classes) with order-two symmetries, which include mirror \cite{Morimoto13, Chiu14}, inversion \cite{Lu14}, and two-fold rotation, in the work of Shiozaki and Sato \cite{Shiozaki14}. Recently, it was extended in the case of broken time-reversal symmetry (class A) to include all 17 wallpaper groups in two dimensions \cite{Slager17} and 230 space groups in three dimensions \cite{Shiozaki18}. The K-group does not, however, provide any information about the absolute topological signatures such as the existence of surface states and Wannier representability. 

The understanding of surface states of TCIs received a significant boost recently with the realization that their existence is not restricted to surface planes which are invariant under the protecting spatial symmetry. Instead, surface states can be also observed by considering symmetry-compatible surfaces, which only preserve the symmetry as a whole. In this case, a TCI in $d$ dimensions may host surface states on a surface whose co-dimension is less than $d-1$. Such types of surface states have been known to appear for instance upon applying a magnetic field to a topological insulator \cite{Sitte12} or superconductor \cite{Volovik10}, or inside topological defects \cite{Teo10, Benalcazar14} but the role of spatial symmetries in stabilizing them was only recently understood. This led to the notion of ``higher-order TIs'' which are TCIs with gapless corner or hinger modes \cite{Benalcazar17, Benalcazar18, Song17, Schindler17, Langbehn17}. Unlike conventional TCIs, higher-order TIs can be protected by symmetries which do not leave any surface plane invariant such as inversion \cite{Fang17, Khalaf17, Khalaf18, Geier18, Trifunovic18}, roto-inversion \cite{Khalaf17,vanMiert18}, or screws \cite{Khalaf17}. In these cases, any given surface plane breaks the symmetry leading to a gapped dispersion, but the hinges between different planes represent domain walls which host gapless surface states.

In a parallel development, a comprehensive understanding of Wannier obstructions that can be identified from symmetry representations was achieved in several recent works \cite{Bradlyn17, Po17, Watanabe17}. In Refs.~\onlinecite{Bradlyn17, Po17}, the symmetry representations for all possible atomic insulators in the 230 groups were discussed. Ref.~\onlinecite{Watanabe17} extended these results further to include the 1651 magnetic space groups. This approach provided an explicit representation for the trivial phases as well as a diagnosis for the obstruction to finding symmetric Wannier states in the non-trivial ones. Importantly, there are two qualitatively different origins of the obstructions, which can be distinguished by the addition of atomic degrees of freedom. While the obstruction remains for the stable two and three dimensional topological phases such as topological insulators, it is resolved in other cases which have been dubbed `fragile' topological phases \cite{Po17fragile}. Fragile topological phases can be thought of as combinations of atomic insulators with integer coefficients, but where some of the coefficients are allowed to be negative.

This work is motivated by two main questions. The first one is understanding the nature and response of fragile phases in a setting in which they arise naturally. Some of the models for fragile phases known so far \cite{Po17fragile, Cano18, Bouhon18} has been built specifically to illustrate the existence of fragile Wannier obstructions which might give the impression that fragile phases represent a somewhat pathological case. Instead, we show here that fragile phases are ubiquitous in TCIs protected by symmetries which do not support any surface states, such as rotation symmetry in 2D which is the main focus of this work. In addition, we show that fragile phases can sometimes be distinguished from atomic insulators by investigating their response to standard probes such as topological defects or flux threading. 

The second question concerns the relationship between stable Wannier obstructions and surface sates. We know that anomalous surface states in electronic systems implies a stable Wannier obstruction but it is unclear whether it is a necessary condition i.e. whether it is possible to have a stable Wannier obstruction in a TCI which does not posses any anomalous surface states. We will show here that this is not possible by establishing that, within the layer construction of TCIs \cite{Hermele17, Huang17,Fulga16}, the absence of surface states implies that the phase can be built by repeating or ``layering'' 0D units. This is then used to show that these TCIs are either atomic insulators or fragile phases. Although we restrict ourselves to point group symmetries, we conjecture that such relation holds in general.

For most of this work, we will focus on a prototypical example for a 2D TCI protected by rotation symmetry. The model, which we will dub ``shift insulator'' (in reference to the ``shift'' defined in Ref.~\onlinecite{Wen92} which, for example, is sensitive to the orbital spin of the different Landau levels), is built by stacking two rotationally-symmetric Chern insulators with opposite Chern numbers corresponding to conjugate representations of the $n$-fold rotation group, e.g. $p_\pm$ orbitals. It can be viewed as an analog of the Kane-Mele model \cite{Kane05a, Kane05b} where the protecting symmetry (time-reversal) is replaced by spatial rotation. Despite having no surface states, we will show that this model exhibits several interesting features which are usually associated with topology: (i) it has several distinct phases which cannot be symmetrically deformed to each other without closing the gap, (ii) it exhibits a topological response in the form of quantized fractional charge bound to rotational defects (disclinations) and a quantized angular momentum pumping in response to the application of magnetic flux, both features being captures by an effective Chern-Simons coupling between the electromagnetic gauge field and an effective gauge field corresponding to the rotation symmetry, and (iii) there exists an obstruction to the construction of symmetric localized Wannier functions. The disclination charge response in fullerene Haldane models, a closely related context, was discussed in Ref.~\onlinecite{Coh2013}. Despite the apparent non-triviality of the model, we will show that its topology is fragile i.e. it admits a symmetric Wannier representation upon the addition of some localized atomic orbitals. It follows that the topological response of the model can be fully explained using a picture of localized atomic orbitals. The number and type of these atomic orbitals is different for different phases, which explains how these phases can exhibit different values for some quantized invariants. This serves to show that fragile phases and even atomic insulators can exhibit seemingly topological features.

Following the analysis of the model, we consider the general problem of TCIs protected by point group symmetries. We show that for these TCIs, the absence of surface states implies they can be built within the layer construction \cite{Hermele17, Huang17,Fulga16} by repeating (or layering) a 0D unit. This is, in turn, used to establish they are either atomic insulators or fragile phases,  thereby showing that the existence of stable Wannier obstructions is equivalent to the existence of surface states. Our analysis is then used to obtain a complete non-interacting classification of 2D TCIs protected by rotation as well as the interaction-induced reduction of such classification. As an example, we use these results to show that the classification for the shift insulator is reduced from the non-interacting $\Z$ to $\Z_{12}$ in the presence of interactions.

\section{Model}
\label{Model}
We begin this section by introducing the shift insulator model and its symmetries. Afterwards, we investigate the different phases of the model by analyzing the possible symmetry-allowed mass terms which can be added to it in the continuum limit. We then analyze the edge theory and show that the edge can be completely gapped out using a specific symmetry allowed perturbation.

\subsection{Hamiltonian and Symmetries}
Given a 2D lattice with $n$-fold rotational symmetry, we can assign definite angular momenta $l=0,\dots,n-1$ to any given orbital. The model for the shift insulator is obtained by considering two orbitals with angular momenta $l_0$ and $-l_0$ which form bands with Chern number $C$ and $-C$ respectively. For most of this paper, we will focus on the case $n=6$, $l_0 = 1$ and $C=1$ which can be implemented by stacking two (6-fold symmetric) Haldane models \cite{Haldane88} with $p_\pm$ orbitals corresponding to opposite Chern numbers $\pm 1$. 

Recall that the Haldane model is a two-band model defined on a honeycomb lattice with the tight-binding Hamiltonian given by
\begin{equation}
	H_\text{Haldane}=-t\sum_{\ex{i,j}}c^\dagger_ic_j+\lambda\sum_{\ex{\ex{i,j}}}\rmi\nu_{ij}c^\dagger_ic_j. 
\end{equation}
Here, both $t$ and $\lambda$ are real numbers. $\nu_{ij}=+1~(-1)$ if the hopping direction from $j$ to $i$ is right-handed (left-handed) around the plaquette center, i.e. along (against) the directions indicated in Fig.~\ref{HaldaneModel}. 
\begin{figure}[h]
	\centering
	\includegraphics[width=0.5\linewidth]{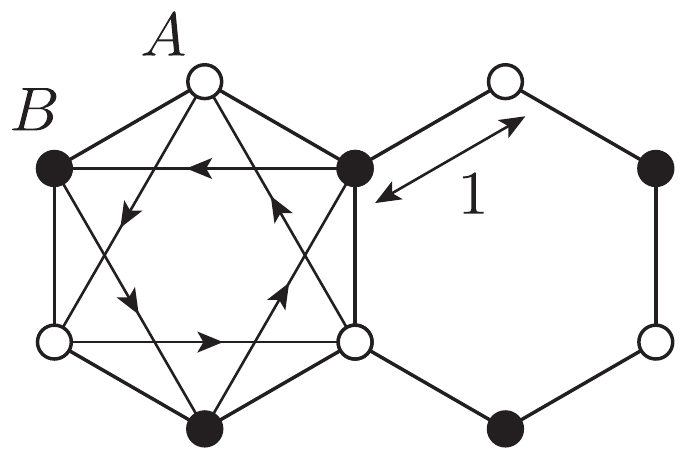}
	\caption{Sign convention for $\nu_{ij}$. $\nu_{ij}=+1~(-1)$ if the hopping direction from $j$ to $i$ is along (against) the directions indicated in the left plaquette. }
	\label{HaldaneModel}
\end{figure}

We now consider $p_\pm=p_x\pm\rmi p_y$ orbitals denoted by $\ket{\vec{r},\pm}$ at each site. Under the action of $2\pi/6$ rotation around some plaquette center on the honeycomb lattice, which we denote by $\hat C_6$, these states transform as
\begin{equation}
	\hat C_6\ket{\br,\pm}=\rme^{\mp\rmi\pi/3}\ket{O_6 \br,\pm},  
\end{equation}
where $O_6$ is the natural action of 6-fold rotation on 2D vectors given by 
\beq
O_6 (x,y) = \frac{1}{2} (x  - \sqrt{3} \, y, \sqrt{3}\, x + y).
\eeq
This implies the following transformation properties for the annihilation operators $c_{\br,\pm}$
\begin{equation}
	\hat C_6 c_{\br,\pm} \hat C_6^{-1}=\rme^{\pm\rmi\pi/3}c_{O_6 \br,\pm}. 
\end{equation}

The Hamiltonian for the shift insulator can then be written by assigning each of the two orbitals a Haldane model with opposite $\lambda$ couplings (therefore opposite Chern numbers). More specifically, 
\begin{equation}
	H=H_{\text{Haldane},p_+}(t,\lambda)+H_{\text{Haldane},p_-}(t,-\lambda). 
	\label{Shift}
\end{equation}

We now go to momentum space by considering fully periodic boundary conditions and taking the following Fourier transform convention: 
\begin{equation}
	c_{\bk}=\frac{1}{\sqrt{N}}\sum_{\vec{r}} \rme^{-\rmi \bk \cdot \br} c_{\br}~~~\Leftrightarrow~~~c_{\br}=\frac{1}{\sqrt{N}}\sum_{\vec{k}} \rme^{\rmi \bk \cdot \br} c_{\bk},
\end{equation}
where all discrete degrees of freedom, such as sublattices or orbital, are suppressed, and the length unit for $\vec{r}$ is shown in Fig.~\ref{HaldaneModel}. The shift insulator Hamiltonian can now be written as
\begin{equation}
	H=\sum_{\bk} c^\dagger_\bk
	h_\bk
	c_\bk, \quad
	c_\bk\equiv
	(c_{A,+,\bk}, c_{B,+,\bk},c_{A,-,\bk}, c_{B,-,\bk})^T
	,
	\label{Defck}
\end{equation}
with
\begin{gather}
h_\bk = -t [ \cos k_y + 2 \cos (k_y/2) \cos(\sqrt{3} k_x/2)] \sigma_x \nonumber \\+ t[\sin k_y - 2 \sin(k_y/2) \cos(\sqrt{3} k_x/2)] \sigma_y \nonumber \\+ 4\lambda \sin (\sqrt{3} k_x/2)[\cos (\sqrt{3} k_x/2) - \cos (3 k_y/2)] \sigma_z \tau_z.
\label{hk}
\end{gather}
Here, $\sigma$ and $\tau$ indicate the Pauli matrices in $\{A,B\}$ sublattices and angular momentum $\pm$ orbitals, respectively. The Hamiltonian (\ref{hk}) is invariant under 6-fold rotation symmetry implemented as
\beq
\label{C6}
U_6 h_{O_6 \bk} U_6^\dagger = h_\bk, \qquad U_6^\dagger U_6 = U_6^6 = 1,
\eeq
where $U_6$ defined by $\hat C_6 c_{\vec{k}} \hat C_6^{-1}=U_6c_{O_6\vec{k}}$ is a unitary action given explicitly by
\beq
\label{U6}
U_6 = \sigma_x e^{i \frac{\pi}{3} \tau_z}.
\eeq
In addition, the Hamiltonian is invariant under the spinless time-reversal symmetry given by $\T = \tau_x \K$, with $\K$ denoting complex conjugation. We will consider both variants of the model with and without time-reversal symmetry with the main difference being in the type of symmetry-allowed terms that can be added to this model.

We note that the Hamiltonian $h_\bk$ is not periodic in $\bk$ under the addition of a reciprocal lattice vector $\bG$. Instead, it changes by a gauge transformation
\beq
h_{\bk + \bG} = V_\bG h_\bk V_\bG^\dagger, \quad V_\bG = \left(\begin{array}{cc} e^{i\bG \cdot \bt_A} & 0 \\ 0 &  e^{i\bG \cdot \bt_B}  \end{array} \right)_{\rm AB},
\label{VG}
\eeq
where $\bt_{A,B}$ correspond to the position of the A/B sublattice sites relative to the center of the unit cell (which we take to be the rotation center). They are given by 
\beq
\label{tAB}
\bt_{A,B} = (\sqrt{3}/2, \pm 1/2).
\eeq
The reason for the relation (\ref{VG}) is that the Hamiltonian (\ref{hk}) takes into account the position of the A/B sublattice sites inside the unit cell. It follows that the periodicity of the Bloch states in momentum space has the form $\psi_{\bk + \bG} = V_\bG \psi_\bk$. The extra phase factor $e^{i \bk \cdot \bt_\alpha}$ enters all the formulas for Fourier transform. It will be sometimes easier to deal with periodic quantities by performing the unitary transformation 
\beq
h_\bk \rightarrow \H_\bk = V_\bk^\dagger h_\bk V_\bk \Rightarrow \H_{\bk+\bG} = \H_\bk.
\label{VhV}
\eeq
The Bloch states of $\H_\bk$ are periodic $\psi_{\bk + \bG} = \psi_\bk$. Such transformation changes the rotation operator $U_6$, making it momentum-dependent, as follows

\begin{equation}
	U_{6\bk} = V_\bk^\dagger U_6 V_{O_6 \bk} = 
	\begin{pmatrix}
	0 & e^{i\frac{\pi}{3}\tau_z}e^{-\frac{i}{2}(\sqrt{3}k_x+3k_y)}\\
	e^{i\frac{\pi}{3}\tau_z}  & 0
	\end{pmatrix}_{\rm AB}. 
\end{equation}
The extra phase factor corresponds to the fact that under rotation, a sublattice B site transforms into a sublattice A site in the same unit cell while a sublattice A site transforms to a sublattice B site in a different unit cell. 

\subsection{Phases and non-interacting classification}
We now discuss the possible phases of the model. We start by considering a fixed value of $t$ and investigate the phases of the model as a function of $\lambda$. We notice that when $\lambda=0$, the bands are gapless at two points $K,K'\equiv(\pm\frac{4\pi}{3\sqrt{3}},0)$ in the Brillouin zone as shown in Fig.~\ref{BrillouinZone}. This suggests that the model has two distinct phases for $\sgn(\lambda) = \pm 1$ separated by a gap closing phase transition. However, to establish this, we need to ensure that such critical point cannot be removed by adding any symmetry-allowed perturbation which is done by expanding around the $K$ and $K'$ valleys and consider the low-energy effective theory, which has the form of a Dirac Hamiltonian
\beq
\label{HD}
h_\bk = v_F (k_x \sigma_x \gamma_z + k_y \sigma_y) - m \sigma_z \gamma_z \tau_z.
\eeq
Here, $v_F=3t/2$, $m=3\sqrt{3}\lambda$, $\gamma$ denotes the Pauli matrices in valley ($K$ or $K'$) space while $\sigma$ and $\tau$ denote the Pauli matrices in the sublattice (A or B) and orbital ($p_\pm$), respectively, as in (\ref{hk}). $\hat C_6$ and $\T$ symmetries are implemented in the continuum theory as
\beq
U_6 = - \sigma_x \gamma_x e^{-i \frac{\pi}{3} \sigma_z \gamma_z} e^{i \frac{\pi}{3} \tau_z}, \quad \T = \gamma_x \tau_x \K,
\eeq
where the extra factor of $\gamma_x$ reflects the fact that both $\hat C_6$ and $\T$ symmetries map the valleys to each other.

\begin{figure}[h]
	\centering
	\includegraphics[width=0.7\linewidth]{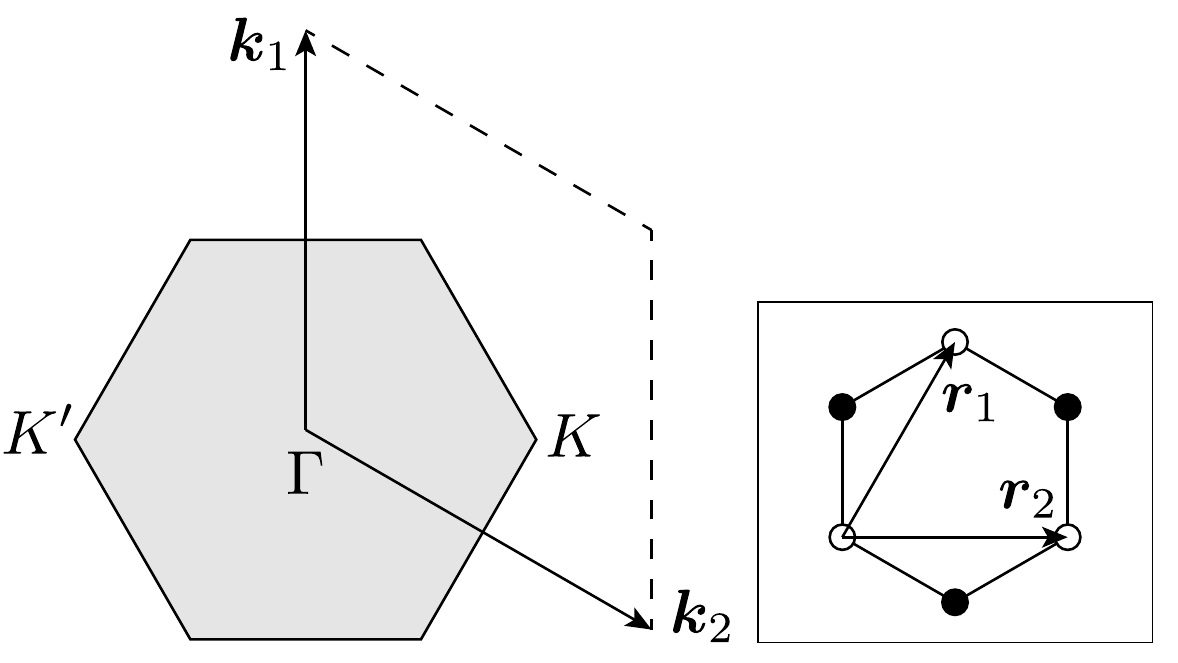}
	\caption{Brillouin zone (gray region) of the honeycomb lattice. $\vec{k}_1,\vec{k}_2$ are the translation basis of the momentum space, dual to the real space translation basis $\vec{r}_1,\vec{r}_2$ as shown in the inset figure. $K,K'$ are the two gapless points when $\lambda=0$. }
	\label{BrillouinZone}
\end{figure}

Let us now consider all possible symmetry-allowed terms. We first note that we only need to consider terms which anticommute with $\sigma_x \gamma_z$, $\sigma_y$, and $\sigma_z \gamma_z \tau_z$. The reason is that any term which commutes with either $\sigma_x \gamma_z$ or $\gamma_y$ can be removed by a gauge transformation and its main role would be moving the zero of the Dirac Hamiltonian (\ref{HD}) at $\lambda=0$ away from the point $k_x = k_y = 0$. In addition, any term which commutes with $\sigma_z \gamma_z \tau_z$ will move the gapless point in parameter space away from $\lambda=0$ without removing it.

Restricting ourselves to mass terms which anti-commute with $\sigma_x \gamma_z$, $\sigma_y$, and $\sigma_z \gamma_z \tau_z$ leaves us with the following possibilities: $\sigma_z \gamma_{0,z} \tau_{x,y}$ or $\sigma_x \gamma_{x,y} \tau_{x,y}$. All these mass terms has the form $\bm_\bk \cdot \btau \otimes \Lambda$ where $\bm_\bk$ is a vector in the $x-y$ plane and $\Lambda = \sigma_z \gamma_{0,z}$ or $\sigma_x \gamma_{x,y}$. We note that, imposing $\T$ symmetry would rule-out some of these mass terms. More specifically, $\T$ restricts $\Lambda$ to $\sigma_z$ or $\sigma_x \gamma_{x,y}$. We now consider the action of $\hat C_3 = \hat C_6^2$ given by $U_3 = e^{i \frac{\pi}{3} \sigma_z \gamma_z} e^{-i \frac{\pi}{3} \tau_z}$ on these terms. Since $\Lambda$ commutes with $\sigma_z \gamma_z$, $\hat C_3$ only acts on the $\tau$ part in each of the mass terms leading to
\begin{align}
\label{C3m}
\bm_{\bk} \cdot \btau &=  \bm_{O_3 \bk} \cdot e^{-i \frac{\pi}{3} \tau_z} \btau e^{i \frac{\pi}{3} \tau_z} = \bm_{O_3 \bk} \cdot (O_3^{-1} \btau) \nonumber \\
&= (O_3 \bm_{O_3 \bk}) \cdot \btau \quad \Rightarrow \quad \bm_{O_3 \bk} = O_3^T \bm_\bk.
\end{align}
This immediately implies that $\bm_\bk$ vanishes at $\bk = 0$ since $\bm_0 = O_3 \bm_0$ is only possible if $\bm_0 = 0$. As a result, we conclude that all possible mass terms which can be added to (\ref{HD}) vanish at $\bk = 0$ and will therefore not open a gap in the critical Hamiltonian $\lambda = 0$.

To obtain the non-interacting classification, we consider stacking $k$ copies of the Hamiltonian (\ref{HD}) on top of each other. We can now perform a very similar analysis to the one performed above by considering mass terms of the form $\bm_\bk \cdot \btau \otimes \Lambda \otimes \Gamma$ with $\Gamma$ a $k \times k$ hermitian matrix. Since rotation acts diagonally in the copies, we can derive the same condition (\ref{C3m}) for all the mass terms and deduce that they all vanish at $\bk = 0$. As a result, we conclude that, in the absence of interactions, the phases constructed by stacking several copies of the shift insulator are all distinct, leading to a $\Z$ classification. We remark that our classification relies on $U(1)$ charge conservation which will be always assumed in this work. 

\subsection{Edge theory}
\label{EdgeTheory}
Let us now investigate the edge theory for the model in (\ref{hk}). We start by considering periodic boundary conditions in the $y$ direction and open boundary conditions along the $x$ direction as illustrated in Fig.~\ref{Haldanepbc} and compute spectra numerically. We take $t=1,\lambda = 0.2$ and take the number of sites for each sublattice in each plaquette row to be $N=100$. 
\begin{figure}[h]
	\centering
	\includegraphics[width=0.8\linewidth]{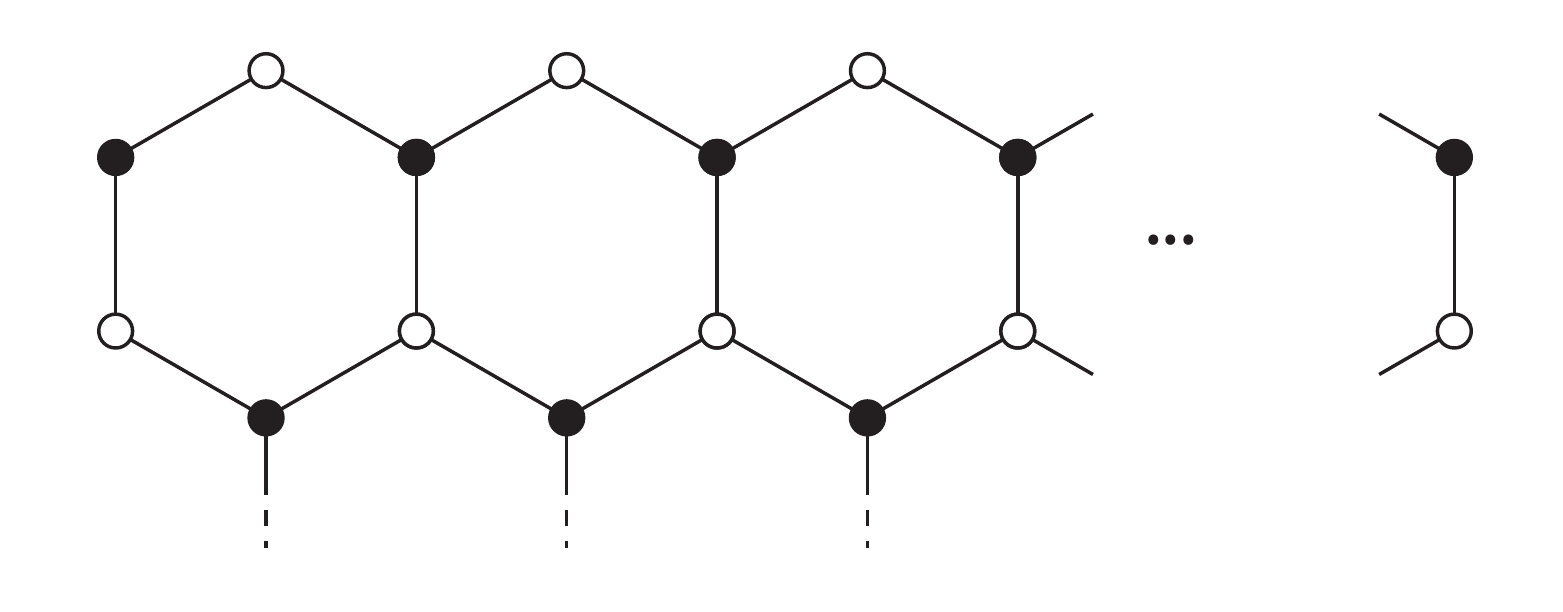}
	\caption{Periodic boundary condition in $y$ direction. }
	\label{Haldanepbc}
\end{figure}
We can see from the left panel of Fig.~\ref{JtermTurnOn} that there is a pair of gapless linearly dispersing modes. To investigate the stability of such gapless modes against symmetry-preserving perturbations, we add the following $C_6$-symmetric (and also $\T$-symmetric) hopping term to the tight-binding Hamiltonian
\begin{equation}
	H_J=J\sum_{\ex{i,j}}\mu_{ij}c^\dagger_{i,+}c_{j,-}+h.c., 
	\label{HJTerm}
\end{equation}
where $\mu_{ij}=\exp[-\rmi(2\varphi(\vec{r}_{ij})-\pi)]$ is a phase factor depending on the hopping direction from $j$ to $i$ and is illustrated in Fig.~\ref{Jterm}. 

\begin{figure}[h]
	\centering
	\includegraphics[width=0.4\linewidth]{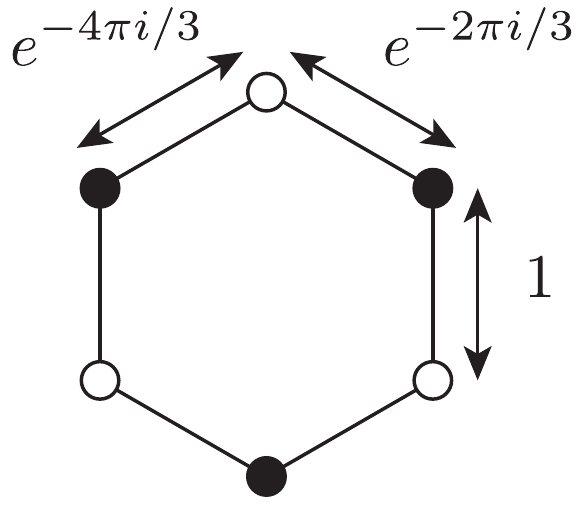}
	\caption{Values of $\mu_{ij}$ for different hopping directions from $j$ to $i$. }
	\label{Jterm}
\end{figure}

The spectrum computed numerically for $J=0.1$ is shown in the right panel of Fig.~\ref{JtermTurnOn} and we can clearly see a gap opening in the edge spectrum indicating the instability of the edge modes to the addition of symmetry-preserving perturbations. 

\begin{figure}[h]
	\centering
	\includegraphics[width=1\linewidth]{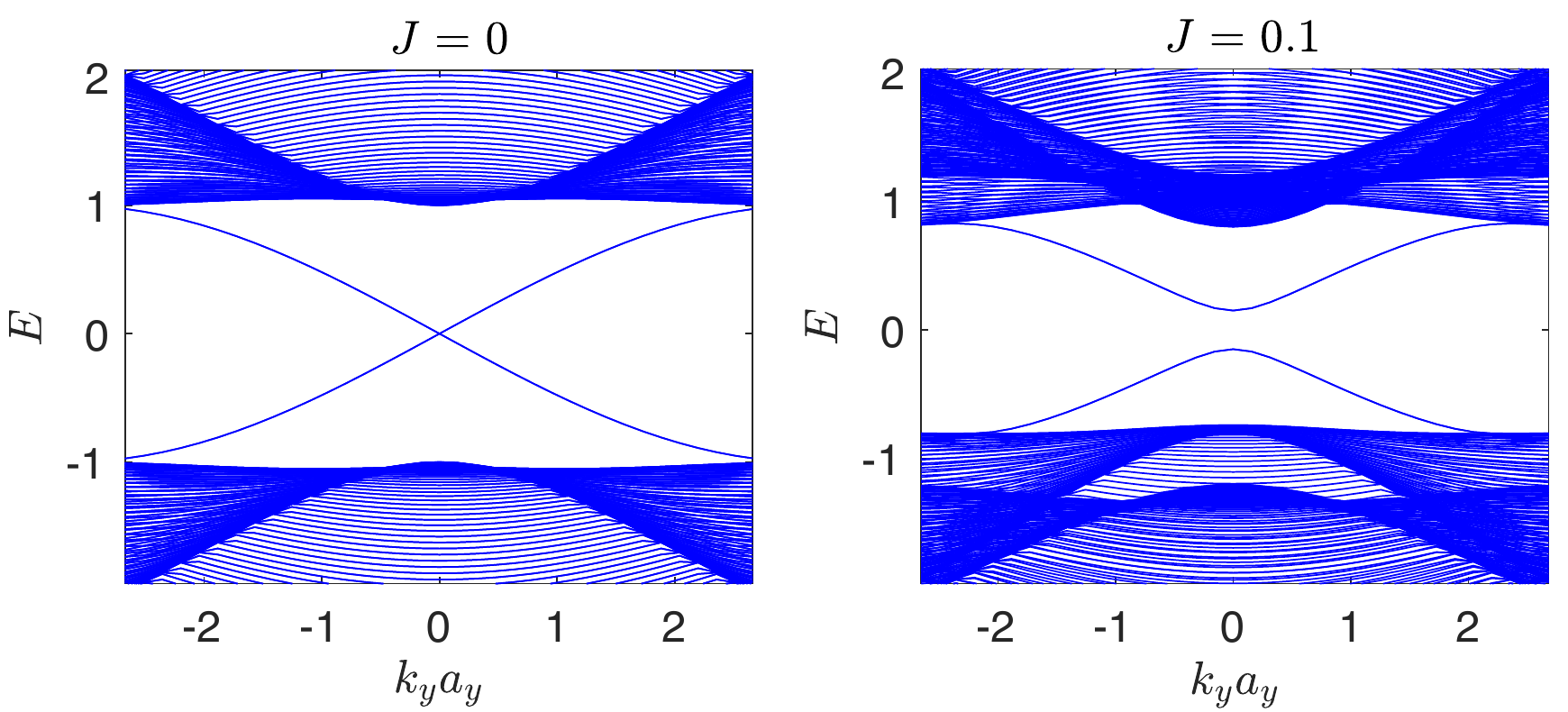}
	\caption{Comparison between $J=0$ (left) and $J=0.1$ (right) with $t=1,\lambda=0.2$ fixed. }
	\label{JtermTurnOn}
\end{figure}

The absence of edge modes can be verified by deriving the edge theory following Refs.~\onlinecite{Khalaf17, Khalaf18}. This is done by considering the low energy Hamiltonian (\ref{HD}) on some $C_6$ symmetry-compatible surface and denoting the in-plane normal to the surface by $\bn = (\cos \varphi, \sin \varphi,0)$. The edge is implemented by taking the mass parameter $m_0$ in (\ref{HD}) to change spatially $m_0 \rightarrow M(\br)$ such that $M(\br) = m_0$ deep inside the sample and $M(\br) = -m_0$ outside it. We decompose the momentum as $\bk = k_t \bt + k_n \bn$ with $\bt$ denoting the unit vector along the tangent to the edge $\bt = (-\sin \varphi, \cos \varphi, 0)$. Following the standard procedure, the details of which are relegated to Appendix \ref{Edge}, we get the edge Hamiltonian
\beq
\H_{\rm edge} = v_F (\bk \cdot \bt) (\tilde \bsigma \cdot \bt) = v_F k_t (\tilde \bsigma \cdot \bt),
\label{Hedge} 
\eeq
with $\tilde \bsigma = (\sigma_x \gamma_z, \sigma_y, \sigma_z \gamma_z)$. The edge Hamiltonian has the spectrum $\pm v_F |k_t|$ which is manifestly gapless.

In order to investigate the stability of the edge Hamiltonian, we follow the previous section and add to the bulk Hamiltonian the mass terms $\bm_\br \cdot \btau \otimes \Lambda$ with $\Lambda = \sigma_z \gamma_{0,z}$ or $\sigma_x \gamma_{x,y}$ in the absence of $\T$ symmetry and $\sigma_z$ or $\sigma_x \gamma_{x,y}$ in the presence of $\T$ symmetry.  Here, we write the mass term as a function of position $\bm_\br$ since the surface breaks translation symmetry. Repeating the argument leading to (\ref{C3m}), we deduce that the mass transforms under 3-fold as rotations $\bm_{O_3 \br} = O_3^T \bm_\br$. 

We now show that we can gap-out the edge in the presence or absence of time-reversal symmetry. We consider the mass term $\sigma_z \bm_\br \cdot \btau$ whose edge projection is (see Appendix \ref{Edge} for derivation)
\beq
 \gamma_z [(\bm \cdot \bn) (\bn \cdot \tilde \bsigma) - (\bm \cdot \bt) \tilde \sigma_z],
\eeq
which upon adding to the edge Hamiltonian leads to the spectrum $\pm \sqrt{v_F^2 k_t^2 + \bm_\br^2}$. This spectrum is gapped as long as $\bm_\br$ does not vanish which can be easily achieved, e.g. by choosing $\bm_\br = (\cos \varphi, \sin \varphi, 0)$. 

\section{Topological response}
\label{Response}
Given the absence of gapless edge modes, it is natural to ask whether we can find any topological signature of the shift insulator model. In this section, we will address this question from the perspective of topological response. 

Since the model consists of two Chern insulators with opposite Chern numbers and (atomic) orbital angular momenta, we would expect angular momentum pumping in the presence of a magnetic flux. Although its quantitative details can be very complicated (we also need to consider the contribution from lattice angular momenta), we will anticipate the existence of such an effect. Let us imagine describing this effect in some low energy theory by a mutual Chern-Simons term $-(S/2\pi)B\wedge\rmd A$ between the electromagnetic gauge field $A_\mu$ and an emergent gauge field $B_\mu$ associated with the rotation symmetry. We can then rewrite this term as $-(S/2\pi)A\wedge\rmd B$ which also implies a electromagnetic charge response to the flux of the field $\rmd B$. In the rest of this section, we will give precise definitions of these two complementary responses at the lattice level and then try to detect this mutual Chern-Simons term. We will mainly focus on the Haldane model with a general orbital angular momentum $L_z$. The response of the shift insulator model can then be obtained by simply adding the response of two copies of the Haldane model with opposite signs of $L_z$ and $\lambda$ (cf.~Eq.~\ref{Shift}). 
In Sec.~\ref{Disclination}, we interpret the $\rmd B$ flux as disclinations (rotation symmetry defects) and numerically measure the number of electrons trapped by these defects, then in Sec.~\ref{TorusMonopoleFlux} we put the system on a torus and measure the change of ground state angular momentum due to monopole fluxes. Analytical derivations of these topological responses are given in the rest subsections. 

\subsection{Disclination}
\label{Disclination}
\subsubsection{Constructing a disclination}
In the case of honeycomb lattice, a disclination is made by reducing or increasing the number of $1/6$ sectors around a rotation center. To be more precise, the Hamiltonian of a disclination system is constructed in the following way: we first construct the ordinary Hamiltonian based on the orbitals of a fan-shaped sector consisting of $(6-n_\Omega)$ number of $60\degree$ wedges, and then identify the open edges by $\ket{\varphi+2\pi(1-n_\Omega/6)}=\exp(-\rmi(2\pi\Phi+n_\Omega\pi L_z/3))\ket{\varphi}$, where $\varphi$ is the ordinary polar angle as shown in Fig.~\ref{MakingDisclinationSchematic}a, $\Phi$ is the magnetic flux through the central plaquette in unit of $\Phi_0=h/e$ and $L_z$ is the orbital angular momentum in unit of $\hbar$. Under this identification, overlapped couplings are required to be identical due to the $C_6$ symmetry and should be counted only once. In numerical calculations, we constructed the fan-shaped sector by gluing up a few $60\degree$ triangles, as illustrated in Fig.~\ref{MakingDisclinationSchematic}b. 
\begin{figure}[h]
	\centering
	\includegraphics[width=1\linewidth]{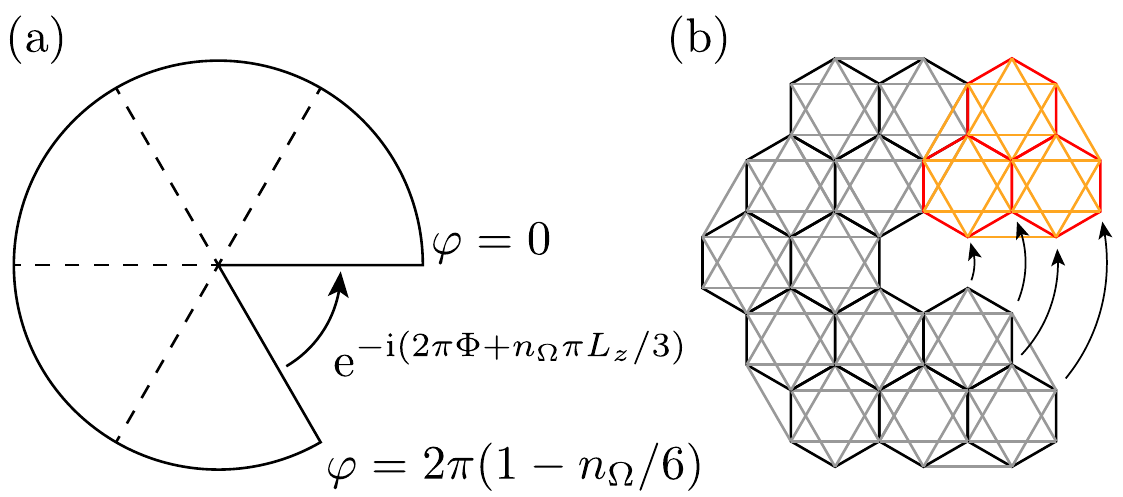}
	\caption{(a) Schematic illustration of the construction of disclinations. (b) An example of our setup in numerical calculations. }
	\label{MakingDisclinationSchematic}
\end{figure}

\subsubsection{Disclination charge}
We found that disclinations in the Haldane model and the shift insulator in general trap fractional electric charge, which is a signature of nontrivial topology in the presence of $C_6$ rotation symmetry. We will now explain this result in detail. 

Let us first consider the Haldane model. For an isolated disclination located at $\vec{r}=0$ and a Fermi energy within the bulk gap, we denote by $\Delta Q(r)$ the \emph{extra number of electrons} inside radius $r$, i.e. total number of electrons subtracted by the bulk half-filling background which is $1/2$ the number of lattice sites. $\Delta Q(r)$ should converge to a constant when $r$ is deep in the bulk, and this is the trapped charge that we are looking for. The disclination charge can also be defined for a more general $C_6$ symmetric lattice model in a similar way: we choose a simple loop that encloses the disclination at a large distance and does not cross through any lattice sites, count the number of electrons inside this loop, and then subtract off a “bulk background” which corresponds to the bulk distribution of the electron density; equivalently, we imagine there are some nucleus charges which exactly cancel the electron charge distribution in the bulk, and we are only counting the extra number of electrons near the disclination. 

We computed Haldane model disclination charge numerically with a half-filling Fermi energy on even total number of sites and a system size much larger than the correlation length. Results for a few interesting cases with $\Phi=0$ are listed in Table~\ref{HaldaneDeltaQ}. 
\begin{table}[h]
	\centering
	\subfloat[$t=1,\lambda=0.2$]{
	\begin{tabular}{c|ccc}
		\hline\hline
		\backslashbox{$L_z$}{$n_\Omega$} & 0 & 1 & 2 \\ [0.5ex] 
		\hline
		$0$ & $0$ & $1/4$  &$0$* \\
		$+1$ & $0$ & $1/12$ & $1/6$ \\
		$-1$ & $0$ & $5/12$ & $-1/6$ \\
		$1/2$ & $0$ & $1/6$ & $1/3$ \\
		$-1/2$ & $0$ & $1/3$ & $-1/3$\\
		\hline
	\end{tabular}
	}
	\quad
	\subfloat[$t=1,\lambda=-0.2$]{
	\begin{tabular}{c|ccc}
		\hline\hline
		\backslashbox{$L_z$}{$n_\Omega$} & 0 & 1 & 2 \\ [0.5ex] 
		\hline
		$0$ & $0$ & $1/4$  &$0$* \\
		$+1$ & $0$ & $5/12$ & $-1/6$ \\
		$-1$ & $0$ & $1/12$ & $1/6$ \\
		$1/2$ & $0$ & $1/3$ & $-1/3$ \\
		$-1/2$ & $0$ & $1/6$ & $1/3$ \\
		\hline
	\end{tabular}
	}
	\caption{Haldane model disclination charge with zero magnetic flux. Here we also include examples of half-integer $L_z$ since they are physically allowed. }
	\label{HaldaneDeltaQ}
\end{table}

These results all fit into the following formula which we will derive later, except for the two starred ones ($L_z=0$, $n_\Omega=2$ and $\lambda=\pm 0.2$): 
\begin{align}
\Delta Q(\text{bulk})&=-\sgn(\lambda)\left( \Phi+\frac{n_\Omega}{6}L_z \right)\nonumber \nonumber\\
&+\frac{1}{4}n_\Omega\sgn(t)+k\in[-\frac{1}{2},\frac{1}{2}], 
\label{HaldaneChargeExperienceFormula}
\end{align}
where $k$ is a proper integer such that $\Delta Q(\text{bulk})\in[-\frac{1}{2},\frac{1}{2}]$ is satisfied. Then how about the two exceptional cases where $L_z=0$ and $n_\Omega=2$? First note that the formula above actually has ambiguity in these cases: $\Delta Q$ can be $\pm 1/2$ and we do not know which is the correct choice. Physical reason for this ambiguity is the following. When $L_z=0$, $n_\Omega=2$ and in the thermodynamic limit, there are two degenerate $E=0$ eigenstates: one is a bound state near the disclination and the other one is an edge state at the outer boundary. In the half-filling case, it is ambiguous which of these two states should be occupied, therefore $\Delta Q$ also has an ambiguity. In a finite system, however, this degeneracy is slightly lifted and the true eigenstates have significant distribution near both the disclination apex and the outer boundary, therefore $\Delta Q$ lies in between $\pm 1/2$. Moreover, the system has a particle-hole symmetry when $L_z=0$ and $n_\Omega=2$ as we will see later, and this is the reason why $\Delta Q$ is exactly pinned to zero. The disclination charge of the Haldane model with $L_z=0$ and $t>0$ was previously obtained in Ref.~\onlinecite{Coh2013}. 

Similarly, we computed the disclination charge of the $\mathrm{C}_6$ shift insulator with half-filling, and the values of $\Delta Q(\text{bulk})$ for a few interesting cases are listed in Table~\ref{DeltaQC6}. These results are consistent with what we found for the Haldane model: up to an integer, the disclination charge of the $\mathrm{C}_6$ shift insulator is the sum of contributions from the two Haldane model components, namely the following, 
\begin{equation}
	\Delta Q(\text{bulk})=-\frac{1}{3}n_\Omega\sgn(\lambda)+\frac{1}{2}n_\Omega\sgn(t)+\text{integer}. 
	\label{DisclinationChargeShift}
\end{equation}
\begin{table}[h]
	\centering
	\begin{tabular}{c|c|ccc}
		\hline\hline
		$\Phi$ &\backslashbox{$\lambda$}{$n_\Omega$} & 0 & 1 & 2 \\ [0.5ex] 
		\hline
		\multirow{2}{*}{$0$} & $0.2$ & $0$ & $1/6$  &$1/3$ \\
		& $-0.2$ & $0$ & $5/6$ & $-1/3$ \\
		\hline
		\multirow{2}{*}{$0.24$} & $0.2$ & $0$ & $1/6$ & $-2/3$\\
		& $-0.2$ & $0$ & $5/6$ & $2/3$\\
		\hline
		\multirow{2}{*}{$0.25$} & $0.2$ & $0$ & $1/6$ & $-2/3$\\
		& $-0.2$ & $0$ & $-1/6$ & $2/3$\\
		\hline
	\end{tabular}
	\caption{Shift insulator disclination charge for $t=1$ and interlayer coupling $J=0.2$ (see Eq.~\ref{HJTerm}). For the $\Phi=0$ case, we also computed the charge with $J=-0.1,0,0.1$ and the results are the same. }
	\label{DeltaQC6}
\end{table}

We see from Table~\ref{DeltaQC6} that threading magnetic flux through the disclination hole may change the trapped charge by an integer. This may sound unexpected because our system is now fully gapped and there is no edge state going between the upper and lower bands. It turns out from numerics that, at half-filling, some edge state or disclination bound state near the upper band can be occupied, and therefore an integer jump of the disclination charge can happen when there is an energy crossover between edge and bound states. 

From these numerical results, we find that the disclination charge seems to always take nice fractional numbers when the magnetic flux $\Phi$ is turned off. This is in fact not a coincidence and the disclination charge has to satisfy certain quantization rules, as we now explain. To avoid analyzing the edge effect, let us imagine putting a few disclinations on a closed surface. There are many ways of doing so, as illustrated in Fig.~\ref{ClosedSurfaceProof}, in which the most familiar example might be the `buckyball' where we have twelve pentagon disclinations on a sphere. Suppose the disclinations are all far away from each other so that the disclination charge for each of them is well-defined. We then have the rule: 
\begin{align}
&\text{total disclination charge}+\text{bulk background} \nonumber\\
&=\text{total number of electrons}. 
\label{ClosedSurfaceSumRule}
\end{align}
The total number of electrons on a closed surface is obviously an integer. For all the examples in Fig.~\ref{ClosedSurfaceProof}a-c, the bulk background is also an integer. To see this, simply note that there are even number of triangular wedges, which is in fact true in general, and that any two triangular wedges must have an integer background charge since they can combine into a torus which is disclination free. With these observations in mind, we can now easily derive some quantization rules for the disclination charge from Eq.~\ref{ClosedSurfaceSumRule}. Using the construction in Fig.~\ref{ClosedSurfaceProof}b, we know that $\Delta Q$ for a square disclination ($n_\Omega=2$) must be an integer multiple of $1/6$. Applying this result to Fig.~\ref{ClosedSurfaceProof}c, we know that for a general $N$-gon disclination, $\Delta Q$ must be an integer multiple of $1/12$. It is interesting to note that these constraints are satisfied by the formula \eqref{HaldaneChargeExperienceFormula} if and only if $L_z$ takes integer or half-integer values. 
Many other rules can be similarly derived with different closed surface constructions. 

\begin{figure}
	\centering
	\includegraphics[width=0.9\linewidth]{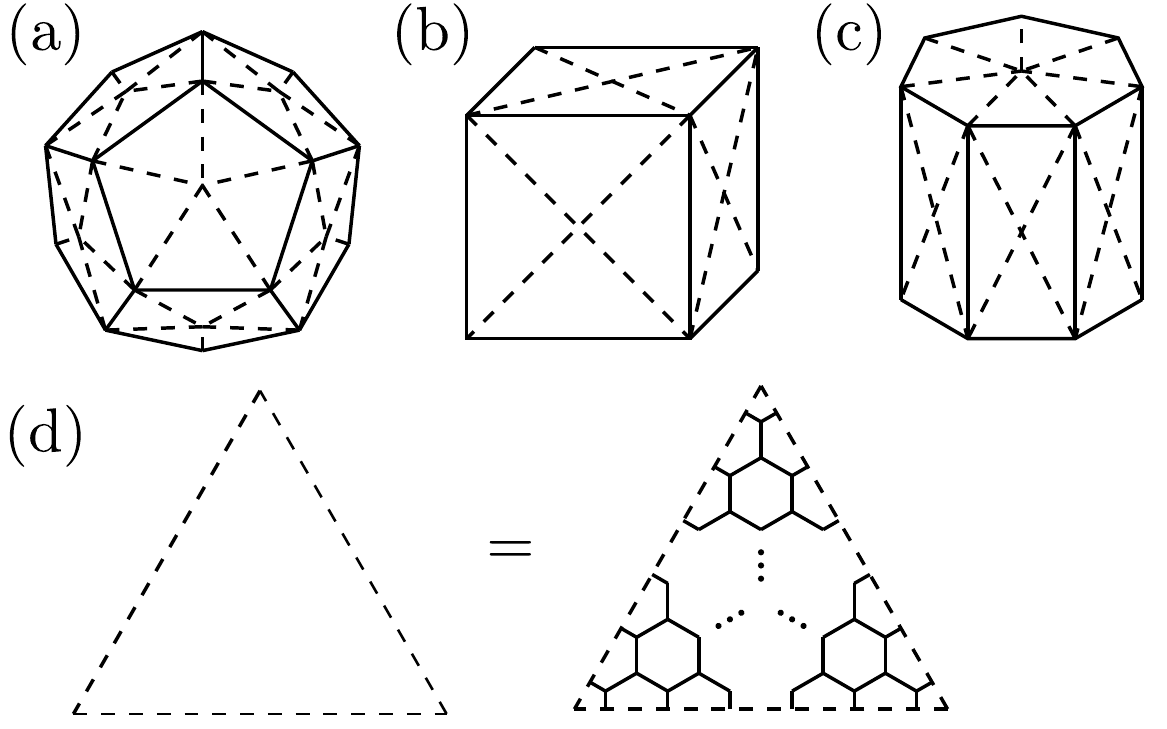}
	\caption{Quantization of the disclination charge. (a) Twelve pentagon disclinations on a sphere in the shape of a dodecahedron, e.g. the buckyball. (b) Six square disclinations on a sphere in the shape of a cube. (c) Two $N$-gons and $N$ squares on a sphere in the shape of a cylinder. 
	(d) Each of the small triangles in the above three examples contains many plaquettes. The triangle vertices coincide with plaquette centers and the triangle edges are perpendicular to plaquette edges. } In (a)-(c), disclinations are all located at face centers, and all corners are regular since there are six surrounding wedges. 
	\label{ClosedSurfaceProof}
\end{figure}

It is now easy to see that the disclination charge modulo integers is topologically stable: as we continuously tune the Hamiltonian along a symmetry-preserving path which does not close the bulk gap, the disclination charge can not continuously change since it is quantized. However, it is indeed possible that $\Delta Q$ jumps by an integer, because we may need to add or remove a few electrons to make sure the Fermi energy always lies in the bulk gap as we tune the Hamiltonian. 

Later in Section \ref{TopologicalAtomic}, we will show that the disclination charge \eqref{DisclinationChargeShift} of shift insulators can not be reproduced by any atomic insulators with the same number of filled bands. Together with the topological stability proved above, this implies that the shift insulator indeed has nontrivial topology (regarding atomic insulators as trivial) protected by the rotation symmetry. 

\subsection{Torus monopole flux}\label{TorusMonopoleFlux}
\subsubsection{Setup}\label{C6SymmetricGauge}
A $C_6$ symmetric torus (modulus $\tau=\rme^{\rmi\pi/3}$) is made by identifying the opposite edges of a regular hexagon. We choose a lattice orientation such that this hexagon (expanded torus) has parallel edges to the plaquettes. The edge length is taken to be $3N$ times that of a plaquette, where $N$ is a positive integer. In Fig.~\ref{ExpandedTorus}a, we show one specific example with $N=3$. 
\begin{figure}[h]
	\centering
	\includegraphics[width=1\linewidth]{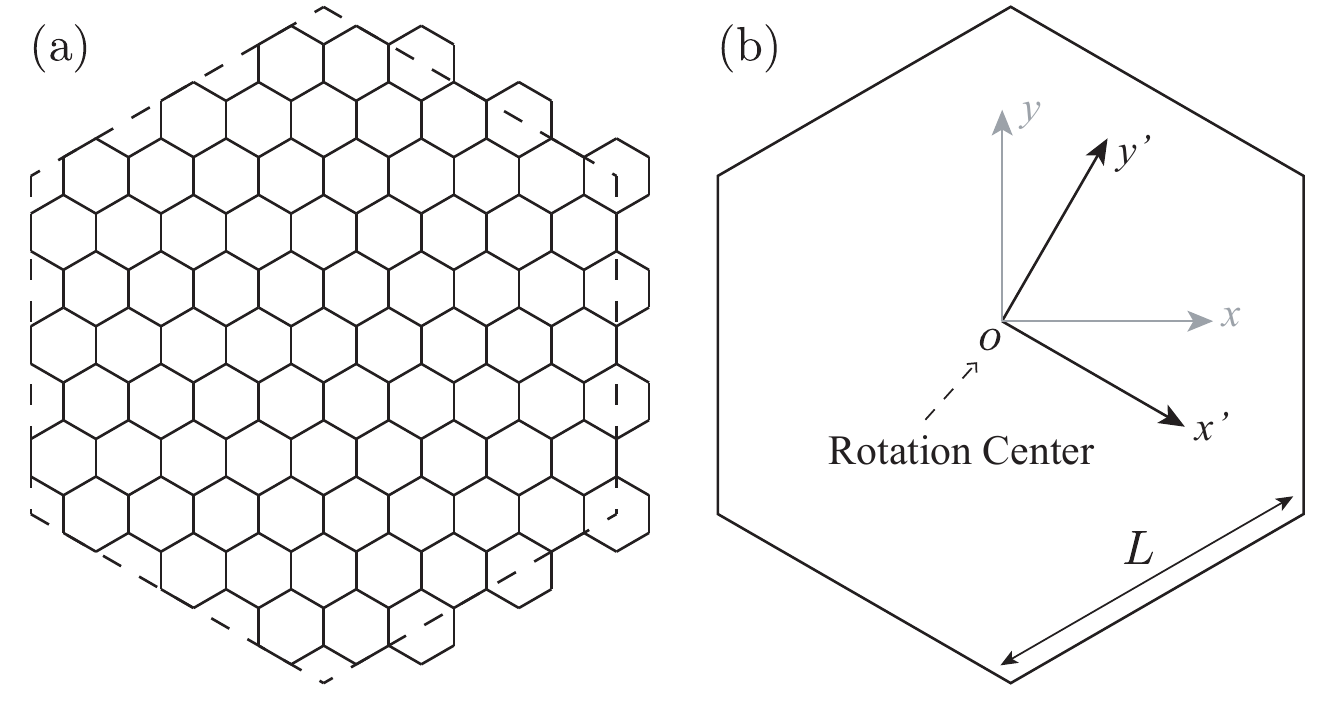}
	\caption{(a) An $N=3$ example of the expanded honeycomb lattice tori used in our calculation. (b) The $(x',y')$ coordinates on the expanded torus. }
	\label{ExpandedTorus}
\end{figure}

We require a $C_6$ rotation symmetry, it is therefore convenient to find a $C_6$ symmetric gauge potential for monopole fluxes. We consider a total magnetic flux $\Phi$ injected through one point on the torus and then spreads out evenly from the whole surface. Note that when the injected flux $\Phi$ happens to be an integer multiple of $\Phi_0$, this field configuration is equivalent to a uniformly distributed monopole field, since the injection of flux quanta has no effect on the lattice model given that it does not cross through any hopping path. It is however useful to keep in mind such a flux injection picture, because it provides us a continuous connection between different integer monopole charges $m\equiv \Phi/\Phi_0$. 
We choose this flux injection point as our $C_6$ rotation center. Using a hexagon-shaped expansion of the torus and a Cartesian coordinate system $(x',y')$ defined in Fig.~\ref{ExpandedTorus}b, we write down the following gauge potential $\vec{A}(\vec{r})$ ($\vec{r}\neq 0$) for this field configuration: 
\begin{align}
&A_{x'}=-\frac{3L}{2}\Lambda\left(y'-\sgn(y')\frac{\sqrt{3}L}{2}\right)\delta(x'), \\
&A_{y'}=\Lambda\left(x'-\sgn(x')\frac{3L}{4}\right),
\end{align}
where $\Lambda=2\Phi/(3\sqrt{3}L^2)$ and $L$ is the system size (see Fig.~\ref{ExpandedTorus}b). This gauge potential is not yet $C_6$ symmetric, but we can easily symmetrize it by rotating and taking average. 

We match the rotation center with a plaquette center, and the rotation generator $\hat C_6$ is defined as 
\begin{equation}
\hat C_6\ket{\vec{r}}=\rme^{-\rmi\pi L_z/3}\ket{O_6\vec{r}}, 
\end{equation}
which rotates wavefunctions counterclockwise by $\pi/3$. Note that with our specific choice of the lattice orientation, no regularization for the discontinuities in the gauge potential written down above is needed. 

\subsubsection{Angular momentum of the gapped ground state}
Now we present our numerical result on the rotation eigenvalue of the gapped torus ground state of Haldane model with monopole fluxes, i.e. when $\Phi/\Phi_0$ is an integer. 

As we change the injected magnetic flux $\Phi$ continuously, there are states transporting between the upper and lower bands. More specifically, when the monopole charge $m$ changes from $0$ to another integer $m_0$, $\sgn(\lambda)m_0$ number of states are transported from the lower band to the upper band. An example is shown in Fig.~\ref{StateTransport}. 
\begin{figure}[h]
	\centering
	\includegraphics[width=0.7\linewidth]{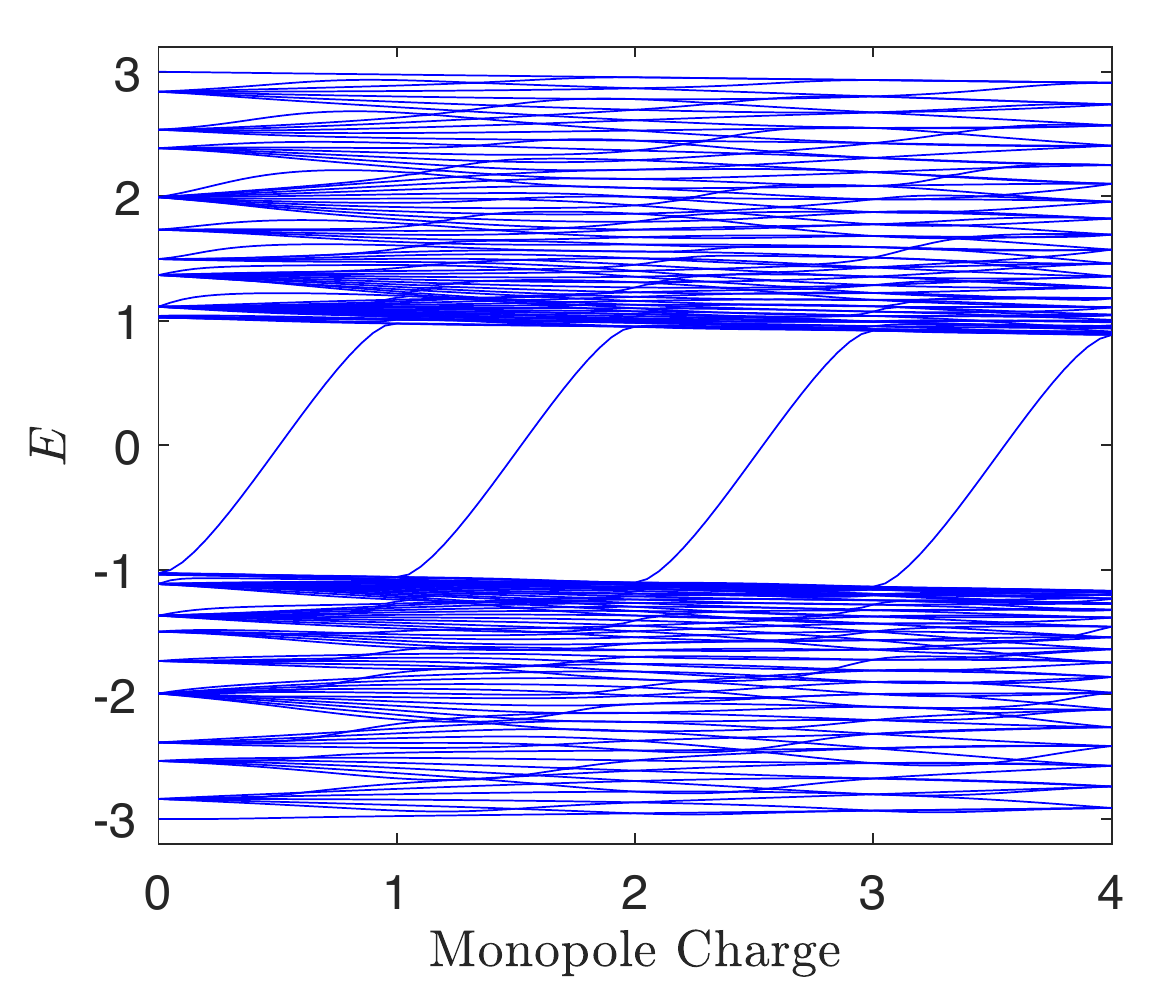}
	\caption{Torus energy spectrum of the Haldane model at $t=1,\lambda=0.2$ with respect to the monopole charge $\Phi/\Phi_0$. The system size is set to be $N=3$ (81 plaquettes). }
	\label{StateTransport}
\end{figure}

For a given integer monopole charge $m$, when there is an obvious band gap ($N$ is not too small and $|m|$ not too large), we define the gapped ground state by filling up all states in the lower band. We can then compute the eigenvalue of this state under the action of $\hat C_6$. We say the ground state has spin $s$ if this eigenvalue is $\exp(-\rmi s\pi/3)$, note that $s$ is an integer defined modulo $6$. 

\begin{table}[h]
	\centering
	\begin{tabular}{c|c|c|ccccccccccc}
		\hline\hline
		$t$ & $\lambda$ & \backslashbox{$N$}{$m$} & $0$ & $1$ & $-1$ & $2$ & $-2$ & $3$ & $-3$ & $4$ & $-4$ & $5$ & $-5$\\ [0.5ex] 
		\hline
		\multirow{4}{*}{$1$} & \multirow{2}{*}{$0.2$} & 2,4,6,8,10 &1 &2 &5 &2 &2 &1 &4 &5 &5 &2 &5 \\
		& & 3,5,7,9,11 &4 &5 &2 &5 &5 &4 &1 &2 &2 &5 &2\\
		\cline{2-14}
		& \multirow{2}{*}{$-0.2$} & 2,4,6,8,10 &5 &1 &4 &4 &4 &2 &5 &1 &1 &1 &4 \\
		& & 3,5,7,9,11 &2 &4 &1 &1 &1 &5 &2 &4 &4 &4 &1 \\
		\hline
		\multirow{2}{*}{$-1$} & $0.2$ & $2,\cdots,11$ &1 &5 &2 &2 &2 &4 &1 &5 &5 &5 &2 \\
		\cline{2-14}
		& $-0.2$ & $2,\cdots,11$ &5 &4 &1 &4 &4 &5 &2 &1 &1 &4 &1 \\
		\hline
	\end{tabular}
	\caption{Spin of the gapped torus ground state of Haldane model in a few cases. Orbital angular momentum $L_z$ is set to zero. }
	\label{GndStateSpin}
\end{table}
In Table~\ref{GndStateSpin}, we show our result of the ground state spin in a few interesting cases with $L_z=0$. These numbers all satisfy the following formula: 
\begin{align}
s(m)&=\frac{3m}{2}\sgn(t)+\left( 1-\frac{m^2}{2}\right)\sgn(\lambda)\nonumber \\
&+
\begin{cases}
3N &(t>0)\\
0 & (t<0)
\end{cases}
\mod{6}, 
\end{align}
from which we can compute the monopole induced spin change: 
\begin{equation}
\Delta s(m)\equiv s(m)-s(0)=\frac{3m}{2}\sgn(t)-\frac{m^2}{2}\sgn(\lambda). 
\end{equation}
We will later give derivations for this result. 

So far we restrict ourselves to the special case $L_z=0$, but it is not hard to generalize to a general nonzero $L_z$. The only difference is a phase factor in the definition of $\hat C_6$. From the state transporting phenomenon discussed previously, we immediately have
\begin{equation}
\Delta s(m)=\frac{3m}{2}\sgn(t)-\frac{m^2}{2}\sgn(\lambda)-mL_z\sgn(\lambda). 
\end{equation}

Similar to the disclination charge response, the spin pumping response of shift insulators are obtained by summing over the contributions from the two Haldane model components, and it can not be reproduced by any atomic insulators with the same number of filled bands as we will later show in Section \ref{TopologicalAtomic}. 

\subsection{Exact lattice theory approach to the topological responses}
In this subsection, we give an exact lattice theory approach to the topological responses considered above, focusing on the Haldane model. We start with some general properties of the model and then derive the formulas for disclination charge and monopole induced ground state spin change. 

\subsubsection{Symmetries and Dualities}\label{ParticleHoleDualitySection}
Let us first investigate useful symmetries and dualities in our disclination model. All results for the $n_\Omega=0$ disclination are also true for the ordinary Haldane model, although these two are slightly different (see Fig.~\ref{MakingDisclinationSchematic}b). 

Recall from Fig.~\ref{MakingDisclinationSchematic} that a disclination is made by first constructing the ordinary Hamiltonian based on the orbitals of a fan-shaped sector, and then identifying the open edges by $\ket{\varphi+2\pi(1-n_\Omega/6)}=\exp(-\rmi(2\pi\Phi+n_\Omega\pi L_z/3))\ket{\varphi}$. We will refer to this gauge choice as the branch cut gauge hereafter. 
When considering a Haldane model with all of its orbitals having the same angular momentum $L_z$, the effect of this angular momentum can be characterized by an effective magnetic flux $\Phi_\text{eff}=n_\Omega L_z/6$. We will therefore set $L_z=0$ throughout this subsection. 

The original $C_6$ rotation symmetry is generalizable to a $C_{6-n_\Omega}$ symmetry for general $n_\Omega$ and $\Phi$. In the branch cut gauge, this symmetry is generated by 
\begin{equation}
\hat C_{6-n_\Omega}: c_{(r,\varphi)}\mapsto \rme^{-2\pi\rmi\Phi/(6-n_\Omega)}c_{(r,\varphi+\pi/3)}, 
\label{GeneralizedRotSym}
\end{equation}
with the identification
\begin{equation}
c_{(r,\varphi+2\pi(1-n_\Omega/6))}=\rme^{2\rmi\pi\Phi}c_{(r,\varphi)}. 
\end{equation}
We will see later that the $C_6$ symmetry defined in Sec.~\ref{C6SymmetricGauge} with the symmetric gauge is equivalent to the definition here. 

There are two $\mathbb{Z}_2$ transformations which relate the four possible sign choices of $t$ and $\lambda$, and their generators are given by
\begin{align}
	&S: 
	\begin{pmatrix}
	c_A\\
	c_B
	\end{pmatrix}
	\mapsto
	\begin{pmatrix}
	-c_A\\
	c_B
	\end{pmatrix},\\
	&\mathcal{T}: c_{\vec{r}}\mapsto \mathcal{K}c_{\vec{r}}, 
\end{align}
with $\mathcal{K}$ being the complex conjugation operator. The sublattice pseudo-spin $S$ flips the sign of $t$, while the spinless time-reversal $\mathcal{T}$ flips the sign of $\lambda$ and maps the flux $\Phi$ to $-\Phi+\frac{1}{2}n_\Omega$ where $-\Phi$ comes from the complex conjugation of $\exp(-2\pi\rmi\Phi)$, and the $\frac{1}{2}n_\Omega$ term is due to the mismatch of sublattices at the glued edges when $n_\Omega$ is odd. We also define another anti-unitary transformation $\mathcal{P}\equiv\mathcal{T}S$ under which we have
\begin{align}
H(t,\lambda,n_\Omega,\Phi)&\stackrel{\mathcal{P}}{\cong} H(-t,-\lambda,n_\Omega,-\Phi+\frac{1}{2}n_\Omega)\nonumber\\
&=-H(t,\lambda,n_\Omega,-\Phi+\frac{1}{2}n_\Omega), 
\label{ParticleHoleDuality}
\end{align}
where $H(t,\lambda,n_\Omega,\Phi)$ denotes the Hamiltonian of a Haldane model disclination characterized by the parameters $t,\lambda,n_\Omega$ and $\Phi$. This is a particle-hole duality between Haldane model disclinations with the same hopping amplitudes $(t,\lambda)$ and the underlying lattice geometry. In particular, when $\Phi=\frac{1}{4}n_\Omega+\frac{1}{2}k$ with $k\in\mathbb{Z}$, this duality becomes a particle-hole symmetry \cite{Coh2013}. 

We note that instead of using the complex conjugation, one can as well apply a reflection to reverse the sign of $\lambda$. This operation, however, will change the geometry of the lattice unless it is reflection symmetric. 

\subsubsection{Exact zero energy bound states and edge states}\label{ZeroEnergyStatesSection}
We will take the following strategy to compute the Haldane model disclination charge: suppose we know the charge at some specific value of magnetic flux, and we verify that there is no degeneracy at the Fermi energy between disclination bound states and edge states, then the charge for almost all other values of magnetic flux can be derived from Hall conductance. The requirement of no degeneracy is important; it guarantees that the charge is continuous at this point and the Hall conductance argument is applicable. 
It is therefore important to first understand the properties of disclination bound states and edge states, which we now elaborate. More specifically, we will prove that for a $p_z$ Haldane model disclination with the magnetic flux 
\begin{equation}
\Phi=-\frac{1}{4}(6-n_\Omega)\sgn(t)\sgn(\lambda)\mod\mathbb{Z}, 
\end{equation}
the following statements hold: 
\begin{enumerate}
	\item There exists a zero energy bound state near the disclination. 
	\item This zero energy bound state level cannot have even degeneracy. 
	\item When the total number of states is even, there also exists at least one zero energy edge state at the outer boundary. 
\end{enumerate}

Let us start with $n_\Omega=0$. It is well-known that a magnetic $\pi$-flux in the ordinary Haldane model will trap a zero energy bound state, a consequence of quantum Hall effect and the particle-hole duality. The same thing must also happen for our disclination model with $n_\Omega=0$. We expect this state to be generically nondegenerate, so it must be self-dual under the particle-hole symmetry and also be an eigenstate of the rotation symmetry. These symmetry properties imply the following $\pi/3$ periodicity of the bound state wave function $\Psi(\vec{r})$: 
\begin{equation}
	\Psi(\varphi+\pi/3)=\pm\rmi\Psi(\varphi). 
\end{equation}
To determine which sign is taken, we need to look at the actual wave function. 
We checked from numerics that, when $t=1,\lambda=0.2,n_\Omega=0,\Phi=1/2$, there is indeed a nondegenerate zero energy bound state near the disclination, and its wave function on the central plaquette is shown in Fig.~\ref{E0BoundStateWaveFunction}. 
Together with the $\mathbb{Z}_2$ transformations relating different signs of the hopping terms, this tells us that the zero energy bound state for $|t|=1$ and $|\lambda|=0.2$ is nondegenerate and satisfies
\begin{equation}
\Psi(\varphi+\pi/3)=-\sgn(t)\sgn(\lambda)\rmi\Psi(\varphi). 
\label{PeriodicityCondition}
\end{equation}
So far we confined ourselves to specific values of $t$ and $\lambda$, but the results we obtained above are actually quite stable: the particle-hole symmetry guarantees that as long as the bulk is gapped, there is always at least one zero energy bound state satisfying the periodicity condition \eqref{PeriodicityCondition}. 
\begin{figure}[h]
	\centering
	\includegraphics[width=0.5\linewidth]{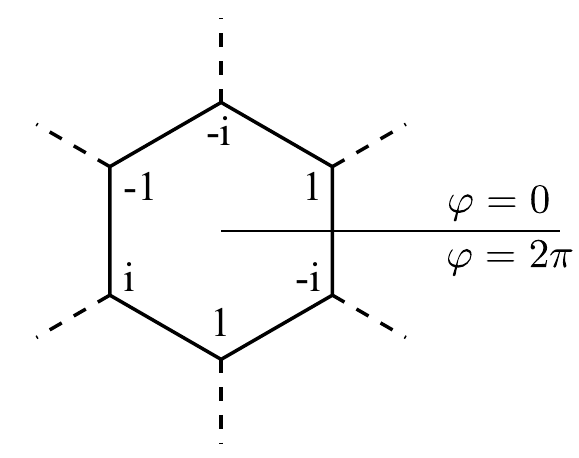}
	\caption{Wave function of the zero energy bound state of the Haldane model disclination with $n_\Omega=0,\Phi=1/2,t=1$ and $\lambda=0.2$. }
	\label{E0BoundStateWaveFunction}
\end{figure}

Now we are ready to look at general values of $n_\Omega$. Suppose we release the branch cut of the disclination discussed above ($n_\Omega=0,\Phi=1/2$) and extend it to a Riemann surface with infinite layers, then the zero energy bound state we just found can also be extended to this whole Riemann surface according to its $\pi/3$ periodicity condition. We can now cut off an arbitrary fan-shaped sector from this Riemann surface and glue it back into another disclination, we then find a zero-energy bound state for each $n_\Omega$! The phase jump across the new branch cut is $n_\Omega$-dependent, and we can compute it in the polar coordinate $\varphi$ as 
\begin{align}
&\Psi(\varphi+(6-n_\Omega)\pi/3)=(-\sgn(t)\sgn(\lambda)\rmi)^{6-n_\Omega}\Psi(\varphi)\nonumber\\
&=\exp(-\rmi\frac{\pi}{2}(6-n_\Omega)\sgn(t)\sgn(\lambda))\Psi(\varphi),  
\end{align}
which corresponds to a magnetic flux $\Phi=-\frac{1}{4}(6-n_\Omega)\sgn(t)\sgn(\lambda)\mod\mathbb{Z}$. We therefore obtain the desired conclusion. 

Now we would like to prove that, when the total number of states is even, with the same magnetic flux and a large system size, there also exists a zero energy edge state at the outer boundary. 
We first prove that, the zero energy bound state level we just found cannot have \emph{even} degeneracy. As before, we start with $|t|=1$ and $|\lambda|=0.2$. 
Suppose even degeneracy occurs. We extend the state $\ket{\Psi}$ we just found into an orthogonal basis of this degenerate bound state subspace, with all basis vectors being simultaneous eigenvectors of $H$ and $\hat C_{6-n_\Omega}$. Under the particle-hole symmetry $\mathcal{P}$, these vectors are mapped into another set of orthogonal eigenvectors of both $H$ and $\hat C_{6-n_\Omega}$ with the same set of eigenvalues, which is because 
\begin{align}
[\mathcal{P},H]&=0,\\
\hat C_{6-n_\Omega}\mathcal{P}&=-\rme^{4\pi\rmi\Phi/(6-n_\Omega)}\mathcal{P}\hat C_{6-n_\Omega}. 
\end{align}
Note that $\ket{\Psi}$ must be self dual under $\mathcal{P}$, since we know it is the only zero energy bound state when $n_\Omega=0$. This implies that another eigenvector, say $\ket{\Psi'}$, from this basis must have a self dual rotation eigenvalue, i.e.  $\mathcal{P}\ket{\Psi'}$ must have the same rotation eigenvalue as $\ket{\Psi'}$, though $\mathcal{P}\ket{\Psi'}$ need not equal to $\ket{\Psi'}$. Suppose $\hat C_{6-n_\Omega}\ket{\Psi'}=\lambda\ket{\Psi'}$, we have
\begin{equation}
\hat C_{6-n_\Omega}(\mathcal{P}\ket{\Psi'})=-\lambda^*\rme^{4\pi\rmi\Phi/(6-n_\Omega)}\mathcal{P}\ket{\Psi'}\equiv \lambda'\mathcal{P}\ket{\Psi'}. 
\end{equation}
Then $\lambda=\lambda'$ requires that $\lambda=\pm\rmi\rme^{2\pi\rmi\Phi/(6-n_\Omega)}$ (for some $n_\Omega$ not both of them are allowed), which is equivalent to the following $\pi/3$ periodicity condition (see Eq.~\ref{GeneralizedRotSym}): 
\begin{equation}
\Psi'(\varphi+\pi/3)=\mp\rmi\Psi'(\varphi). 
\end{equation}
We can now map $\Psi'$ to a zero energy eigenstate for $n_\Omega=0, \Phi=1/2$ using the same Riemann surface technique, and it is not hard to see that at $n_\Omega=0$, $\ket{\Psi'}$ is still orthogonal to $\ket{\Psi}$. We then have a contradiction: there should only be one zero energy bound state for $n_\Omega=0$. We have therefore proved that the zero energy bound state level can only have odd degeneracy when $|t|=1,|\lambda|=0.2$. This can be directly generalized to general $t$ and $\lambda$ as long as the bulk gap does not close, since this degeneracy can only change by even numbers due to the particle-hole symmetry. 

Now using the particle-hole symmetry again, we know that when the total number of states is even, there also exist at least one zero energy edge state at the outer boundary. 

\subsubsection{Computing the disclination charge}
Our strategy for computing the Haldane model disclination charge is already outlined at the beginning of Sec.~\ref{ZeroEnergyStatesSection}.
Let us consider an even total number of sites with half-filling as we did in the numerical calculation. From Sec.~\ref{ParticleHoleDualitySection}, we know that the system is particle-hole symmetric when $\Phi=\frac{1}{4}n_\Omega+\frac{1}{2}k$ with $k\in\mathbb{Z}$. We then know for sure that the charge $\Delta Q$ is zero in these cases. However, from Sec.~\ref{ZeroEnergyStatesSection}, we know that when
\begin{equation}
\Phi=\left( \frac{1}{4}n_\Omega -\frac{1}{2}\right)\sgn(t)\sgn(\lambda)\mod\mathbb{Z}, 
\label{DegeneracyFluxes}
\end{equation}
there is a degeneracy between bound states and edge states at the Fermi energy $E=0$, so $\Delta Q$ can be discontinuous here. 
We then consider the following fluxes: 
\begin{equation}
\Phi=\left( \frac{1}{4}n_\Omega -\frac{1}{2}\right)\sgn(t)\sgn(\lambda)\pm\frac{1}{2}+\text{integer}, 
\label{NonDegeneracyFluxes}
\end{equation}
which still have the particle-hole symmetry. Let us assume the edge states to be always non-degenerate near $E=0$, then quantum Hall effect implies that, as the magnetic flux continuously changes by $1$, the edge spectrum must shift up or down by one level, depending on the sign of the Hall conductance. The particle-hole symmetry then guarantees that when the flux is as in Eq.~\ref{NonDegeneracyFluxes}, the Fermi energy $E=0$ is in the middle of two edge state levels, and there is no discontinuity in $\Delta Q$. Using quantum Hall effect again, we then obtain
\begin{align}
\Delta Q&=-\sgn(\lambda)\left( \Phi-\frac{1}{4}n_\Omega\sgn(t)\sgn(\lambda) \right)\mod \mathbb{Z}\nonumber\\
&=-\sgn(\lambda)\Phi+\frac{1}{4}n_\Omega\sgn(t)\mod\mathbb{Z}, 
\end{align}
except when there are degeneracies between bound states and edge states at the Fermi energy. This agrees with our numerical calculation. If we further assume that $\Delta Q$ only has discontinuities at the fluxes in Eq.~\ref{DegeneracyFluxes}, we can get a more precise prediction: $\Delta Q=0$ when $\Phi=\left( \frac{1}{4}n_\Omega -\frac{1}{2}\right)\sgn(t)\sgn(\lambda)\mod\mathbb{Z}$ and otherwise
\begin{equation}
\Delta Q=-\sgn(\lambda)\Phi+\frac{1}{4}n_\Omega\sgn(t)+k\in(-\frac{1}{2},\frac{1}{2}).  
\end{equation}

\subsubsection{Derivation of the ground state spin change}\label{DerivingGndStateSpinChange}
Now we give a derivation of $\Delta s(m)$ in the large system size limit. When $N$ is large, the phenomenon of states being transported between the upper and lower bands is nothing but quantum Hall effect. If we zoom in to the neighborhood of the rotation center where the monopole flux $\Phi=m\Phi_0$ is injected, it looks like we are on an infinite plane with a magnetic flux $-\Phi$ threaded only through one plaquette. Then from quantum Hall effect, we know that when the flux $\Phi$ is continuously tuned, there will be bound states going between the upper and lower bands. The observed change of ground state spin is purely due to these transported bound states, so all we need is to understand their rotation symmetry property. 

Let us first choose a convenient gauge for theoretical analysis. In our original $C_6$ symmetric gauge $\vec{A}(\vec{r})$, hopping terms in the Hamiltonian are modified according to the Peierls substitution: 
\begin{equation}
\tau\ket{\vec{r}'}\bra{\vec{r}}\mapsto \tau\rme^{\rmi\theta}\ket{\vec{r}'}\bra{\vec{r}},~~~\text{with}~\theta=-2\pi\int_{\vec{r}}^{\vec{r}'}\vec{A}\cdot\rmd\vec{l}, 
\end{equation}
where the integral path is the straight line between $\vec{r}$ and $\vec{r}'$, the electron charge is taken to be $-e$ and the unit of magnetic flux is $\Phi_0=h/e$. We will now push all the gauge potential to a branch cut at the positive $x$-axis. We define a new orbital basis as 
\begin{equation}
\ket{\tilde{\vec{r}}}=\exp\left( -\rmi(2\pi)\int_{\vec{r}_0}^{\vec{r}}\vec{A}\cdot\rmd\vec{l} \right)\ket{\vec{r}}, 
\end{equation}
where $\vec{r}_0$ is an arbitrary fixed point and the integral path should not cross the branch cut. In this basis, all hopping terms in the Hamiltonian come back to their original values except for those crossing through the branch cut. If a hopping bond crosses the positive $x$-axis counterclockwise/clockwise, there is an additional phase factor $\exp(-2\rmi\pi(\pm\Phi))$. Including this factor is equivalent to imposing the angular periodicity condition $\ket{\widetilde{2\pi+\varphi}}=\exp(-2\rmi\pi\Phi)\ket{\tilde{\varphi}}$, which is exactly how we realized magnetic fluxes in the disclination calculation. The $C_6$ rotation generator $\hat C_6$ acts on this basis as 
\begin{equation}
\hat C_6\ket{\tilde{\varphi}}=\rme^{\rmi\pi\Phi/3}\widetilde{\ket{\varphi+\pi/3}}, 
\label{TorusTildeBasisRotationRep}
\end{equation} 
which is also the same as what we used before. 

Consider a monopole charge $m=m_0+1/2$ where $m_0$ is an integer. Near the rotation center, the Hamiltonian can only see a $\pi$-flux, therefore we know that there is a zero energy bound state $\ket{\Psi}$ whose wave function $\tilde\Psi(\vec{r})$ in the new orbital basis has the following periodicity: 
\begin{equation}
\tilde\Psi(\varphi+\pi/3)=-\sgn(t)\sgn(\lambda)\rmi\tilde\Psi(\varphi). 
\label{TorusBoundStateRotationProperty}
\end{equation} 
All other bound states, if any, come in dual pairs under the particle-hole symmetry $\mathcal{P}$. The rotation eigenvalue of this state $\ket{\Psi}$ is precisely responsible\footnote{This statement is definitely true if all bound state levels move monotonically as we observed in numerics, but one can in fact prove it even without this assumption using the particle-hole duality and its commutation relation with the rotation symmetry. We will not go into such boring details here. } for the ground state spin change as we tune the monopole charge from $m_0$ to $m_0+1$. From Eq.~\ref{TorusTildeBasisRotationRep} and \ref{TorusBoundStateRotationProperty}, we have 
\begin{align}
\hat C_6\ket{\Psi}&=\rmi\sgn(t)\sgn(\lambda)\rme^{\rmi\pi\Phi/3}\ket{\Psi}. 
\end{align}
Putting in $\Phi=-m_0-1/2$, we find the spin of this state: 
\begin{equation}
s_{\ket{\Psi}}(m_0)=m_0+\frac{1}{2}(1-3\sgn(t)\sgn(\lambda)), 
\end{equation} 
defined by $\hat C_6\ket{\Psi}=\exp(-\rmi s_{\ket{\Psi}}\pi/3)\ket{\Psi}$. We can now derive the ground state spin change: 
\begin{align}
s(m+1)-s(m)&=-\sgn(\lambda) s_{\ket{\Psi}}(m)\\
\Rightarrow\Delta s(m)&=-\frac{m^2}{2}\sgn(\lambda)+\frac{3m}{2}\sgn(t), 
\end{align}
where $m$ is an integer. This coincides with our previous empirical formula. 
\subsection{Continuum theory approach}\label{ResponsesContinuumApproach}
Both types of topological responses of the Haldane model can also be computed with a continuum theory approach. Since the techniques we used are already developed in Ref.~\onlinecite{graphenecone,haldanemodelcone}, we will only state our result here and leave the details to Appendix \ref{HaldaneContinuumTheory}. 

We found that a doubled Haldane model disclination system splits into two sectors, each being equivalent to a Haldane model with the same $t,\lambda$ but \emph{without} disclination. The two sectors have effective magnetic fluxes
\begin{equation}
	\Phi_\pm=\Phi+\frac{1}{6}n_\Omega L_z\pm\frac{1}{4}n_\Omega\gamma
\end{equation}
with $\gamma\equiv\sgn(t)\sgn(\lambda)$, and their boundary conditions represent infinite mass Haldane models in the disclination hole with the \emph{same} and \emph{opposite} signs of mass, respectively. We can now compute the charge accumulation near the disclination hole boundary: only the sector with $\Phi_-$ contributes and we have
\begin{align}
	\text{charge}&=\frac{1}{2}\times 2\times(-\sgn(\lambda))\Phi_-+\text{integer}\nonumber\\
	&=-\sgn(\lambda)\left(\Phi+\frac{1}{6}n_\Omega L_z \right)+\frac{1}{4}n_\Omega\sgn(t)\nonumber\\
	&+\text{integer}, 
\end{align}
which correctly reproduces the disclination charge formula. 

We also computed the wave function of the zero energy disclination bound state at $n_\Omega=0,\Phi=1/2$ and found that its rotation property is the same as the result in Sec.~\ref{ZeroEnergyStatesSection}. This leads to the formula for the monopole induced ground state spin change. 

\subsection{A Chern-Simons theory description}
In this subsection, we present an effective Chern-Simons theory  for the $C_6$ symmetric Haldane model, and we will see that it unifies our previous results on disclinations and torus monopole fluxes. 

The $C_6$ symmetric Haldane model has a charge $U(1)$ symmetry coupled to the electromagnetic gauge potential $A\equiv A_\mu\rmd x^\mu$ and discrete lattice symmetries, namely translations and the six-fold rotation. We assume that, in a low energy continuum theory, the $C_6$ rotation becomes an emergent $U(1)$ symmetry which can be coupled to a probe gauge field $B\equiv B_\mu\rmd x^\mu$. We then write down the following Chern-Simons Lagrangian: 
\begin{equation}
\mathcal{L}_{CS}~d^3x=-\left(\frac{C}{4\pi}A\wedge \rmd A+\frac{S}{2\pi} A\wedge \rmd B\right). 
\label{LCS}
\end{equation}
The first term is nothing but the usual quantum Hall effect with $C=-\sgn(\lambda)$ being the Chern number, and the second term characterizes the coupling of $A$ and $B$ fields. The unit for magnetic flux is taken to be $\hbar/e=\Phi_0/(2\pi)$ here. 

The above effective action is the result after integrating out all the dynamical degrees of freedom. In the initial Lagrangian, the electron number current $J_A^\mu$ is coupled to $A_\mu$ via the term $-qJ^\mu_A A_\mu$, where we have taken the $(+,-,-)$ metric convention and $q=-1$ is the electron charge in the unit of $e$. This implies the equation 
\begin{equation}
J_A^\mu=\frac{\delta\mathcal{L}_{CS}}{\delta A_\mu}=-\left(\frac{C}{2\pi}\epsilon^{\mu\lambda\nu}\partial_\lambda A_\nu+\frac{S}{2\pi}\epsilon^{\mu\lambda\nu}\partial_\lambda B_\nu\right). 
\end{equation}
The electron number $N_e$ in a spatial region $\Sigma$ is thus given by 
\begin{equation}
N_e\equiv\int_\Sigma\rmd^2 xJ_A^0=C\frac{\Phi_A}{2\pi}+S\frac{\Phi_B}{2\pi},
\end{equation}
where $\Phi_A$ and $\Phi_B$ are fluxes for $A$ and $B$ fields, respectively. In the case of disclination, $N_e$ is nothing but $\Delta Q$ and we take $\Phi_B=\pi n_\Omega/3$, we then find that
\begin{equation}
S=-\sgn(\lambda)L_z+\frac{3}{2}\sgn(t). 
\end{equation}
Now consider the ground state spin change due to torus monopole fluxes. Suppose a current $J^\mu_B$ is coupled to $B_\mu$ via the term $J^\mu_B B_\mu$, we then similarly have
\begin{equation}
\int_{T^2}\rmd^2x J_B^0=Sm, 
\end{equation}
where $m$ is the monopole charge. This correctly reproduces the linear terms in $\Delta s(m)$, but the other quadratic term $Cm^2/2$ is missed. 

To understand where this quadratic term comes from, we consider a slightly different situation: instead of considering a monopole flux uniformly distributed on the whole torus, we confine the flux into a tube small compared to the system size, and then thread the flux tube symmetrically around the rotation center. Indeed, we expect our effective model should only apply to the case of tightly confined magnetic flux, because in order to promote the $C_6$ rotation symmetry to a continuous one, we have to zoom in and forget about the global geometry of the torus. Now suppose the flux $\Phi=2m\pi$ is uniformly distributed in a circular disk, and the trapped electric charge $-Cm$ due to the quantum Hall effect is also uniformly distributed in the disk, then using Gauss's law, one can easily compute the angular momentum of the electromagnetic field as (in the SI units): 
\begin{align}
	L_z^\text{EM}&=\int_{0}^{R}\rmd \rho\int\rmd z~2\pi\rho^2\left( -\varepsilon_0 E_\rho B_z \right)=\frac{Cm^2}{2}\hbar, 
\end{align}
where $R$ is the disk radius and $B_z=\Phi/(\pi R^2)$ is the usual magnetic field (not the new $U(1)$ gauge field). This is exactly what we need. 

The mutual Chern-Simons term in Eq.~\ref{LCS} was in fact already studied in Ref.~\onlinecite{Wen92}. By comparing to their result, the probe gauge field $B$ should be identified with the connection 1-form $\omega$ of (the tangent bundle of) the space manifold. To see this more clearly, consider putting a few disclinations on a closed manifold, then calculating the Euler characteristic gives us the following relation: 
\begin{align}
	\sum_n\frac{1}{12}nF_n=1-g, 
\end{align}
where $F_n$ is the number of $(6-n)$-gon disclinations on this manifold and $g$ is the genus. For example, we can put 12 pentagon disclinations on a sphere, which corresponds to $F_1=12$,  $F_n=0$ for all other $n$, and $g=0$. By the Gauss-Bonnet theorem, the number of curvature quanta $(2\pi)^{-1}\int\rmd\omega$ on the manifold is $2(1-g)$, therefore each $(6-n)$-gon disclination corresponds to $n/6$ flux quantum, the same flux of the $B$ field. 

\subsection{Effect of a larger hole}\label{LargerHole}
Let us now briefly discuss the effect of a larger hole at the rotation center of disclination and torus geometries for the Haldane model, based on the lattice theory we developed previously. 

The key to solving the disclination or torus monopole flux problem is the bound state properties. Thus let us first answer the question: what will happen to the bound states if we remove the central plaquette of a disclination? Recall that for $n_\Omega=0$, $\Phi=1/2$ and a minimal disclination hole (the central plaquette is not removed), there exists a zero energy bound state $\ket{\Psi}$ whose wave function satisfies the following $\pi/3$ periodicity:  
\begin{equation}
\Psi(\varphi+\pi/3)=-\sgn(t)\sgn(\lambda)\rmi\Psi(\varphi), 
\end{equation}
where we have used the branch cut gauge. All other zero energy bound states, if any, must come in particle-hole dual pairs. Now let us continuously decouple the central plaquette from the rest of the system in a way which preserves the particle-hole symmetry. It is easy to see that, on the decoupled plaquette, there are two zero energy states $\ket{\Psi_{0,\pm}}$ satisfying
\begin{equation}
\Psi_{0,\pm}(\varphi+\pi/3)=\pm\rmi\Psi_{0,\pm}(\varphi). 
\end{equation}
We then immediately know that, in the rest system, there exists a zero energy bound state $\ket{\Psi_r}$, such that 
\begin{equation}
\Psi_r(\varphi+\pi/3)=\sgn(t)\sgn(\lambda)\rmi\Psi_r(\varphi), 
\end{equation}
opposite to that of $\ket{\Psi}$, and again all other zero energy bound states come in dual pairs. With this result, and using the strategy in Sec.~\ref{DerivingGndStateSpinChange}, one can show that the new formula for the ground state spin change due to torus monopole fluxes becomes
\begin{equation}
\Delta s_r(m)=-\frac{3m}{2}\sgn(t)-\frac{m^2}{2}\sgn(\lambda)-mL_z\sgn(\lambda), 
\end{equation}
i.e. the sign of the $\sgn(t)$ term is reversed. Does the same thing happens for the disclination charge? Note that when we are continuously decoupling the central plaquette, the disclination charge does not change up to an integer, since it is quantized. Now if we remove the decoupled plaquette, and redefine the disclination charge by also dropping the half-filling background on this removed plaquette, then the remained disclination charge becomes $\Delta Q+(6-n_\Omega)/2+\text{integer}$, where we used the fact that the removed plaquette can only takes away integer number of electrons. We then obtain the new disclination charge formula: 
\begin{equation}
\Delta Q_r=-\sgn(\lambda)\left( \Phi+\frac{n_\Omega}{6}L_z \right)-\frac{1}{4}n_\Omega\sgn(t)+\text{integer}. 
\end{equation}
Similarly, the $\sgn(t)$ term gets its sign reversed, consistent with our expectation that $\Delta Q_r$ up to an integer is proportional to the linear part of $\Delta s_r$. 

\section{Wannier representation}\label{WannierRep}
\label{Wannier}
Having investigated the topological response of the shift insulator model in detail in the previous section, we now move to the question of Wannier representability. Recall that a Wannier basis is defined as a basis of localized states that span the same space as the occupied Bloch bands \cite{Marzari12}. They can be obtained from the Fourier transform of the Bloch-like states $|\psi_{n,\bk} \rangle$, which are eigenstates of the flattened Hamiltonian $Q_\bk  = \H_\bk / \sqrt{\H_\bk^2}$, using the expression
\beq
\label{W}
| n,\bR \rangle = \frac{A}{(2\pi)^2} \int_{\rm BZ} d\bk \, e^{-i \bk \cdot \bR} | \psi_{n,\bk} \rangle,
\eeq
where $A$ is the unit cell area and $n$ is a (filled) band index. The Bloch-like states $| \psi_{n,\bk} \rangle$ are {\it not} eigenstates of the Hamiltonian $\H_\bk$, but they span the subspace of filled states. The states $| n,\bR \rangle$ are guaranteed to be exponentially localized around lattice site $\bR$ provided that the Bloch-like states $| \psi_{n,\bk} \rangle$ are smooth and periodic in $\bk$ under reciprocal lattice translations $| \psi_{n,\bk + \bG} \rangle = | \psi_{n,\bk} \rangle$ \cite{Brouder07}(We note that we are using here the periodic Hamiltonian $\H_\bk$ rather than $h_\bk$ for which the periodicity of Bloch states has a different form). 

We note that there is a large gauge freedom in defining the Wannier basis since we can always perform the gauge transformation
\beq
\label{V}
|\psi_{n,\bk}\rangle \rightarrow |\tilde \psi_{n,\bk}\rangle  = \sum_m u_{mn}(\bk) |\psi_{m,\bk} \rangle,
\eeq
with $u(\bk)$ being an $N \times N$ $\bk$-dependent unitary matrix ($N$ here is the number of occupied bands). In a Chern insulator, it is generally impossible to choose a smooth periodic gauge \cite{Thonhauser06} leading to an obstruction to constructing a Wannier basis. When the total Chern number vanishes, e.g. in the presence of time-reversal symmetry, it is always possible to find a smooth periodic gauge and as a result to construct localized Wannier functions \cite{Brouder07}. 

The standard procedure to construct Wannier states uses the so-called projection method \cite{Marzari97,Soluyanov11, Marzari12}. The procedure starts with a trial basis of states $|\tau_{n,\bR} \rangle$, $n = 1,\dots,N$ localized at lattice site $\bR$, which can be transformed to $\bk$-space as
\beq
|\tau_{n,\bk} \rangle = \frac{1}{\sqrt{N}} \sum_\bR e^{i \bk \cdot \bR} |\tau_{n,\bR} \rangle.
\eeq
The next step is to project these states onto the occupied bands by defining the states
\beq
|\chi_{n,\bk} \rangle = P_\bk |\tau_{n,\bk} \rangle.
\eeq
The basis $|\chi_{n,\bk} \rangle$ is not necessarily orthonormal, thus the Löwdin orthogonalization procedure is used to get an orthonormal basis out of it. This is done by defining the overlap matrix
\beq
S_{mn}(\bk) = \langle \chi_{m,\bk} | \chi_{n,\bk} \rangle.
\eeq
If this matrix is invertible for every $\bk$, we can define the orthonormal basis
\beq
|\psi_{n,\bk}\rangle = \sum_m [S(\bk)^{-1/2}]_{mn} |\chi_{m,\bk} \rangle,
\eeq
which will be smooth and periodic in $\bk$ and can thus be used to construct the Wannier states using (\ref{W}). If the Chern number does not vanish, there is no choice of the trial basis $|\tau_{n,\bk} \rangle $ for which the overlap matrix is invertible at every $\bk$. This implies that the projection method will fail for any choice of trial basis which encodes the topological obstruction in finding localized Wannier states.

For a symmetry-protected topological insulator, localized Wannier states can always be constructed (due to vanishing Chern number) but they will necessarily break the protecting symmetry. For example, we can only find localized Wannier states for the 2D time-reversal invariant topological insulator if the trial basis used in the projection method is not time-reversal symmetric \cite{Soluyanov11}. The resulting Wannier states centered at a given site will not be Kramers partners thereby breaking time-reversal symmetry. We note here that the full set of Wannier states is still time-reversal symmetric (this follows from the fact that this set spans the manifestly symmetric space of filled bands) which means that time-reversal acts non-locally on the set of Wannier states. 

The same consideration applies for TCIs where the existence of an obstruction to finding a symmetric Wannier basis can be taken as the defining feature of a topologically non-trivial band insulator \cite{Po17, Bradlyn17}. Within this approach, a band insulator is trivial if the projector onto the filled bands can be symmetrically deformed to the projector onto a set of symmetric localized orbitals i.e. if it is adiabatically deformable to an atomic insulator. In this case, these localized orbitals can be chosen as the trial basis $|\tau_{n,\bk} \rangle $ for the projection procedure explained above, yielding symmetric localized Wannier states. Here, as in the case with time-reversal symmetry, we have to restrict ourselves to atomic insulators where the symmetry acts naturally. The natural action action of crystalline symmetry group on atomic orbitals is specified by a Wyckoff position $x$ and an irrep for the little group $g_x$ \cite{Wyckoff}. Recall that a Wyckoff position $x$ denotes a minimal set of spatial positions that is closed under the action of the crystalline symmetries with the little group $g_x$ denoting the group of symmetries leaving every point in the position $x$ invariant. Using such approach, all possible symmetric atomic orbitals can be enumerated and the question of the existence of a symmetric Wannier representation boils down to whether the projector onto filled bands can be deformed onto a projector of localized orbitals contained in this set \cite{Po17, Bradlyn17}.

\subsection{Wannier obstruction}
\label{Obstruction}
 We now ask the question whether the shift insulator model introduced in Sec.~\ref{Model} admits a symmetric Wannier representation. The symmetry group of the model is the wallpaper group $p6$, for which there are four Wyckoff positions illustrated schematically in Fig.~\ref{Wyckoff}. The first one, denoted by $a$, corresponds to the hexagon center which is invariant under $g_a = C_6$ and thus have Wyckoff multiplicity of 1. The second position, denoted by $b$, corresponds to the positions of the A and B sublattices which are invariant under $g_b = C_3$ but transform into each other under $C_2$ and $C_6$ thus having a Wyckoff multiplicity of 2. The third position, denoted by $c$, corresponds to the edges of the hexagon (which form a Kagome lattice) which are invariant under $g_c = C_2$ but transform into each other under $C_3$ and $C_6$ leading to a multiplicity of 3. The last position consists of any generic point and its orbit under the action of $C_6$ (with multiplicity 6). 
 
 \begin{figure}[h]
\center
\includegraphics[width=0.5\columnwidth]{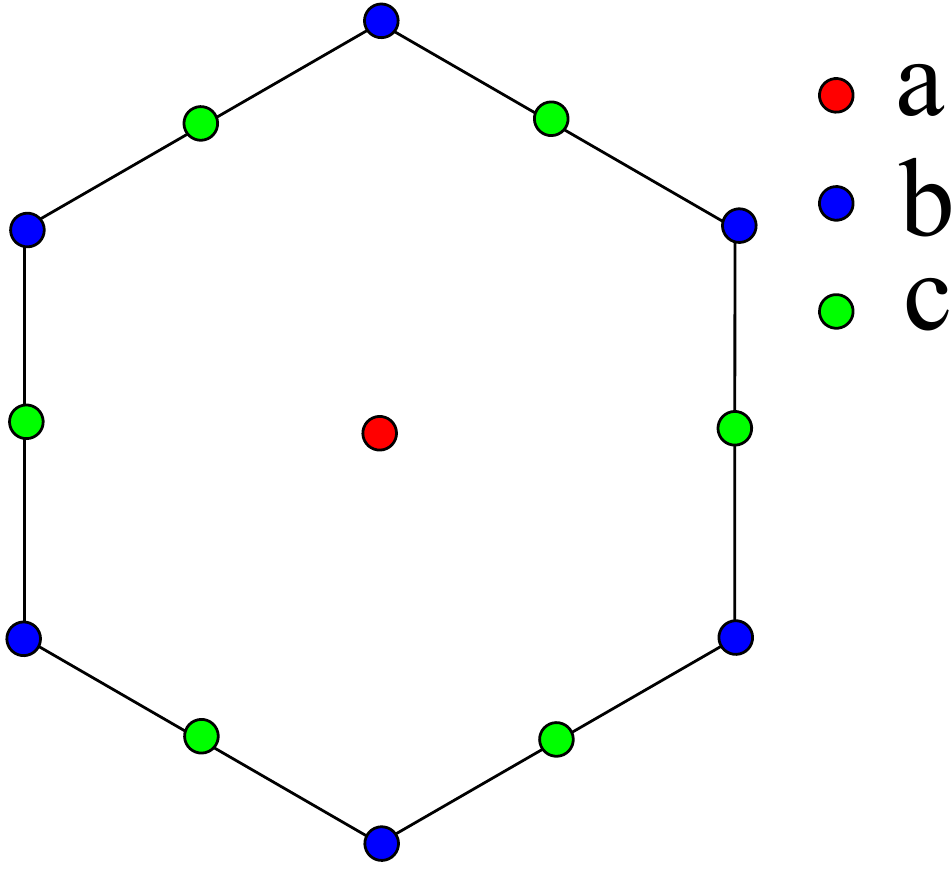}
\caption{Illustration of the Wyckoff positions $a$, $b$, and $c$ for wallpaper group $p6$ corresponding to the hexagon center, corner and edge, respectively.}
\label{Wyckoff}
\end{figure}

A necessary, but not sufficient, condition for the existence of a symmetric Wannier representation in a given model is the existence of an atomic insulator with the same symmetry irreps at high symmetry momenta \cite{Po17, Watanabe17}.
For the shift insulator model, such irreps depend on the signs of the parameters $t$ and $\lambda$ and are given in Table~\ref{Irreps}. There is one $C_6$ invariant point $\Gamma$, two $C_3$ invariant points $K$ and $K'$ related by $C_2$ and three $C_2$ invariant points $M$, $M'$ and $M''$ related by $C_3$. We can see from the table that the sign of $t$ controls the symmetry irreps at $\Gamma$ and $M$ while the sign of $\lambda$ controls the irreps at $K$.

\begin{center}
\begin{table}[t]
\caption{Symmetry irreps at the high symmetry momenta for different signs of the model parameters $t$ and $\lambda$.}
\bgroup
\setlength{\tabcolsep}{0.25 em}
\setlength\extrarowheight{0.25em}
\begin{tabular}{c|c|c|c}
\hline \hline
$(\sgn(t),\sgn(\lambda))$ & $\Gamma$ & $K$ & $M$\\
\hline 
$(+,-)$  & $e^{\pm i \pi/3}$ & $e^{\pm 2i \pi/3}$ & $1,1$\\
$(-,-)$  & $e^{\pm 2i \pi/3}$ & $e^{\pm 2i \pi/3}$ & $-1,-1$ \\
$(+,+)$  & $e^{\pm i \pi/3}$ & $1,1$ & $1,1$ \\
$(-,+)$  & $e^{\pm 2i \pi/3}$ & $1,1$ & $-1,-1$ \\
\hline \hline
\end{tabular}
\egroup
\label{Irreps}
\end{table}
\end{center}

Since our model has a filling of two per unit cell, there are only two possible atomic insulators which can be equivalent to it: the first one is built by taking one orbital at position $b$ (which has a multiplicity of 2 leading to a filling of two per unit cell) and the second by taking two orbitals at position $a$. In the first case, the little group $g_b$ is $C_3$ and the orbitals can have angular momentum $l_b = 0,1,2 \Mod 3$. It can be easily shown that regardless of the value of $l$, the symmetry irreps at the $M$ point are $\pm 1$ which obviously does not match with the irreps of our model given in Table~\ref{Irreps} for any value of $t$ and $\lambda$. For the second case, the little group for orbitals at position $a$ is $C_6$ and the angular momenta of the orbitals take the values $l_{a} = 0,\dots, 5 \Mod 6$. In order to match the irreps of the model at the $\Gamma$ point, the angular momenta $l_a$ of the two orbitals should be $\pm 1$ for positive $t$ and $\pm 2$ for negative $t$. In the former case, the resulting irreps at $M$ are $-1,-1$ while in the latter case, they are $1,1$ which both do not match the irreps of the model shown in Table~\ref{Irreps}.

\subsection{Is the obstruction stable or fragile?}
We now ask the question whether the obstruction we found in the previous section is stable or fragile. Recall that a Wannier obstruction is fragile if it can be removed by adding some trivial (atomic) degrees of freedom \cite{Po17fragile}. This means that the model can be smoothly deformed to a linear superposition of atomic insulators with some coefficients being negative. We will show that this is indeed the case for our shift insulator model. Before delving into technical details, let us first try to see why we expect this to be the case based on what is already known. The symmetry indicators, which indicate that the representation content of a given band structure differ from any superposition of atomic insulators (which includes fragile phases), were computed in Ref.~\onlinecite{Po17} for classes AI and AII. Later on, it was shown that all the phases corresponding to non-trivial indicators in class AI are semimetals \cite{Fang17semimetals}. This means that stable Wannier obstruction for insulators in class AI cannot be encoded in symmetry representation. Hence, it is possible to add some atomic orbitals to our model so that the representations at high symmetry momenta correspond to those of an atomic insulator. In general, this is not enough to show that there is no Wannier obstruction and it could be the case that the obstruction persists even if we add some atomic orbitals to ``fix'' the representations at high symmetry momenta \footnote{For example, an $n$-fold symmetric Chern insulator with Chern number $n$ has a stable Wannier obstruction which is not encoded in its representation content which is identical to the same representation content of the trivial insulator \cite{Fang12}.}. In the following, we will show that this is not the case and the obstruction can be removed upon the addition of some atomic orbitals. Interestingly, however, we will find that the number and type of orbitals we need to add depends on the signs of the model parameters $t$ and $\lambda$.

In the following, we will use the projection method \cite{Soluyanov11, Marzari12} explained in the previous section to show that we can construct Wannier states once we add a properly chosen set of atomic orbitals. The criterion for choosing this set is that, when combined to the symmetry irreps of the model, the resulting irreps should correspond to yet another set of atomic orbitalts. In other words, the algorithm for establishing the fragility of the obstruction relies on first choosing $n$ atomic orbitals to add to the model then trying to guess a basis of $n+2$ atomic orbitals which reproduces the same symmetry irreps as the model plus the extra orbitals. Once this basis is found, we can use it as a trial basis for the projection procedure to construct Wannier states. 

We start by considering the case $(\sgn(t),\sgn(\lambda)) = (+,-)$ and make the observation that an atomic insulator whose orbitals are localized at the $c$ position (3 orbitals per unit cell) and have angular momentum $l_c=1$ has the symmetry irreps $(-1,e^{\pm i \pi/3})$ at $\Gamma$, $(1,e^{\pm 2i \pi/3})$ at $K$, and $(-1,1,1)$ at $M$ which differ from the irreps of our model by the irreps $-1$, $1$, and $-1$ at the $\Gamma$, $K$, and $M$ points respectively. This can be easily reproduced by adding an atomic orbital at the hexagon center (position $a$) with angular momentum $l_a=3$. Thus, upon adding such an atomic orbital, we expect our model to become Wannier representable. 

The trial basis should be chosen such that the charge center lies on the hexagon edges ($c$ position). This means that the total weight of the ($p_\pm$) orbitals on the A and B sublattice sites lying at the end points of such an edge has to be the same. In addition, the weight of the orbitals on the $a$ position (hexagon center) for the two unit cells sharing such an edge should also be the same. What remains is to choose the relative weight of these two and the phases of the orbitals to produce the desired symmetry content in momentum space, which can be done by choosing the basis
\begin{gather}
|\tau'_{1,\bk} \rangle = (e^{i \bk \cdot \bt_A}, e^{i \bk \cdot \bt_A}, e^{i \bk \cdot \bt_B}, e^{i \bk \cdot \bt_B}, 1 + e^{i \bk \cdot (\bt_A + \bt_B)})^T, \nonumber \\ |\tau'_{2,\bk} \rangle =  \left(\begin{array}{cc} U_6 & 0 \\ 0 & -1 \end{array} \right) |\tau'_{1,O_6\bk} \rangle,  \nonumber \\ 
|\tau'_{3,\bk} \rangle = \left(\begin{array}{cc} U_6 & 0 \\ 0 & -1 \end{array} \right) |\tau'_{2,O_6\bk} \rangle.
\label{tpp}
\end{gather}
We notice that, like the eigenstates of the original Hamiltonian $h_\bk$, the states $|\tau'_\bk \rangle$ are not periodic in $\bk$ but satisfy instead $|\tau'_{\bk + \bG} \rangle = \tilde V_\bG |\tau'_{\bk} \rangle$. Here, $\tilde V_\bG$ is defined by extending $V_\bG$ defined in (\ref{VG}) to include an extra orbital at the center of the unit cell. It is given explicitly by
\beq
\tilde V_\bG = \left( \begin{array}{cc} V_\bG & 0 \\ 0 & 1 \end{array} \right).
\eeq
To obtain the Wannier states using (\ref{W}), we perform a transformation similar to (\ref{VhV}) to get the periodic basis states $|\tau_{\bk} \rangle = \tilde V_\bk^\dagger |\tau'_{\bk} \rangle$. The overlap matrix is then given by $S_{mn}(\bk) = \langle \tau_{m,\bk}| P_\bk | \tau_{n,\bk} \rangle = \langle \tau'_{m,\bk}| V_\bk P_\bk V^\dagger_\bk| \tau'_{n,\bk} \rangle$ which can be written in terms of the projection operator of the original Hamiltonian as
\beq
S_{mn}(\bk) = \langle \tau'_{m,\bk}| p_\bk | \tau'_{n,\bk} \rangle, \quad p_\bk = \frac{1}{2}(1 - h_\bk / \sqrt{h_\bk^2}).
\eeq
This implies that the overlap matrix does not change under the unitary transformation $\tilde V_\bk$ as expected.

The determinant of the overlap matrix $\det S(\bk)$ is shown in Fig.~\ref{DS} and we can clearly see that it is non-vanishing everywhere, thereby establishing that there is no obstruction to constructing Wannier states from the basis states $|\tau_{n,\bk} \rangle$. The resulting symmetric Wannier states are shown in Fig.~\ref{Wannier1} and we can see that the charge center lies at the $c$ position and that the three Wannier states are manifestly related by rotation. 

\begin{figure}[h]
\center
\includegraphics[width=0.8\columnwidth]{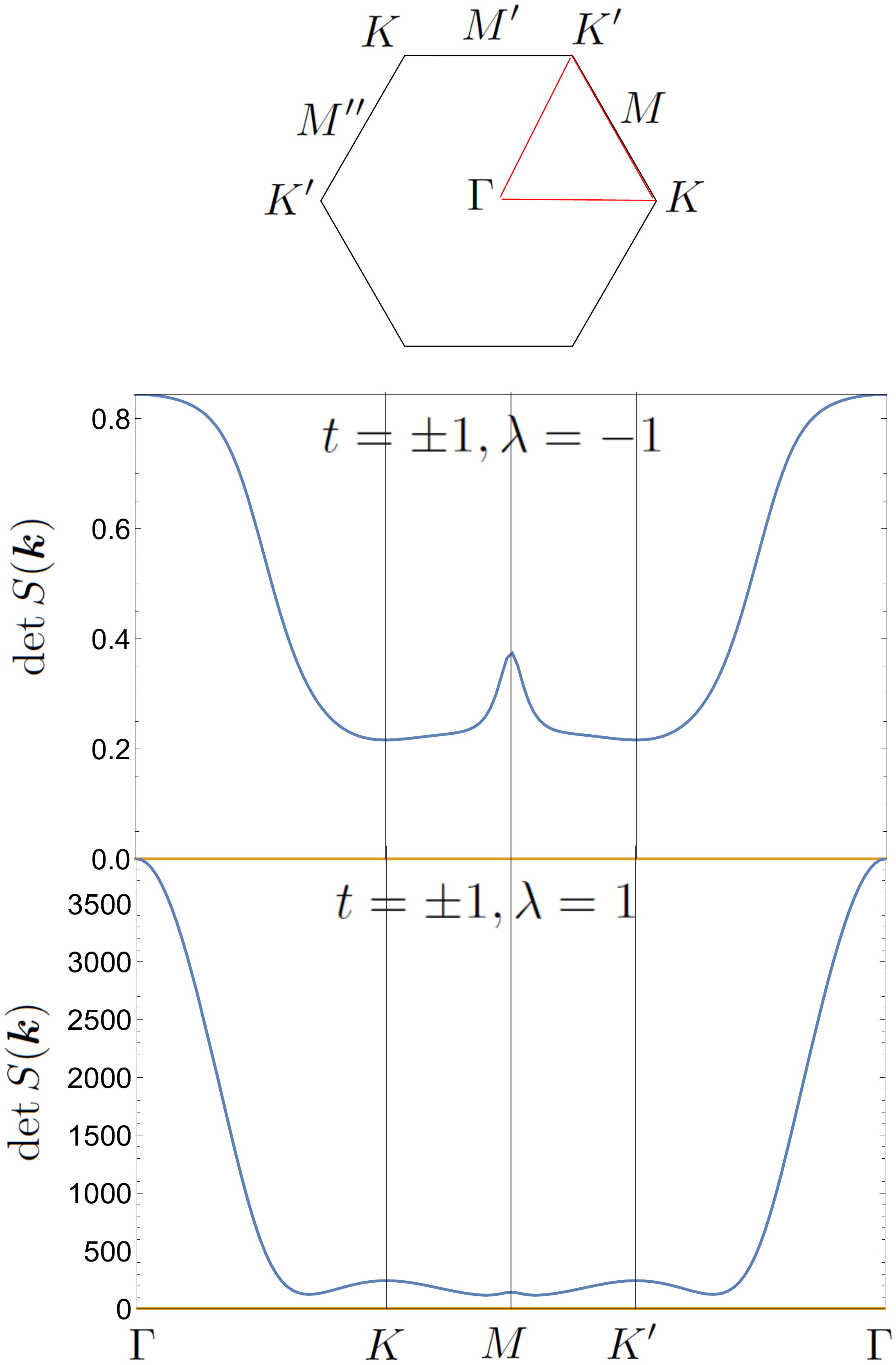}
\caption{The determinant of the overlap matrix $S(\bk)$ for the basis choices given in Eqs.~\ref{tpp}, \ref{tmp}, \ref{tpm} and \ref{tmm} (Note that the curves for $t=\pm 1$ are identical).}
\label{DS}
\end{figure}

 \begin{figure*}[t]
\center
\includegraphics[height=0.4\columnwidth]{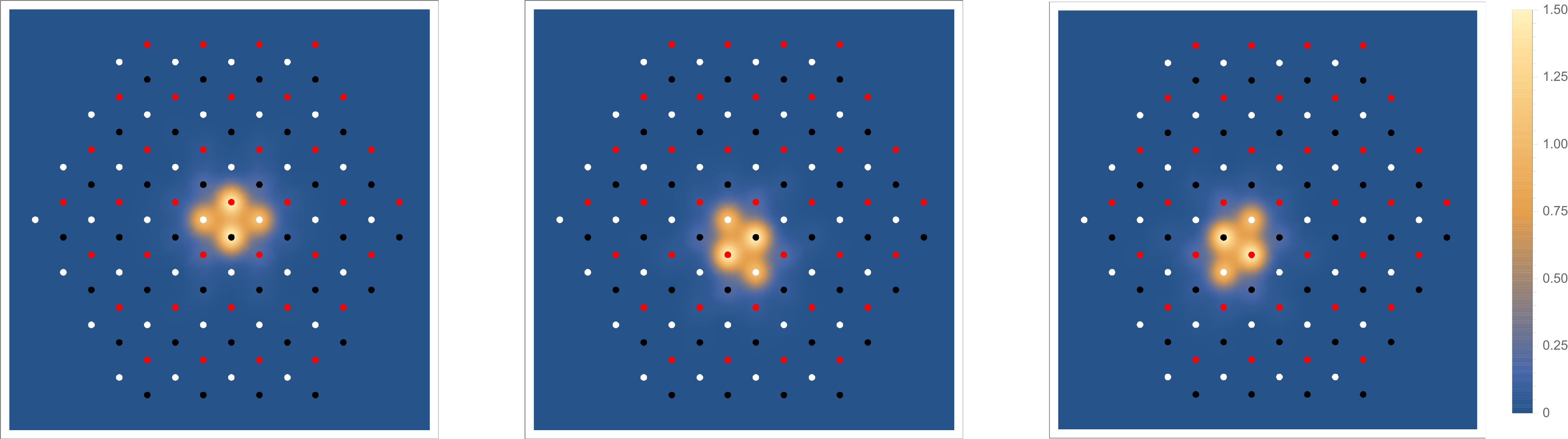}
\caption{The weight of the Wannier function $|W_{n,0}(\br)|^2$ at different sites for the shift insulator model with $\lambda<0$ after the addition of an atomic orbital at the center of the unit cell ($a$ position). The white, red and black dots denote the center of the unit cell, A and B sublattices respectively. The orbital character of the orbitals is not shown and the weight of each position is multiplied by a Gaussian function centered around the corresponding position for clarity. We can see that the three Wannier states are centered at the hexagon edges ($c$ position) and are related to each other by rotation.}
\label{Wannier1}
\end{figure*}

We next consider the case  $(\sgn(t),\sgn(\lambda)) = (-,-)$ which is very similar to the $(+,-)$ case with the main difference being that the orbital which should be added at the $a$ position has angular momentum $l_a=0$. The resulting irreps are $(1,e^{\pm 2i \pi/3})$ at $\Gamma$, $(1,e^{2\pm i \pi/3})$ at $K$, and $(1,-1,-1)$ at $M$ which correspond to an atomic insulator with three orbitals at the $c$ position with angular momentum $l_c=0$. We can then write the basis $|\tau'_{n,\bk} \rangle$ similar to (\ref{tpp}) as
\begin{gather}
|\tau'_{1,\bk} \rangle = (e^{i \bk \cdot \bt_A}, e^{i \bk \cdot \bt_A}, -e^{i \bk \cdot \bt_B}, -e^{i \bk \cdot \bt_B}, 1 + e^{i \bk \cdot (\bt_A + \bt_B)})^T, \nonumber \\ |\tau'_{2,\bk} \rangle = \left(\begin{array}{cc} U_6 & 0 \\ 0 & 1 \end{array} \right) |\tau'_{1,O_6\bk} \rangle,  \nonumber \\  
|\tau'_{3,\bk} \rangle =  \left(\begin{array}{cc} U_6 & 0 \\ 0 & 1 \end{array} \right) |\tau'_{2,O_6\bk} \rangle.
\label{tmp}
\end{gather}
The determinant of the resulting overlap matrix as well as weights of the Wannier states on different sites turns out to be the same as in the $(+,+)$ given in Fig.~\ref{DS} and \ref{Wannier1} respectively (note that the orbital character, which differs in both cases, is not shown in these plots).

The case of $(\sgn(t),\sgn(\lambda)) = (+,+)$ is different since we need to add two, rather than one, atomic orbitals to the model to be able to find a Wannier representation. The two atomic orbitals are placed at the A and B sublattices (Wyckoff position $b$) and correspond to angular momentum $l_b=0$. They yield the irreps $\pm 1$ at $\Gamma$, $e^{2\pm i \pi/3}$ at $K$, and $\pm 1$ at $M$ which when combined with the irreps of our model correspond to an atomic insulator with one $l_a=0$ orbital at position $a$ and three $l_c=1$ orbitals at position $c$. The corresponding basis states are
\begin{gather}
|\tau'_{1,\bk} \rangle = (e^{i \bk \cdot \bt_A}, e^{i \bk \cdot \bt_A}, e^{i \bk \cdot \bt_B}, e^{i \bk \cdot \bt_B}, e^{i \bk \cdot \bt_A},-e^{i \bk \cdot \bt_B})^T, \nonumber \\ 
|\tau'_{2,\bk} \rangle =\left(\begin{array}{ccc} U_6 & 0 & 0 \\ 0 & 0 & 1 \\ 0 & 1 & 0 \end{array} \right) |\tau'_{1,O_6\bk} \rangle,  \nonumber \\  
|\tau'_{3,\bk} \rangle = \left(\begin{array}{ccc} U_6 & 0 & 0 \\ 0 & 0 & 1 \\ 0 & 1 & 0 \end{array} \right) |\tau'_{2,O_6\bk} \rangle, \nonumber \\
|\tau'_{4,\bk} \rangle = (0, e^{i \pi/3} e^{i \bk \cdot \bt_A} - e^{-i \bk \cdot \bt_B} + e^{-i \pi/3} e^{ i \bk \cdot (-\bt_A + \bt_B)}, \nonumber \\ \qquad \qquad 0, e^{i \bk \cdot \bt_B} + e^{-2i \pi/3} e^{-i \bk \cdot \bt_A} + e^{2i \pi/3} e^{ i \bk \cdot (\bt_A - \bt_B)},\nonumber \\ \qquad e^{i \bk \cdot \bt_A} + e^{-i \bk \cdot \bt_B} + e^{ i \bk \cdot (-\bt_A + \bt_B)}, \nonumber \\ \qquad  e^{i \bk \cdot \bt_B} + e^{-i \bk \cdot \bt_A} + e^{ i \bk \cdot (\bt_A - \bt_B)})^T.
\label{tpm}
\end{gather}
The determinant of the resulting overlap matrix is shown in Fig.~\ref{DS} and the corresponding Wannier states are shown in Fig.~\ref{Wannier2}, where we can see three rotation-related states centered at position $c$ and one rotationally symmetric state centered at $a$.

 \begin{figure*}[t]
\center
\includegraphics[height=0.4\columnwidth]{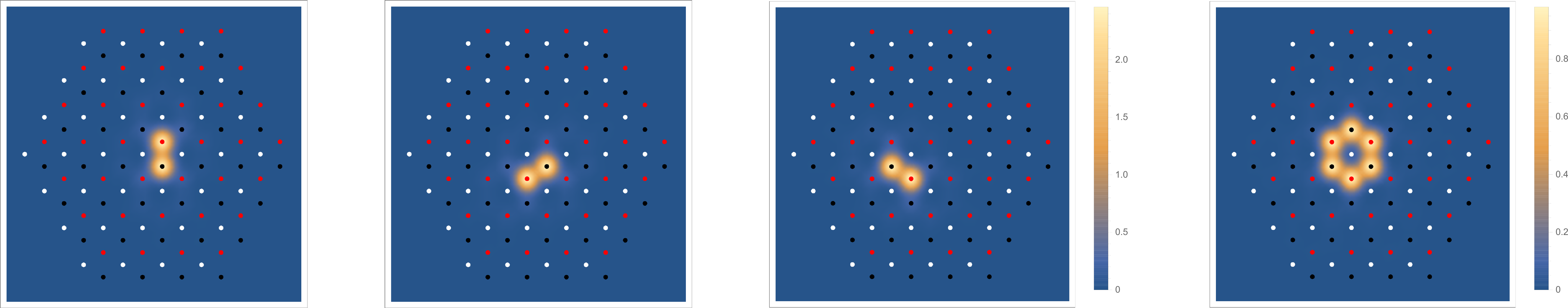}
\caption{The Wannier functions $|W_{n,0}(\br)|^2$ for the shift insulator model with $\lambda>0$ after the addition of two atomic orbitals at the A and B sublattices ($b$ position). As in Fig.~\ref{Wannier1}, the white, red and black dots denote the center of the unit cell, A and B sublattices, respectively and we use a Gaussian function centered around the corresponding position for clarity. Here, we have three rotation-related states centered at the hexagon edges ($c$ position) and one rotationally symmetric state centered at the hexagon center ($a$ position).}
\label{Wannier2}
\end{figure*}

 For $(-,+)$, the situation is quite similar to the $(+,+)$ case. Here, we also have to add two atomic orbitals with angular momentum $l=0$ at position $b$ to be able to find a Wannier representation. The resulting atomic insulator is, however, slightly different from the $(+,+)$ case and it corresponds to a single $l_a=3$ orbital at position $a$ and three $l_c=0$ orbitals at position $c$. The basis states are 
\begin{gather}
|\tau'_{1,\bk} \rangle = (e^{i \bk \cdot \bt_A}, e^{i \bk \cdot \bt_A}, -e^{i \bk \cdot \bt_B}, -e^{i \bk \cdot \bt_B}, e^{i \bk \cdot \bt_A}, e^{i \bk \cdot \bt_B})^T, \nonumber \\ 
|\tau'_{2,\bk} \rangle = \left(\begin{array}{ccc} U_6 & 0 & 0 \\ 0 & 0 & 1 \\ 0 & 1 & 0 \end{array} \right) |\tau'_{1,O_6\bk} \rangle,  \nonumber \\ 
|\tau'_{3,\bk} \rangle =  \left(\begin{array}{ccc} U_6 & 0 & 0 \\ 0 & 0 & 1 \\ 0 & 1 & 0 \end{array} \right) |\tau'_{2,O_6\bk} \rangle, \nonumber \\
|\tau'_{4,\bk} \rangle = (0, e^{i \pi/3} e^{i \bk \cdot \bt_A} - e^{-i \bk \cdot \bt_B} + e^{-i \pi/3} e^{ i \bk \cdot (-\bt_A + \bt_B)}, \nonumber \\ \qquad \qquad 0 ,-e^{i \bk \cdot \bt_B} - e^{-2i \pi/3} e^{-i \bk \cdot \bt_A} - e^{2i \pi/3} e^{ i \bk \cdot (\bt_A - \bt_B)},\nonumber \\ \qquad \qquad e^{i \bk \cdot \bt_A} + e^{-i \bk \cdot \bt_B} + e^{ i \bk \cdot (-\bt_A + \bt_B)}, \nonumber \\ \qquad -e^{i \bk \cdot \bt_B} - e^{-i \bk \cdot \bt_A} - e^{ i \bk \cdot (\bt_A - \bt_B)})^T,
\label{tmm}
\end{gather}
The determinant of the resulting overlap matrix and the Wannier states are the same as in the $(+,+)$ case.

The expression of the shift insulator in terms of atomic orbitals is provided in Table~\ref{AtomicRepresentations} for different values of $t$ and $\lambda$. For comparison, we also consider the shift insulator model built using $p_z$ orbitals ($L_z = 0$) rather than $p_\pm$ orbitals.

It is worth noting that our model bears some similarity to inversion-protected TCIs studied in Ref.~\cite{Alexandradinata14}. Such models are characterized by Wilson loop winding which can be used to show they are built by stacking Chern insulators with opposite Chern numbers. Recently, it was shown that this Wilson loop winding is a signature of fragile topology of these models \cite{Alexandradinata18}. This is consistent with our conclusions which are obtained using very different methods.

\subsection{Topological response from the atomic representation}
\label{TopologicalAtomic}
\begin{center}
\begin{table*}[t]
\caption{Realization of the model for different values of $L_z$, $t$ and $\lambda$ in terms of atomic insulators. An atomic insulator is indicated by $x_l$, where $x$ is the Wyckoff position $a$, $b$, or $c$ (shown in Fig.~\ref{Wyckoff}) and $l$ is the angular momentum. The charge trapped in a disclination $\Delta Q$ and the angular momentum response to the insertion of a flux of $m$ are computed using the expressions: $\Delta Q = n_\Omega [-\sgn(\lambda) L_z/3 + \sgn(t) /2]$ and $\Delta s = m[3 \sgn t - 2 L_z \sgn \lambda]$ derived in Sec.~\ref{Response}. These results are reproduced by the atomic representation by by assigning a disclination charge of $n_\Omega/6$ and flux angular momentum of $m$ for the atomic insulator in Wyckoff position $a$ (and 0 charge and angular momentum for positions $b$ and $c$). The last column indicates the interacting invariant discussed in Sec.~\ref{Interaction}.}
\bgroup
\setlength{\tabcolsep}{0.5 em}
\setlength\extrarowheight{0.4em}
\begin{tabular}{c|c|c|c|c}
\hline \hline
$(L_z,\sgn(t),\sgn(\lambda))$ & Atomic representation & $\Delta Q \!\! \mod 1$ & $\Delta s(m) \!\! \mod 6$ & $\nu \in \Z_{12} \times \Z_3 \times \Z_4$\\
\hline 
$(1,+,-)$  & $c_{1} - a_{3}$ & $5n_\Omega/6 \sim -n_\Omega/6$ & $5m \sim -m$ & $(5,0,1)$\\
$(1,-,-)$  & $c_{0} - a_{0}$ & $-n_\Omega/6$ & $-m$ & $(-1,0,1)$\\
$(1,+,+)$  & $c_{1} + a_{0} - b_{0}$ & $n_\Omega/6 $ & $m$ & $(1,-1,-1)$\\
$(1,-,+)$  & $c_{0} + a_{3} - b_{0}$ & $-5 n_\Omega/6 \sim n_\Omega/6$ & $-5m \sim m$ & $(-5,-1,1)$\\
$(0,+,\pm)$  & $c_{0} + a_{1} + a_0 + a_{-1} - b_{1} - b_{-1}$ & $n_\Omega/2 \sim -n_\Omega/2$ & $3m \sim -3m$ & (3,-2,1)\\
$(0,-,\pm)$  &  $c_{1} + a_{2} + a_3 + a_{-2} - b_{1} - b_{-1}$ & $-n_\Omega/2$ & $-3m$ & $(-3,-2,-1)$\\
\hline \hline
\end{tabular}
\egroup
\label{AtomicRepresentations}
\end{table*}
\end{center}

We now show how the topological response obtained in Sec.~\ref{Response} can be derived from the atomic description we just obtained. A fragile phase can be written as a superposition of atomic insulators with integer coefficients. Each atomic insulator is specified by a Wyckoff position $x$ and a representation of the site symmetry group of this position. For the wallpaper group $p6$ relevant to the shift insulator, the representations are labeled by integers $l=0,\dots, N_x$ where $C_{N_x}$ is the symmetry group of the Wyckoff position $x$. $N_x$ is given explicitly by  $6,3,2$ for positions $x=a,b,c$ respectively. Any fragile phase can then by written as
\beq
\chi = \sum_{x=a,b,c} \sum_{l=0}^{N_x-1} c_{x,l} x_l.
\label{AExp}
\eeq 
We note here that the general Wyckoff position $d$ is not considered since we can always symmetrically bring the orbitals in position $d$ to any of the other three positions. We also note that the representation (\ref{AExp}) is generally not unique due to the fact that any set of $N_x$ orbitals at Wyckoff position $x$ with all possible values of angular momenta $l=0,\dots,N_x-1$ can be symmetrically deformed away using the general position $d$ and brought to any of the other Wyckoff positions.

To understand the response to a disclination, we notice that any atomic orbital far away from the $C_6$ rotation center comes with a group of six orbitals which transform into each other under the rotation symmetry. If we cut off a $60\degree$ wedge, these six orbitals can be smoothly glued back into five orbitals. On the other hand, for orbitals located near the rotation center, we do not know what would happen in general: some orbitals cannot be glued back after we cut a wedge, while some new orbitals may appear. Since we do not care about integer charges, we can imagine first symmetrically unoccupy all the orbitals near the rotation center, then the net number of electrons (after subtracting off the background) near the rotation center becomes $-n_a+6k$ where $k\in\mathbb{Z}$ and $n_a$ is given by
\beq
n_a = \sum_{l=0}^5 c_{a,l}.
\eeq
Now if we cut off a wedge, the net number of electrons near the pentagon disclination ($n_\Omega=1$) is $n_a/6$ up to an integer, and this is nothing but the disclination charge we are looking for. More generally, the $(6-n_\Omega)$-gon disclination charge is given by $(n_\Omega/6)n_a$. We see from Table~\ref{AtomicRepresentations} that this precisely corresponds to the charge obtained in Sec.~\ref{Disclination}. We also notice that $n_a$ is only defined modulo 6 due to the aforementioned ambiguity in the representation (\ref{AExp}), which is consistent with the fact that the disclination charge is defined modulo an integer.

The response to the threading of a monopole flux can be understood analogously by noting that only the $a$ position orbitals at the rotation center are affected by the threading of a flux quantum, assuming for simplicity that all orbitals have tiny spread. The threading of $m$ flux quanta changes the orbital wavefunctions as
\beq
\psi(r,\varphi) \rightarrow \psi(r,\varphi) e^{i m \varphi}.
\eeq
This is associated with a change of $m$ in angular momentum. Thus, the resulting total change in angular momentum of the ground state is again determined by the number of atomic orbitals at position $a$ and given by $m n_a$. From this, we can see that topological response for the shift insulator model is captured by the Chern-Simons theory (\ref{LCS}) with $S$ given by $n_a$ modulo 6.

Following the discussion of this section, it is worth noting that the response of the shift insulator given in Sec.~\ref{Response} and summarized in Table~\ref{AtomicRepresentations} is indeed a signature of fragile topology. The reason is that the only two possibilities for an atomic insulator with two filled bands in the symmetry group $p6$ corresponds to either two orbitals at position $a$ for which $n_a = 2$ or one orbital at position $b$ for which $n_a = 0$. The shift insulator corresponds to $n_a = \pm 1 \Mod 6$ (for $L_z = 1$) or $n_a = 3 \Mod 6$ (for $L_z = 0$) which is inconsistent with either atomic insulator, thus implying that the charge response to disclination or angular momentum response to flux threading can indeed be used to detect the fragile topology of our model.

\subsection{Discussion of Topological Responses}
\label{DTR}
The robust topological responses of the shift insulator raises the following puzzle: how is this seemingly absolute attribute reconciled with the observation that the topology of atomic insulators is only relative, and the shift insulator can be reduced to a collection of atomic insulators (with potentially negative coefficients)? 

A simpler setting where this question appears is in inversion symmetric 1D spinless insulators \cite{Turner12}, where two insulators with relative polarization $\pi$ are allowed. These correspond to the charge densities centered on one or the other inversion center in real space. A priori it is not clear which one is to be assigned zero polarization. This ambiguity is settled by specifying additional information beyond the band structure, the location of the ionic cores (or sites of a tight binding model) which ensure charge neutrality, required to define polarization. Zero polarization  is then assigned to the band structure in which the electronic charge center coincides with the atomic cores (or tight binding sites). 

In a similar way, our tight binding sites of the shift insulator provide a fixed reference state relative to which the atomic insulators are compared. When the charge centers of the band structure coincide with the  tight binding sites, the responses we define are trivial - for example, a disclination will not carry fractional charge. Thus, effectively a relative topology has been extracted from these responses.

\section{TCIs protected by point group symmetries: general discussion}
\label{Layer}
In Secs.~\ref{Model}, \ref{Response}, and \ref{Wannier}, we considered a specific model for a TCI protected by rotation symmetry which does not possess any surface states. We have shown that, despite the apparently non-trivial features of the model, it can be described in terms of a superposition of atomic orbitals provided we allow for negative coefficients, making it an example of fragile topology. In this section, we consider the general problem of TCIs protected by point group symmetries. We will show that those TCIs without surface states all share the features of the shift insulators in that their topology is at most fragile. This is established by showing that they can all be understood in terms of symmetrically repeating a 0D unit in a manner similar to the layer construction of Refs.~\onlinecite{Hermele17, Huang17,Fulga16}. We begin the section by a general review of the layer construction before presenting the proof that every point-group-protected TCI without surface states is either an atomic insulator or a fragile phase.

\subsection{Review of the layer construction}

The layer construction was used in Refs.~\onlinecite{Hermele17, Huang17} to reduce the problem of classifying symmetry protected topological phases (SPTs) protected by spatial symmetries to that of classifying SPTs protected by internal symmetries in lower dimensions. For simplicity, we will restrict our discussion here to TCIs protected by point group symmetries. It should be noted, however, that the validity of the layer construction is not limited to point groups and it was employed successfully to understand TCIs protected by other spatial symmetries such as glides, screws or translations \cite{Ezawa16, SongFang17}. 

The layer construction relies on the observation that a spatial point group symmetry $g$ in $d$ dimensions generally divides the space into several $d$-dimensional symmetry-related regions separated by a $(d-1)$-dimensional symmetry-invariant region. Since the symmetry $g$ does not act within any of these $d$-dimensional regions, the phase can be trivialized there (since an SPT can always be trivialized when the symmetry is broken). This means that the non-trivial topology of the phase has to be encoded in the $(d-1)$-dimensional symmetry-invariant region where the symmetry $g$ acts either as an internal symmetry leaving every point invariant (e.g. mirror symmetry acting within the mirror plane) or as a spatial symmetry mapping different points to each other (e.g. inversion symmetry acting within a plane containing the inversion center). In the latter case, the symmetry-invariant region can either host an internal-symmetry-protected TI, which is just compatible with the spatial symmetry\footnote{The reason we need to check for compatibility is that sometimes spatial symmetries may impose some constraints on the strong topological invariants and rule out some strong phases.}, or a TCI in $(d-1)$-dimensions for which the dimensional reduction can be implemented further. Repeating this procedure is guaranteed to reduce the problem of classifying TCIs to that of classifying TI protected by internal symmetries in a lower dimension. Once this is done, the internal symmetries can be used to transform the Hamiltonian into a block-diagonal form where each block belongs to one of the 10 Altland-Zinrbauer classes (including only the fundamental symmetries such as time-reversal or $U(1)$ charge conservation) whose topological classification is well-understood \cite{Kitaev09, Schnyder09, Ryu10}.

To understand how this works in more concrete terms, let us consider TCIs in class AII protected by mirror or inversion symmetry in 3D. For the case of mirror, the 3D space is split in two regions by the mirror plane on which mirror symmetry acts as an internal $\Z_2$ symmetry. This symmetry can be used to transform the Hamiltonian into a block-diagonal form where the two blocks are related by time-reversal symmetry. Each block separately may break time reversal symmetry (class A), hence, we can assign to it a non-vanishing Chern number $C$ corresponding to a quantum Hall or Chern insulator phase (the total Chern number of the two blocks is still zero). The resulting TCI is characterized by mirror Chern number $C$ in the corresponding mirror plane.

For inversion, the 3D space can be split by any plane containing the inversion center. Such a plane, on which inversion acts as two-fold rotation, can host either a 2D time-reversal-invariant TI (which is compatible with two-fold rotation), or a 2D TCI protected by twofold rotation. The former corresponds to a second-order TI \cite{Fang17, Khalaf17, Khalaf18} whereas the latter can be understood by splitting the 2D plane using a line which contains the inversion center and investigating possible phases on this line. Due to the absence of TIs (protected by internal symmetries) in 1D, the only possible topological phase on this line is an inversion-protected TCI which can be similarly understood by dividing it in two pieces around the inversion center. By placing a 0D TI (which is just an atomic insulator with a given number of filled bands) at the inversion center, we can obtain a non-trivial TCI (since the classification of TIs protected by $U(1)$ charge conservation in 0D (class A, AI, and AII) is $\Z$ \cite{Schnyder09, Kitaev09, Ryu10}). The 1D TCI obtained is nothing but the inversion-protected Su-Schrieffer-Heeger (SSH) chain and the corresponding 3D TCI is built by repeating the SSH chain symmetrically to fill the 3D space.

The previous dimensional reduction argument provides a map between TCIs in a given dimension and TIs protected by internal symmetries in lower dimensions. Such map provides a hierarchy of TCIs in a given dimension and symmetry class which distinguishes them according to the dimension of the TI used in their construction. Since TIs protected by internal symmetries in $d$ dimensions always exhibit $(d-1)$-dimensional surface states, such hierarchy of TCIs distinguishes them according to the dimensionality of their surface states. More specifically, a TCI which maps via the dimensional reduction procedure to a $d$-dimensional TI will host gapless $(d-1)$-dimensional surface states on a generic symmetry-compatible surface\footnote{Following Ref.~\onlinecite{Trifunovic18}, we call a surface symmetry-compatible if it does not have any symmetry-invariant face but preserves the symmetry as a whole}. Such hierarchy was explored in a recent work by Trifunovic and Brouwer \cite{Trifunovic18} which introduced the sequence of subgroups $K^{(d)} \subseteq K^{(d-1)} \dots K^{(1)} \subseteq K^{(0)}$. Here, $K^{(n)}$ denotes the group of phases for which any symmetry-compatible surface is gapped except for a region whose dimension is at most $d-n-1$ and $K^{(0)} = K$ is the standard K-group \cite{Kitaev09, Morimoto13, Chiu16, Lu14, Shiozaki14}. The set of phases with $(d-n-1)$-dimensional surface states on a general symmetry-compatible surface is then given by $K^{(n)}/K^{(n+1)}$ which correspond in our discussion to TCIs built by layering $(d-n)$-dimensional TIs.

The dimensional reduction map described above can also be reversed to obtain a TCI from a lower-dimensional symmetry-compatible TI \cite{Ezawa16, Isobe15, Hermele17, Huang17,Fulga16}. To do this, we start with a lower-dimensional TI, for example a quantum Hall layer, and repeatedly adjoin it with a 2D layer and its copies under the symmetry until the whole 3D space is filled. This procedure, however, does not provide a one-to-one correspondence between TCIs and lower-dimensional TIs due to the fact that two lower-dimensional TIs related by the adjoining operation (adding a ``layer'' and its copies under the symmetry) lead to identical higher-dimensional TCIs. To obtain a one-to-one correspondence, lower-dimensional phases related by the adjoining procedure should be identified \cite{Hermele17, Huang17}.

It is instructive to consider an illustrative example of the hierarchy of TCIs explained above. Consider TCIs without time-reversal symmetry (class A) protected by inversion in odd dimensions $d=2m+1$ whose classification is  $\Z$ \cite{Lu14, Shiozaki14}. A simple analysis shows that the odd elements in $\Z$ correspond to second-order TIs built by layering a quantum Hall system in $d=2m$ dimensions which hosts $(2m-1)$-dimensional surface states. If we add two such phases, there are two possible inversion-odd mass terms which can be used to gap out these surface states leaving a $(2m-3)$-dimensional gapless region. The resulting phase can then by identified with TCIs built by layering quantum Hall states in $d=2m-2$ dimensions. This pattern continues until we get the TCIs without any surface states corresponding to elements which are multiples of $2^m \in \Z$. Such hierarchy is summarized by the equation below
\beq
n \in \Z = \begin{cases}
1 \mod 2, &: (2m-1) \text{D surface states},\\
2 \mod 4, &: (2m-3) \text{D surface states},\\
\dots & \dots \\
2^{m-1} \mod 2^m, &: 1\text{D surface states},\\
0 \mod 2^m, &: \text{no surface states},
\end{cases}
\label{Hierarchy}
\eeq
and it corresponds to the sequence of subgroups given by $K^{(2l)} = K^{(2l+1)} = 2^l \Z$ in the language of Ref.~\onlinecite{Trifunovic18}.

\subsection{TCIs without surfaces states}
Having explained the general dimensional mapping, let us now focus on the subgroup of point-group-protected TCIs which do not have any surface states. The dimensional reduction mapping can be used to reduce the study of these TCIs in $d$ dimensions to TIs protected by internal symmetries in a lower dimension $\delta < d$. Since the latter always posses surface states with dimension $\delta - 1$, the absence of surface states implies that $\delta = 0$. In other words, TCIs without surface states are mapped to 0D TI protected by internal symmetries which are necessarily atomic insulators. This, however, does not imply that these phases admit a Wannier representation. In fact, we know that fragile phases do not have any surface states while at the same time not admitting a Wannier representation. How can this then be reconciled with the statement that they are mapped to atomic insulators within the dimensional reduction procedure?

 \begin{figure}[t]
\center
\includegraphics[width=\columnwidth]{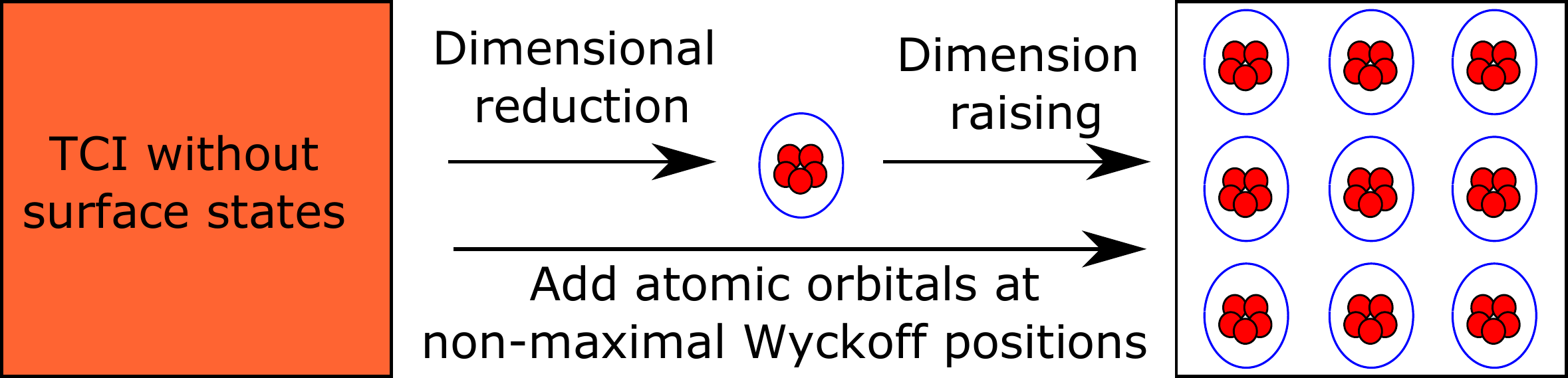}
\caption{Schematic illustration of the proof that all TCIs without surface states admit a Wannier representation possible after the addition of some atomic degrees of freedom (DOFs). Using the layer construction, we shown in the main text that any $d$-dimensional TCI without surface states protected by point group symmetries can be mapped to a Wannier representable phase by first using the dimensional reduction to a 0D atomic insulator then using the dimensional raising or layering procedure. The mapping involves the addition of some atomic orbitals at non-maximal Wyckoff positions.}
\label{Proof}
\end{figure}

To resolve this apparent paradox, we should note that the dimensional reduction procedure generally breaks translational symmetry since it singles out a symmetric lower-dimensional region. This implies that a TCI which reduces to an atomic insulator within the dimensional reduction procedure is not necessarily Wannier representable since a Wannier representation necessarily requires the existence of translational symmetry. Translational symmetry can, however, be restored by following the dimensional reduction procedure by a dimension-raising procedure i.e. by repeating the 0D unit cell in a manner that is consistent with both translation and the other symmetries to construct a higher-dimensional TCI. The resulting TCI is, however, not completely equivalent to the original TCI due to the many-to-one nature of the dimension-raising map. The two systems are only equivalent up to the adjoining procedure which adds some atomic orbitals off high-symmetry points, lines, or planes and their images under the symmetry. This implies that a TCI without surface states is equivalent under adjoining to a Wannier representable phase i.e. it has at most a fragile obstruction to a Wannier representation. Such procedure is illustrated schematically in Fig.~\ref{Proof}.

One aspect that is not entirely clear in the correspondence between TCIs without surface states and fragile phases is that the adjoining operation seems more restricted that the type of operation allowed to make a fragile phase Wannier representable. To be more precise, we are allowed to add {\it any} atomic insulator to make a fragile phase Wannier representable but the adjoining operation is restricted to adding atomic orbitals in non-maximal Wyckoff positions (a maximal Wyckoff position is one which cannot be deformed into a more symmetric position). However, this restriction does not lead to any extra obstructions since the general Wyckoff position (that consists of a point without any symmetry and its images under the symmetry), which is allowed by the adjoining procedure, contains all possible symmetry representations at all Wyckoff positions \cite{Po17}. 

More explicitly, imagine we have a fragile phase which can be written as $\sum_{x,\alpha_x} n_x^{\alpha_x} x^{\alpha_x}$ where $x^{\alpha_x}$ corresponds to an atomic insulator with orbital centered at the Wyckoff position $x$ in the symmetry irrep $\alpha_x$. The absence of an atomic representation would be reflected by the fact that some of the integers $n_x^{\alpha_x}$ in this expression are negative. If we now add $N$ orbitals in the general Wyckoff position, $n_x^{\alpha_x}$ increases by $N|G|/|G_x| d_x^{\alpha_x}$ \cite{Po17}, where $|G|$ is the order of the full symmetry group, $|G_x|$ is the order of the little group which leaves position $x$ invariant and $d_x^{\alpha_x}$ is the dimension of the irrep $\alpha_x$. Thus, for large enough $N$, we can always make all the coefficients $n_x^{\alpha_x}$ positive.

As an example, consider the shift insulator for $t<0$ and $\lambda<0$. As shown in Table~\ref{AtomicRepresentations}, this is a fragile TCI which is equivalent to the difference between an atomic insulator with $l=0$ orbitals at the $c$ position and one with $l=0$ orbitals at the $a$ position which we write as $\H_{--} \sim c_0 - a_0$. Under the dimensional reduction then raising, this model maps to an atomic insulator which differs by the addition of an atomic orbitals at a general position and its copies under sixfold rotation. At position $a$, this corresponds to adding an orbital with each angular momentum $l=0,\dots,5$. This can be written as $\H_{--} + \sum_{l=0}^{l=5} a_l \sim c_1 + \sum_{l=1}^{l=5} a_l$ which shows that the general Wyckoff position is enough to make the model Wannier representable.

\section{Classification of rotation protected TCIs in 2D}
\label{Rotation}
In this section, we use the layer construction to classify TCIs protected by rotation symmetry in 2D. Following the discussion of the previous section, 2D TCIs protected by $n$-fold rotation can be understood by dividing the 2D plane into $n$ symmetry-related patches separated by a symmetry-invariant region which in this case has the form of a collection of lines intersecting at the origin (see Fig.~\ref{LayerC6} for the case $n=6$). Due to the absence of strong TIs in one spatial dimension for electronic systems (one needs chiral or particle-hole symmetry for that), this 1D region cannot really host any strong TI and it can only host a rotation-protected TCI which is captured by reducing the problem to the (0D) rotation center. The problem then reduces to classifying 0D systems with $U(1)$ charge conservation protected by $\Z_n$ internal symmetry under the adjoining procedure which adds to the system an arbitrary atomic insulator and its copies under the symmetry. The reduction to 0D enables us to obtain the full classification of 2D TCIs protected by rotation as well as the interaction-induced reduction of such classification. The same results can also be used to obtain the classification of 3D TCIs without surface states protected by inversion or roto-inversion.

\begin{figure}[h]
\center
\includegraphics[width=0.55\columnwidth]{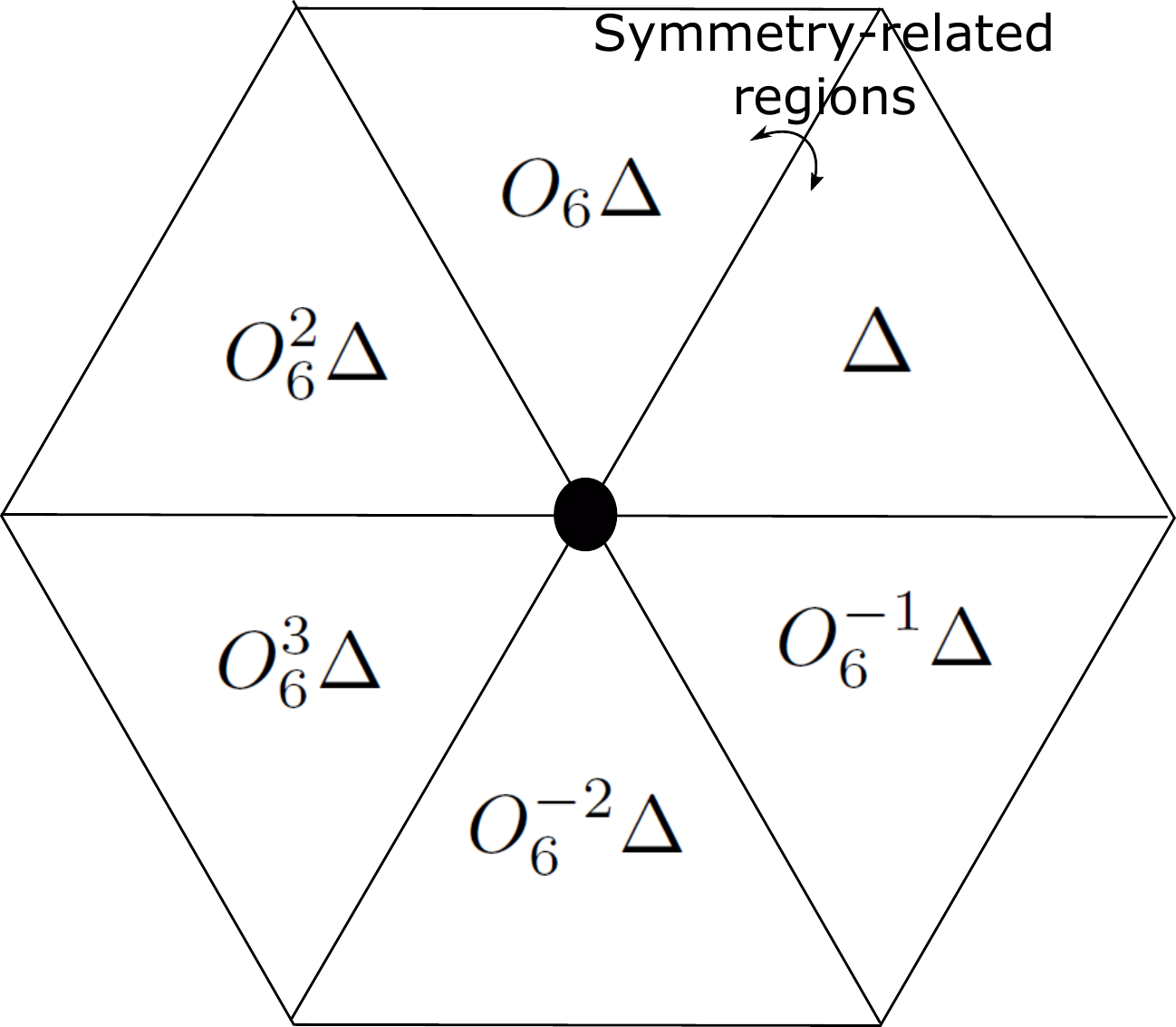}
\caption{Illustration of the dimensional reduction procedure for TCIs protected by 6-fold rotation symmetry. The 2D plane is divided into six symmetry-related regions separated by three lines at angle $\pi/3$ from each other. The three lines form a 1D symmetry-invariant region for which the construction can be repeated to reduce the problem to the 0D problem with internal $\Z_N$ symmetry at the rotation center.}
\label{LayerC6}
\end{figure}

We note that to obtain the classification of point-group protected phases in a given wallpaper or space group, we should in addition take into account the presence of translation which leads to the presence of several inequivalent high symmetry points within the unit cell. Since the layer construction explicitly breaks translation by singling out one of these high symmetry point, the full classification of point-group protected TCIs in the presence of translation requires repeating the dimensional reduction procedure for each of these high symmetry points.

\subsection{Non-interacting classification}

In the following, we will consider the possibilities of spinless, spinful or no time-reversal symmetry corresponding to symmetry classes AI, AII, and A, respectively. The rotation symmetry satisfies $C_N^N=1$ for the spinless case and $C_N^N =-1$ for the spinful case. However, since the problem is also relevant to inversion symmetry in 3D (for which $C_N^N = 1$ even for spinful electrons), we will consider a general setting of $C_N^N = (-1)^\eta$, $\eta = 0, 1$ for each of the three classes A, AI and AII.

Let us start by considering class A with $C_N^N=1$ ($\eta = 0$). The problem reduces to considering a 0D Hamiltonian with $U(1)$ charge conservation and an internal $\Z_n$ symmetry $C_N$. The eigenvalues of $C_N$ are the roots of unity $e^{-il \frac{2\pi}{N}}$ with the angular momentum $l=0,\dots,N-1$. For each $l$, we can define the number of particles in the ground state with angular momentum $l$ which we denote by $n_l$. In the absence of any extra constraints, such phase is described by $N$ integers $n_l$. The adjoining operation amounts to adding a Hamiltonian $\H = \H_0$ and its copies under rotation $\H_i = C_N^i \H (C_N^\dagger)^i$, $i = 1,\dots,N-1$. For each eigenstate $\psi_0$ of $\H_0$, we can construct an eigenstate $\psi_i = C_N^i \psi_0$ for $\H_i$, leading to a total of $N$ eigenstates with the same eigenvalues\footnote{Here, we imagine $\H_i$ to be slightly off the rotation center so that their eigenstates are always orthogonal}. We now define the states $\phi_l$ via the discrete Fourier transform as
\beq
\phi_l = \sum_{j=0}^{N-1} e^{i j l \frac{2\pi}{N}} \psi_j = \sum_{j=0}^{N-1} e^{i j l \frac{2\pi}{N}} C_N^j \psi_0.
\eeq
It is easy to see that $\phi_l$ is an eigenstate of $C_N$ corresponding to angular momentum $l$. This means that the adjoining operation adds the same set of eigenvalues to {\it all} angular momentum sectors $n_l \rightarrow n_l + n$. In other words, the configurations $\{n_l\}$ and $\{n_l + n\}$ are identified for any integer $n$. The resulting TCI classification can then by obtained by fixing $n_0$ leading to $\Z^N /\Z \simeq \Z^{N-1}$. The situation is identical for $C_N^N=-1$ ($\eta = 1$) where the angular momenta $l$ are half-integer. The same argument can be repeated leading also to $\Z^{N-1}$.

Spinless time-reversal symmetry satisfying $\T^2 = 1$ (class AI) flips the angular momentum $l \rightarrow -l$, imposing the constraint $n_l = n_{-l}$. For $\eta = 0$ with $N$ even, the constraint does not affect the two $\T$-invariant angular momenta $l=0$ and $l=N/2$ leading to $N/2+1$ independent integers which reduces to $N/2$ upon the identification $n_l \rightarrow n_l + n$ leading to $\Z^{N/2}$. For $N$ odd, there is only one $\T$-invariant angular momentum $l=0$ leading to $(N+1)/2$ integers which reduce to $(N-1)/2$ after the identification $n_l \rightarrow n_l + n$ leading to $\Z^{(N-1)/2}$. The case of $\eta = 1$ can be analyzed similarly. Here, there is only one $\T$-invariant angular momentum for odd $N$ ($l=N/2$) and no $\T$-invariant angular momenta for even $N$ leading to $\Z^{(N-1)/2}$ and $\Z^{N/2-1}$ respectively.

In addition to flipping the angular momentum, spinful time-reversal symmetry ($\T^2 = -1$) also imposes Kramers degeneracy at $\T$-invariant angular momenta. Furthermore, the adjoining operation involves the addition of a Kramers pair of states to each angular momentum $n_l \rightarrow n_l + 2n$. As a result, we obtain respectively the classifications $(2\Z)^2 \times \Z^{N/2-1}/2\Z \simeq 2\Z \times \Z^{N/2-1}$ or $2\Z \times \Z^{(N-1)/2} / 2\Z \simeq \Z^{(N-1)/2}$ for the case of $\eta=0$ for $N$ even or odd. For $\eta=1$, we get respectively $\Z^{N/2}/2\Z \simeq \Z^{N/2-1} \times \Z_2$ or $2\Z \times \Z^{(N-1)/2} / 2\Z \simeq \Z^{(N-1)/2}$ for even or odd $N$. The appearance of the $\Z_2$ factor for $\eta=1$ for $N$ even may be puzzling at the beginning, but we can understand it, for example, by considering the case of spinful two-fold rotation $N=2$, $\eta = 1$. In this case, the eigenvalues are $\pm i$ and they appear in pairs related by time-reversal (Kramers pairs). The adjoining operation can remove any even number of Kramers pairs by combining them leaving two distinct possibilities depending on the parity of the number of Kramers pairs. The results for class A, AI and AII for $C_N^N = \pm 1$ are summarized in Table~\ref{Classification}.

\begin{center}
\begin{table*}[t]
\caption{Classification of non-interacting 0D TIs protected by internal $\Z_N$ symmetry ($C_N^N=(-1)^\eta=\pm 1$) under the adjoining operation for classes A, AI and AII and its reduction in the presence of interaction. This captures the classification of 2D TCIs protected by rotation symmetry for spinful electrons ($\eta=1$) with (AII) or without (A) time-reversal symmetry and for spinless electrons ($\eta = 0$) with (AI) or without (A) time-reversal symmetry. We can also read from here the classification of 3D TCIs without surface states protected by inversion ($N=2$, $\eta = 0$) or roto-inversion for spinful electrons ($\eta=1$) with (AII) or without (A) time-reversal symmetry and for spinless electrons ($\eta = 0$) with (AI) or without (A) time-reversal symmetry}
\bgroup
\setlength{\tabcolsep}{1 em}
\setlength\extrarowheight{0.3em}
\begin{tabular}{c|c|c|c|c}
\hline \hline
$\eta$ & $N$ & A & AI & AII\\
\hline 
0 & even & $\Z^{N-1} \rightarrow \Z_{2N} \times \Z_{N/2}$ & $\Z^{N/2} \rightarrow \Z_{2N}$ & $2\Z \times \Z^{N/2-1} \rightarrow \Z_{N}$\\

0 & odd & $\Z^{N-1} \rightarrow \Z_{N} \times \Z_{N}$ & $\Z^{(N-1)/2} \rightarrow \Z_{N}$ & $\Z^{(N-1)/2} \rightarrow \Z_{N}$ \\

1 & even & $\Z^{N-1} \rightarrow \Z_{2N} \times \Z_{N/2}$ & $\Z^{N/2-1} \rightarrow \Z_{N/2}$ & $\Z^{N/2-1} \times \Z_2 \rightarrow \Z_{N}$ \\

1 & odd & $\Z^{N-1} \rightarrow \Z_{N} \times \Z_{N}$ & $\Z^{(N-1)/2} \rightarrow \Z_{2N}$ & $\Z^{(N-1)/2} \rightarrow \Z_{N}$ \\
\hline \hline
\end{tabular}
\egroup
\label{Classification}
\end{table*}
\end{center}

\subsection{Reduction of the classification in the presence of interactions}
\label{Interaction}
Let us now discuss the stability of the rotation protected TCIs considered in the previous section to the addition of  interactions. We restrict ourselves to local interactions and only consider the {\it reduction} of the non-interacting classification in the presence of interaction rather than any potential interaction-induced phases. In this case, all the phases considered are adiabatically connected to a non-interacting phase and are, therefore, short-range-entangled with a finite correlation length $\xi$. This means that the reduction to 0D can be performed as in the previous section. 

The interaction is required to be symmetric under $U(1)$ charge conservation and $C_N$. This means that the total particle number $N_t = \sum_l n_l$ and angular momentum $L_t = \sum_l l n_l \,\Mod N$ cannot be changed by the interaction. The filling for the individual angular momentum states $n_l$, however, can generally be altered. In fact, it is always possible to write an interaction term to transform any configuration $\{n_l\}$ to any other configuration $\{n_l'\}$ as long as the two configurations have the same total particle number $N_t = N_t'$ and total angular momentum $L_t = L_t'$. In order to obtain the interacting classification, we need in addition to take into account the action of the adjoining operation on $N_t$ and $L_t$. 

The adjoining operator transforms $n_l$ by adding a fixed integer $n$ to all of them $n_l \rightarrow n_l + n$. As a result, the total particle number and angular momentum change as
\beq
N_t \rightarrow N_t + n N, \qquad L_t \rightarrow L_t + n L_0.
\eeq
Here, $n$ is an arbitrary integer for classes A and AI and an even integer for class AII and $L_0$ is the angular momentum associated with the addition of a single state for each angular momentum. It is given by
\beq
L_0 = \sum_{l=0}^{N-1} l + \frac{\eta}{2} \!\!\! \mod N = \begin{cases} N/2 &: N + \eta \text{ even,} \\ 0 &: N + \eta \text{ odd.} \end{cases} 
\label{L0}
\eeq

In class A, where there is no extra constraint stemming from time-reversal symmetry, the interacting classification can be obtained as follows. For $\eta=0$ and even $N$, we can form the integer $N_t - 2L_t$ which is invariant under adjoining. Moreover, this integer will be invariant modulo $2N$ under interactions since $N_t$ is conserved and $L_t$ is conserved modulo $N$. In addition to this invariant, we have $L_t$ which is an integer defined only modulo $N/2$ due to adjoining. One can verify that these two invariants are sufficient to distinguish all distinct phases, and also all integer values of them are allowed, therefore the resulting classification is $\Z_{2N} \times \Z_{N/2}$ (notice that $N_t \mod N$ is completely fixed by these two integers). 
For $\eta=0$ and odd $N$, $L_t$ will be invariant under adjoining thus we can use $N_t \mod N$ and $L_t \mod N$ to specify the phase leading to $\Z_N \times \Z_N$. In summary, we have the following invariants for $\eta=0$: 
\beq
\label{ReductionA}
\S_{\rm A}^0 =
\begin{cases} 
(N_t - 2L_t \Mod 2N, L_t \Mod N/2) &: N\text{ even,} \\
(N_t \Mod N, L_t \Mod N) &: N\text{ odd.} \end{cases}
\eeq
We note that these two invariant pairs are also applicable to the more general cases $N+\eta=\text{even}$ and $N+\eta=\text{odd}$, respectively, which will be useful later when we are considering class AI. 
Next consider cases with $\eta=1$. We define a shifted total angular momentum $\tilde L_t\equiv L_t+N_t/2$ which is also conserved modulo $N$. The introduction of $\tilde L_t$ is in fact equivalent to a redefinition of the $C_N$ symmetry. One can easily check that for a given $N$, $\tilde L_t$ behaves in exactly the same way as $L_t$ with $\eta=0$: they both take integer values, and change by $N/2$ (or 0) under adjoining when $N$ is even (or odd), thus we get identical classifications as in the $\eta=0$ cases. The corresponding invariants $\S_{\rm A}^1$ take the same form as in \eqref{ReductionA} but with $L_t$ replaced by $\tilde L_t$. 

In class AI, there is the extra constraint that $n_l = n_{-l}$. This implies that the total angular momentum is a multiple of $L_0$ given in \eqref{L0}. For odd $N+\eta$, this means that the total angular momentum $L_t$ vanishes leading to the vanishing of the second component of the invariant pair $(N_t \Mod N, L_t \Mod N)$. In addition, $\eta=1$ for even $N$ implies that the total particle number is even, thus we get a $\Z_{N/2}$ classification for even $N$ and $\Z_N$ classification for odd $N$. For even $N+\eta$, the total angular momentum $L_t$ becomes a multiple of $N/2$ which implies that the second component of the invariant pair $(N_t - 2L_t \Mod 2N, L_t \Mod N/2)$ vanishes and we obtain a $\Z_{2N}$ classification. 
In summary, in class AI we have the following integer invariants: 
\begin{equation}
\S_{\rm AI}=\begin{cases}
N_t\Mod N &: \eta=0,~N\text{ odd}, \\
N_t/2\Mod N/2 &: \eta=1,~N\text{ even}, \\
N_t-2L_t\Mod 2N&: N+\eta\text{ even}. 
\end{cases}
\label{ReductionAI}
\end{equation}

For class AII, we have the constraint that $L_t$ is an even multiple of $L_0$ which immediately implies that it vanishes modulo $N$. In addition, $N_t$ is constrained to be even and changes by $2N$ under adjoining leading to a single $\Z_N$ invariant
\beq
\S_{\rm AII} = N_t/2 \Mod N.
\eeq
The results of the previous analysis for the three classes for $\eta=0,1$ are summarized in Table~\ref{Classification}.

We stress here that the 0D classification obtained in Table~\ref{Classification} can only be used for TCIs protected by rotation only (which means that translation symmetry is not assumed). To obtain the classification for a given wallpaper or space group, translation symmetry has to be included. This is done by considering the 0D classification from Table~\ref{Classification}) relative to all non-general Wyckoff positions. The resulting classification is then given by the product of the 0D classification for all the distinct Wyckoff positions.

One subtlety about the resulting classification is that all phases built by filling the general Wyckoff position (which consists of an orbital at a generic spatial point and its images under the symmetries) are considered trivial. This feature follows from the adjoining operation in the layer construction. It is also consistent with the discussion of Sec.~\ref{DTR} where we argued that a topological response is only possible whenever the tight-binding orbitals do not coincide with the Wannier orbitals. For a general Wyckoff position, the Wannier orbitals can always be moved freely to coincide with any given set of tight-binding orbitals which means that it is impossible to obtain a non-trivial quantized topological response in such atomic insulators. 

As a result of this identification, the classification obtained here differs from classification schemes which include the general Wyckoff position \cite{DominicPoWatanabe} but this difference is easily identifiable. For example, the general interacting classification for class A in one dimension with inversion $\I^2 = 1$ and translation is $\Z \times \Z_2 \times \Z_4$. Here, the $\Z$ factor indicates the total number of bands, $\Z_2$ indicates the total angular momentum, and the $\Z_4$ factor corresponds to the 0D invariant for one of the two $\I$-invariant points (the invariant for the second point is fixed by these three numbers). Our classification is obtained by modding out by the subgroup of phases generated by filling the general Wyckoff position which corresponds to a filling of 2 and angular momentum of 1\footnote{The general Wyckoff position is given by a pair of orbitals $|x\rangle$ and $\I |x\rangle = e^{i\phi} |-x\rangle$. The superpositions $|x\rangle + \I |x\rangle$ and $|x\rangle - \I |x\rangle$ correspond to inversion eigenvalues $\pm 1$ leading to a total angular momentum of 1}. The resulting identification reduces the factor of $\Z \times \Z_2$ to $\Z_4$ (since $\Z \times \Z_2 /\langle(2,1)\rangle \simeq \Z_4$) leading to the classification $\Z_4^2$ which matches the result obtained by taking the 0D invariant at the two $\I$-invariant points.

We can now use these results combined with the atomic insulator expressions in Table~\ref{AtomicRepresentations} to obtain the classification for the shift insulator model in the presence of interactions. For this purpose, it suffices to consider the model with spinless time-reversal symmetry (any extra invariants due to broken time-reversal symmetry will be fixed by the time-reversal symmetric invariants). The Wyckoff positions $a$, $b$, and $c$ are invariant under $C_6$, $C_3$, and $C_2$, respectively. Therefore, we can associate to the shift insulator a $\nu = (\nu_a,\nu_b,\nu_c) \in \Z_{12} \times \Z_3 \times \Z_4$ invariant given by 
\begin{multline}
(\nu_a,\nu_b,\nu_c) \\= (N_a - 2 L_a \Mod 12, N_b \Mod 3, N_c - 2 L_c \Mod 4),
\end{multline}
where $N_x$ and $L_x$ denote the total filling and total angular momentum for position $x$. These invariants are given in the last column of Table~\ref{AtomicRepresentations}. For example, the shift insulator with $L_z=1$ and positive $t$ and $\lambda$ has the atomic expression $c_1 - a_3$. $\nu_a$ is given by the filling of position $a$ which is $-1$ minus twice the angular momentum of this position which is $-2 \times 3$ leading to $\nu_a = -1 + 6 = 5$. $\nu_b$ is obviously 0 while $\nu_c$ is $1 - 2 \times 1 = -1$ leading to $\nu = (5,0,1)$. This means that we need to add 12 copies of the model to neutralize this index and we have a $\Z_{12}$ classification. Another example is the case $L_z=0$ for positive $t$. In this case, $\nu_a = 3 - 0 = 3$, $\nu_b = -2$, and $\nu_c = 1 - 0 = 1$. Although the $\nu_a$ index can be neutralized by adding only 4 copies of the model, this will not be enough to neutralize all three invariants since $4\nu_b = 1 \,\Mod 3$. As a result, the minimum number of copies needed to neutralize all invariants is also 12 leading to a $\Z_{12}$ interaction classification in all cases. This agrees with the interacting classification for a very closely related model where we replace the sixfold rotation symmetry in the 2D shift insulator model with an internal $\Z_6$. Unlike the shift insulator, this model exhibits gapless edge modes and the interacting classification can be obtained by analyzing the stability of edge to interactions following Refs.~\onlinecite{Levin12, Gu14}. The analysis is performed in Appendix \ref{InternalZnShift} leading to a $\Z_{12}$ classification as anticipated \cite{thorngren2018gauging}. 

\section{Conclusion}
\label{Discussion}
We conclude the paper by briefly reviewing the main results and discussing how they relate to other recent results. 

In Sec.~\ref{Model}, we introduced the shift insulator model which is built by stacking two $C_6$-symmetric Haldane models with Chern numbers $\pm 1$ for $p_\pm$ orbitals. The model exhibits several gapped phases separated by gap closing phase transitions but none of them possesses any stable gapless surface states. The absence of gapless surface states should not be surprising in light of the fact the 1D or 0D surface states cannot be stabilized in the presence of spinless time-reversal symmetry alone (class AI) \cite{Teo10}.

Afterwards, we turned to investigating the topological response of the model in detail. Since our model is built from two Chern insulators with opposite Chern numbers and angular momenta, we expect the threading of a magnetic flux to be associated with angular momentum pumping but no charge pumping. The resulting response can be described by a mutual Chern-Simons term $-(S/2\pi) A \wedge \rmd B$ coupling the electromagnetic gauge field $A_{\mu}$ to an effective gauge field $B_{\mu}$ corresponding to $C_6$ rotations. The angular momentum pumping in response to magnetic flux can then be understood as a charge response of the $B$ field to the flux of the $A$ field. Due to the symmetry of the Chern-Simons term, we also expect a charge response of the $A$ field to the flux of the $B$ field. The latter is implemented through a rotational defect (disclination) which is then expected to host a fractional quantized charge.

These predictions were verified, both numerically and analytically, in Sec.~\ref{Response}. There, the charge response to a disclination and the angular momentum response to a magnetic monopole were numerically calculated, before deriving analytic expressions from both the lattice theory and a continuum theory. Afterwards, we wrote down an effective Chern-Simons theory which unifies the two topological responses discussed above, verifying our expectations. The coefficient $S$ of the mutual Chern-Simons term for the shift insulator model is determined to be $-2\sgn(\lambda)L_z+3\sgn(t)$ up to integer multiples of $6$. This value differs from the one expected in any possible atomic insulator with two electrons per unit cell for which $S$ modulo $6$ can only be $0$ or $2$, which signals the absence of a symmetric Wannier representation for our model. 

The question of Wannier representability was then investigated thoroughly in Sec.~\ref{Wannier}. There, we verified our expectation that there is an obstruction to finding a basis of symmetric Wannier states by showing that the symmetry irreps of the model at high symmetry momenta do not match the irreps of any possible atomic insulator with the same filling (2 per unit cell) and the same symmetries. This obstruction was, however, proven to be fragile by showing that the model admits a Wannier representation once we add a set of carefully chosen atomic orbitals to it (Table~\ref{AtomicRepresentations} and Figs.~\ref{Wannier1} and \ref{Wannier2}). The required orbitals as well as the resulting Wannier states after the addition of those orbitals turned out to be different for the different phases of the model. 

The fragility of the Wannier obstruction in our model raises a general question: are all rotation protected TCIs in 2D either atomic or fragile phases? and more generally, are all TCIs which do not possess stable gapless states either atomic or fragile? In Sec.~\ref{Layer}, we showed that the answer to both questions is yes. The proof employs the layer construction to first reduce the TCI to a 0D atomic insulator before using this atomic insulator to build a Wannier representation. This procedure sometimes involves the addition of some atomic orbitals reflecting that these TCIs could be fragile. One important implication of this result is that the two seemingly inequivalent ways to define topology in non-interacting crystalline systems via the existence of gapless anomalous surface states or the existence of stable Wannier obstructions are indeed equivalent.

The layer construction was then used to provide a general classification of rotation-protected 2D TCIs and investigate how this non-interacting classification is reduced in the presence of interactions. The latter problem is greatly simplified by dimensional reduction procedure which maps it to a 0D problem that can be solved using elementary arguments. The resulting interaction-reduced classification is used to deduce that the non-interacting $\Z$ classification of the shift insulator is reduced to $\Z_{12}$ in the presence of interactions. This is consistent with what one would expect in a model where sixfold rotation is replaced by a $\Z_6$ internal symmetry \cite{thorngren2018gauging}.

Before closing, it is worth noting that the existence of a quantized response is not an exclusive feature of stable or fragile phases. In fact, frozen polarization phases such as the inversion-protected 1D SSH chain exhibit a quantized dipole moment and edge charge. It is possible, however, to deduce that a phase is not equivalent to an atomic insulator for certain values of the quantized response and number of filled bands as we have shown in Sec.~\ref{TopologicalAtomic}. This can be done in general by enumerating all possible atomic insulators (by specifying the filling of each symmetry irrep of the little group $g_x$ for each Wyckoff position $x$) and calculating the response to all possible defects for each of these insulators. If the response of a given model does not coincide with the response of any atomic insulator, we can deduce that there is an obstruction to finding a set of symmetric Wannier states. In addition, we can determine whether this obstruction is stable or fragile from the presence or absence of surface states as shown in Sec.~\ref{Layer}.

Finally, we emphasize that our discussion of surface states was restricted to stable gapless states localized at the surface. Such requirement, which was used as the defining feature of surface states and ``higher-order topology'' in several works \cite{Langbehn17, Khalaf18, Geier18, Trifunovic18}, does {\it not} include models with edge or corner {\it charge} such as the quadrupole model of Ref.~\onlinecite{Benalcazar17}. In these models, zero energy corner states exist if we impose an additional particle-hole or chiral symmetry. However, in the absence of such symmetries, these states can be pushed in energy to overlap with the conduction or valence band where they hybridize with the bulk states. In this case, there are no eigenstates of the Hamiltonian localized at the corner but there is still quantized fractional charge localized there \cite{Rhim17}. Such charges were discussed recently in Ref.~\onlinecite{BenalcazarHughes} and the discussion mirrors our analysis of the disclination charge. In particular, an $n$-fold symmetric system with open boundaries with a total filling of N that has $n$ corners will host a charge of $(N-N_0)/n \mod 1$ per corner, where $N_0$ is the total filling at the center considered in the mapping to a 0D problem discussed in Sec.~\ref{Rotation}. This is required by the total charge neutrality (assuming there is an integer number of unit cells) and provides the relationship between the corner charge and the disclination charge given in Table \ref{Classification}. 

\begin{acknowledgements}
We are especially grateful to Dominic~V.~Else, Hoi~Chun~Po and Haruki~Watanabe for sharing a related manuscript \cite{DominicPoWatanabe} and for a discussion regarding Table \ref{Classification}. We also thank Yin-Chen~He, Michael~Hermele, David~Vanderbilt, Chong~Wang and Yi-Zhuang~You for helpful discussions. A.V. was supported by a Simons Investigator award and by NSF-DMR 1411343. 
\end{acknowledgements}
\vspace{\baselineskip}
\noindent{\it Note added}.---Recently, we became aware of two related papers \cite{BenalcazarHughes, DominicPoWatanabe}. Ref.~\onlinecite{BenalcazarHughes} studies corner charges in 2D TCIs protected by rotation symmetry, and their discussion regarding disclination charge overlaps with ours, while Ref.~\onlinecite{DominicPoWatanabe} discusses the effect of interaction on fragile topological phases and its results overlap with the discussion of Sec.~\ref{Rotation} in this paper.

\bibliography{refs.bib}

\begin{thebibliography}{81}%
\makeatletter
\providecommand \@ifxundefined [1]{%
 \@ifx{#1\undefined}
}%
\providecommand \@ifnum [1]{%
 \ifnum #1\expandafter \@firstoftwo
 \else \expandafter \@secondoftwo
 \fi
}%
\providecommand \@ifx [1]{%
 \ifx #1\expandafter \@firstoftwo
 \else \expandafter \@secondoftwo
 \fi
}%
\providecommand \natexlab [1]{#1}%
\providecommand \enquote  [1]{``#1''}%
\providecommand \bibnamefont  [1]{#1}%
\providecommand \bibfnamefont [1]{#1}%
\providecommand \citenamefont [1]{#1}%
\providecommand \href@noop [0]{\@secondoftwo}%
\providecommand \href [0]{\begingroup \@sanitize@url \@href}%
\providecommand \@href[1]{\@@startlink{#1}\@@href}%
\providecommand \@@href[1]{\endgroup#1\@@endlink}%
\providecommand \@sanitize@url [0]{\catcode `\\12\catcode `\$12\catcode
  `\&12\catcode `\#12\catcode `\^12\catcode `\_12\catcode `\%12\relax}%
\providecommand \@@startlink[1]{}%
\providecommand \@@endlink[0]{}%
\providecommand \url  [0]{\begingroup\@sanitize@url \@url }%
\providecommand \@url [1]{\endgroup\@href {#1}{\urlprefix }}%
\providecommand \urlprefix  [0]{URL }%
\providecommand \Eprint [0]{\href }%
\providecommand \doibase [0]{http://dx.doi.org/}%
\providecommand \selectlanguage [0]{\@gobble}%
\providecommand \bibinfo  [0]{\@secondoftwo}%
\providecommand \bibfield  [0]{\@secondoftwo}%
\providecommand \translation [1]{[#1]}%
\providecommand \BibitemOpen [0]{}%
\providecommand \bibitemStop [0]{}%
\providecommand \bibitemNoStop [0]{.\EOS\space}%
\providecommand \EOS [0]{\spacefactor3000\relax}%
\providecommand \BibitemShut  [1]{\csname bibitem#1\endcsname}%
\let\auto@bib@innerbib\@empty
\bibitem [{\citenamefont {Franz}\ and\ \citenamefont
  {Molenkamp}(2013)}]{Molenkamp13}%
  \BibitemOpen
  \bibinfo {editor} {\bibfnamefont {Marcel}\ \bibnamefont {Franz}}\ and\
  \bibinfo {editor} {\bibfnamefont {Laurens}\ \bibnamefont {Molenkamp}},\
  eds.,\ \href {\doibase http://dx.doi.org/10.1016/B978-0-444-63314-9.00012-3}
  {\emph {\bibinfo {title} {Topological Insulators}}},\ \bibinfo {series}
  {Contemporary Concepts of Condensed Matter Science}, Vol.~\bibinfo {volume}
  {6}\ (\bibinfo  {publisher} {Elsevier},\ \bibinfo {year} {2013})\BibitemShut
  {NoStop}%
\bibitem [{\citenamefont {Hasan}\ and\ \citenamefont {Kane}(2010)}]{Hasan10}%
  \BibitemOpen
  \bibfield  {author} {\bibinfo {author} {\bibfnamefont {M~Zahid}\ \bibnamefont
  {Hasan}}\ and\ \bibinfo {author} {\bibfnamefont {Charles~L}\ \bibnamefont
  {Kane}},\ }\bibfield  {title} {\enquote {\bibinfo {title} {Colloquium:
  topological insulators},}\ }\href@noop {} {\bibfield  {journal} {\bibinfo
  {journal} {\rmp}\ }\textbf {\bibinfo {volume} {82}},\ \bibinfo {pages} {3045}
  (\bibinfo {year} {2010})}\BibitemShut {NoStop}%
\bibitem [{\citenamefont {Moore}(2009)}]{Moore09}%
  \BibitemOpen
  \bibfield  {author} {\bibinfo {author} {\bibfnamefont {Joel}\ \bibnamefont
  {Moore}},\ }\bibfield  {title} {\enquote {\bibinfo {title} {Topological
  insulators: The next generation},}\ }\href@noop {} {\bibfield  {journal}
  {\bibinfo  {journal} {Nat. Phys.}\ }\textbf {\bibinfo {volume} {5}},\
  \bibinfo {pages} {378--380} (\bibinfo {year} {2009})}\BibitemShut {NoStop}%
\bibitem [{\citenamefont {Qi}\ and\ \citenamefont {Zhang}(2011)}]{Qi11}%
  \BibitemOpen
  \bibfield  {author} {\bibinfo {author} {\bibfnamefont {Xiao-Liang}\
  \bibnamefont {Qi}}\ and\ \bibinfo {author} {\bibfnamefont {Shou-Cheng}\
  \bibnamefont {Zhang}},\ }\bibfield  {title} {\enquote {\bibinfo {title}
  {Topological insulators and superconductors},}\ }\href@noop {} {\bibfield
  {journal} {\bibinfo  {journal} {\rmp}\ }\textbf {\bibinfo {volume} {83}},\
  \bibinfo {pages} {1057} (\bibinfo {year} {2011})}\BibitemShut {NoStop}%
\bibitem [{\citenamefont {Thonhauser}\ and\ \citenamefont
  {Vanderbilt}(2006)}]{Thonhauser06}%
  \BibitemOpen
  \bibfield  {author} {\bibinfo {author} {\bibfnamefont {T.}~\bibnamefont
  {Thonhauser}}\ and\ \bibinfo {author} {\bibfnamefont {David}\ \bibnamefont
  {Vanderbilt}},\ }\bibfield  {title} {\enquote {\bibinfo {title}
  {Insulator/{Chern}-insulator transition in the {Haldane} model},}\ }\href
  {\doibase 10.1103/PhysRevB.74.235111} {\bibfield  {journal} {\bibinfo
  {journal} {Phys. Rev. B}\ }\textbf {\bibinfo {volume} {74}},\ \bibinfo
  {pages} {235111} (\bibinfo {year} {2006})}\BibitemShut {NoStop}%
\bibitem [{\citenamefont {Soluyanov}\ and\ \citenamefont
  {Vanderbilt}(2011)}]{Soluyanov11}%
  \BibitemOpen
  \bibfield  {author} {\bibinfo {author} {\bibfnamefont {Alexey~A.}\
  \bibnamefont {Soluyanov}}\ and\ \bibinfo {author} {\bibfnamefont {David}\
  \bibnamefont {Vanderbilt}},\ }\bibfield  {title} {\enquote {\bibinfo {title}
  {Wannier representation of ${\mathbb{z}}_{2}$ topological insulators},}\
  }\href {\doibase 10.1103/PhysRevB.83.035108} {\bibfield  {journal} {\bibinfo
  {journal} {Phys. Rev. B}\ }\textbf {\bibinfo {volume} {83}},\ \bibinfo
  {pages} {035108} (\bibinfo {year} {2011})}\BibitemShut {NoStop}%
\bibitem [{\citenamefont {Marzari}\ \emph {et~al.}(2012)\citenamefont
  {Marzari}, \citenamefont {Mostofi}, \citenamefont {Yates}, \citenamefont
  {Souza},\ and\ \citenamefont {Vanderbilt}}]{Marzari12}%
  \BibitemOpen
  \bibfield  {author} {\bibinfo {author} {\bibfnamefont {Nicola}\ \bibnamefont
  {Marzari}}, \bibinfo {author} {\bibfnamefont {Arash~A.}\ \bibnamefont
  {Mostofi}}, \bibinfo {author} {\bibfnamefont {Jonathan~R.}\ \bibnamefont
  {Yates}}, \bibinfo {author} {\bibfnamefont {Ivo}\ \bibnamefont {Souza}}, \
  and\ \bibinfo {author} {\bibfnamefont {David}\ \bibnamefont {Vanderbilt}},\
  }\bibfield  {title} {\enquote {\bibinfo {title} {Maximally localized
  {Wannier} functions: Theory and applications},}\ }\href {\doibase
  10.1103/RevModPhys.84.1419} {\bibfield  {journal} {\bibinfo  {journal} {Rev.
  Mod. Phys.}\ }\textbf {\bibinfo {volume} {84}},\ \bibinfo {pages}
  {1419--1475} (\bibinfo {year} {2012})}\BibitemShut {NoStop}%
\bibitem [{\citenamefont {Laughlin}(1981)}]{laughlin1981quantized}%
  \BibitemOpen
  \bibfield  {author} {\bibinfo {author} {\bibfnamefont {R.~B.}\ \bibnamefont
  {Laughlin}},\ }\bibfield  {title} {\enquote {\bibinfo {title} {Quantized
  {Hall} conductivity in two dimensions},}\ }\href {\doibase
  10.1103/PhysRevB.23.5632} {\bibfield  {journal} {\bibinfo  {journal} {Phys.
  Rev. B}\ }\textbf {\bibinfo {volume} {23}},\ \bibinfo {pages} {5632--5633}
  (\bibinfo {year} {1981})}\BibitemShut {NoStop}%
\bibitem [{\citenamefont {Thouless}(1983)}]{thouless1983quantization}%
  \BibitemOpen
  \bibfield  {author} {\bibinfo {author} {\bibfnamefont {D.~J.}\ \bibnamefont
  {Thouless}},\ }\bibfield  {title} {\enquote {\bibinfo {title} {Quantization
  of particle transport},}\ }\href {\doibase 10.1103/PhysRevB.27.6083}
  {\bibfield  {journal} {\bibinfo  {journal} {Phys. Rev. B}\ }\textbf {\bibinfo
  {volume} {27}},\ \bibinfo {pages} {6083--6087} (\bibinfo {year}
  {1983})}\BibitemShut {NoStop}%
\bibitem [{\citenamefont {Fu}\ and\ \citenamefont {Kane}(2006)}]{fu2006time}%
  \BibitemOpen
  \bibfield  {author} {\bibinfo {author} {\bibfnamefont {Liang}\ \bibnamefont
  {Fu}}\ and\ \bibinfo {author} {\bibfnamefont {C.~L.}\ \bibnamefont {Kane}},\
  }\bibfield  {title} {\enquote {\bibinfo {title} {Time reversal polarization
  and a ${Z}_{2}$ adiabatic spin pump},}\ }\href {\doibase
  10.1103/PhysRevB.74.195312} {\bibfield  {journal} {\bibinfo  {journal} {Phys.
  Rev. B}\ }\textbf {\bibinfo {volume} {74}},\ \bibinfo {pages} {195312}
  (\bibinfo {year} {2006})}\BibitemShut {NoStop}%
\bibitem [{\citenamefont {Ran}\ \emph {et~al.}(2008)\citenamefont {Ran},
  \citenamefont {Vishwanath},\ and\ \citenamefont {Lee}}]{ran2008spin}%
  \BibitemOpen
  \bibfield  {author} {\bibinfo {author} {\bibfnamefont {Ying}\ \bibnamefont
  {Ran}}, \bibinfo {author} {\bibfnamefont {Ashvin}\ \bibnamefont
  {Vishwanath}}, \ and\ \bibinfo {author} {\bibfnamefont {Dung-Hai}\
  \bibnamefont {Lee}},\ }\bibfield  {title} {\enquote {\bibinfo {title}
  {Spin-charge separated solitons in a topological band insulator},}\ }\href
  {\doibase 10.1103/PhysRevLett.101.086801} {\bibfield  {journal} {\bibinfo
  {journal} {Phys. Rev. Lett.}\ }\textbf {\bibinfo {volume} {101}},\ \bibinfo
  {pages} {086801} (\bibinfo {year} {2008})}\BibitemShut {NoStop}%
\bibitem [{\citenamefont {Qi}\ \emph {et~al.}(2008)\citenamefont {Qi},
  \citenamefont {Hughes},\ and\ \citenamefont {Zhang}}]{qi2008topological}%
  \BibitemOpen
  \bibfield  {author} {\bibinfo {author} {\bibfnamefont {Xiao-Liang}\
  \bibnamefont {Qi}}, \bibinfo {author} {\bibfnamefont {Taylor~L.}\
  \bibnamefont {Hughes}}, \ and\ \bibinfo {author} {\bibfnamefont {Shou-Cheng}\
  \bibnamefont {Zhang}},\ }\bibfield  {title} {\enquote {\bibinfo {title}
  {Topological field theory of time-reversal invariant insulators},}\ }\href
  {\doibase 10.1103/PhysRevB.78.195424} {\bibfield  {journal} {\bibinfo
  {journal} {Phys. Rev. B}\ }\textbf {\bibinfo {volume} {78}},\ \bibinfo
  {pages} {195424} (\bibinfo {year} {2008})}\BibitemShut {NoStop}%
\bibitem [{\citenamefont {Fu}(2011)}]{Fu11}%
  \BibitemOpen
  \bibfield  {author} {\bibinfo {author} {\bibfnamefont {Liang}\ \bibnamefont
  {Fu}},\ }\bibfield  {title} {\enquote {\bibinfo {title} {Topological
  crystalline insulators},}\ }\href {\doibase 10.1103/PhysRevLett.106.106802}
  {\bibfield  {journal} {\bibinfo  {journal} {Phys. Rev. Lett.}\ }\textbf
  {\bibinfo {volume} {106}},\ \bibinfo {pages} {106802} (\bibinfo {year}
  {2011})}\BibitemShut {NoStop}%
\bibitem [{\citenamefont {Hsieh}\ \emph {et~al.}(2012)\citenamefont {Hsieh},
  \citenamefont {Lin}, \citenamefont {Liu}, \citenamefont {Duan}, \citenamefont
  {Bansil},\ and\ \citenamefont {Fu}}]{Hsieh12}%
  \BibitemOpen
  \bibfield  {author} {\bibinfo {author} {\bibfnamefont {Timothy~H.}\
  \bibnamefont {Hsieh}}, \bibinfo {author} {\bibfnamefont {Hsin}\ \bibnamefont
  {Lin}}, \bibinfo {author} {\bibfnamefont {Junwei}\ \bibnamefont {Liu}},
  \bibinfo {author} {\bibfnamefont {Wenhui}\ \bibnamefont {Duan}}, \bibinfo
  {author} {\bibfnamefont {Arun}\ \bibnamefont {Bansil}}, \ and\ \bibinfo
  {author} {\bibfnamefont {Liang}\ \bibnamefont {Fu}},\ }\bibfield  {title}
  {\enquote {\bibinfo {title} {Topological crystalline insulators in the {SnTe}
  material class},}\ }\href {http://dx.doi.org/10.1038/ncomms1969} {\bibfield
  {journal} {\bibinfo  {journal} {Nature Communications}\ }\textbf {\bibinfo
  {volume} {3}},\ \bibinfo {pages} {982} (\bibinfo {year} {2012})}\BibitemShut
  {NoStop}%
\bibitem [{\citenamefont {Tanaka}\ \emph {et~al.}(2012)\citenamefont {Tanaka},
  \citenamefont {Ren}, \citenamefont {Sato}, \citenamefont {Nakayama},
  \citenamefont {Souma}, \citenamefont {Takahashi}, \citenamefont {Segawa},\
  and\ \citenamefont {Ando}}]{Tanaka12}%
  \BibitemOpen
  \bibfield  {author} {\bibinfo {author} {\bibfnamefont {Y}~\bibnamefont
  {Tanaka}}, \bibinfo {author} {\bibfnamefont {Zhi}\ \bibnamefont {Ren}},
  \bibinfo {author} {\bibfnamefont {T}~\bibnamefont {Sato}}, \bibinfo {author}
  {\bibfnamefont {K}~\bibnamefont {Nakayama}}, \bibinfo {author} {\bibfnamefont
  {S}~\bibnamefont {Souma}}, \bibinfo {author} {\bibfnamefont {T}~\bibnamefont
  {Takahashi}}, \bibinfo {author} {\bibfnamefont {Kouji}\ \bibnamefont
  {Segawa}}, \ and\ \bibinfo {author} {\bibfnamefont {Yoichi}\ \bibnamefont
  {Ando}},\ }\bibfield  {title} {\enquote {\bibinfo {title} {Experimental
  realization of a topological crystalline insulator in {SnTe}},}\ }\href@noop
  {} {\bibfield  {journal} {\bibinfo  {journal} {Nature Physics}\ }\textbf
  {\bibinfo {volume} {8}},\ \bibinfo {pages} {800--803} (\bibinfo {year}
  {2012})}\BibitemShut {NoStop}%
\bibitem [{\citenamefont {Zhang}\ and\ \citenamefont
  {Liu}(2016)}]{zhang2016topological}%
  \BibitemOpen
  \bibfield  {author} {\bibinfo {author} {\bibfnamefont {Rui-Xing}\
  \bibnamefont {Zhang}}\ and\ \bibinfo {author} {\bibfnamefont {Chao-Xing}\
  \bibnamefont {Liu}},\ }\bibfield  {title} {\enquote {\bibinfo {title}
  {Topological invariants for three dimensional {Dirac} semimetals and four
  dimensional topological rotational insulators},}\ }\href
  {https://arxiv.org/abs/1605.04451} {\bibfield  {journal} {\bibinfo  {journal}
  {arXiv preprint arXiv:1605.04451}\ } (\bibinfo {year} {2016})}\BibitemShut
  {NoStop}%
\bibitem [{\citenamefont {Po}\ \emph {et~al.}(2018)\citenamefont {Po},
  \citenamefont {Watanabe},\ and\ \citenamefont {Vishwanath}}]{Po17fragile}%
  \BibitemOpen
  \bibfield  {author} {\bibinfo {author} {\bibfnamefont {Hoi~Chun}\
  \bibnamefont {Po}}, \bibinfo {author} {\bibfnamefont {Haruki}\ \bibnamefont
  {Watanabe}}, \ and\ \bibinfo {author} {\bibfnamefont {Ashvin}\ \bibnamefont
  {Vishwanath}},\ }\bibfield  {title} {\enquote {\bibinfo {title} {Fragile
  topology and {Wannier} obstructions},}\ }\href {\doibase
  10.1103/PhysRevLett.121.126402} {\bibfield  {journal} {\bibinfo  {journal}
  {Phys. Rev. Lett.}\ }\textbf {\bibinfo {volume} {121}},\ \bibinfo {pages}
  {126402} (\bibinfo {year} {2018})}\BibitemShut {NoStop}%
\bibitem [{\citenamefont {Po}\ \emph {et~al.}(2017{\natexlab{a}})\citenamefont
  {Po}, \citenamefont {Watanabe}, \citenamefont {Jian},\ and\ \citenamefont
  {Zaletel}}]{Po17Lattice}%
  \BibitemOpen
  \bibfield  {author} {\bibinfo {author} {\bibfnamefont {Hoi~Chun}\
  \bibnamefont {Po}}, \bibinfo {author} {\bibfnamefont {Haruki}\ \bibnamefont
  {Watanabe}}, \bibinfo {author} {\bibfnamefont {Chao-Ming}\ \bibnamefont
  {Jian}}, \ and\ \bibinfo {author} {\bibfnamefont {Michael~P.}\ \bibnamefont
  {Zaletel}},\ }\bibfield  {title} {\enquote {\bibinfo {title} {Lattice
  homotopy constraints on phases of quantum magnets},}\ }\href {\doibase
  10.1103/PhysRevLett.119.127202} {\bibfield  {journal} {\bibinfo  {journal}
  {Phys. Rev. Lett.}\ }\textbf {\bibinfo {volume} {119}},\ \bibinfo {pages}
  {127202} (\bibinfo {year} {2017}{\natexlab{a}})}\BibitemShut {NoStop}%
\bibitem [{\citenamefont {Bradlyn}\ \emph {et~al.}(2017)\citenamefont
  {Bradlyn}, \citenamefont {Elcoro}, \citenamefont {Cano}, \citenamefont
  {Vergniory}, \citenamefont {Wang}, \citenamefont {Felser}, \citenamefont
  {Aroyo},\ and\ \citenamefont {Bernevig}}]{Bradlyn17}%
  \BibitemOpen
  \bibfield  {author} {\bibinfo {author} {\bibfnamefont {Barry}\ \bibnamefont
  {Bradlyn}}, \bibinfo {author} {\bibfnamefont {L.}~\bibnamefont {Elcoro}},
  \bibinfo {author} {\bibfnamefont {Jennifer}\ \bibnamefont {Cano}}, \bibinfo
  {author} {\bibfnamefont {M.~G.}\ \bibnamefont {Vergniory}}, \bibinfo {author}
  {\bibfnamefont {Zhijun}\ \bibnamefont {Wang}}, \bibinfo {author}
  {\bibfnamefont {C.}~\bibnamefont {Felser}}, \bibinfo {author} {\bibfnamefont
  {M.~I.}\ \bibnamefont {Aroyo}}, \ and\ \bibinfo {author} {\bibfnamefont
  {B.~Andrei}\ \bibnamefont {Bernevig}},\ }\bibfield  {title} {\enquote
  {\bibinfo {title} {Topological quantum chemistry},}\ }\href
  {http://dx.doi.org/10.1038/nature23268} {\bibfield  {journal} {\bibinfo
  {journal} {Nature}\ }\textbf {\bibinfo {volume} {547}},\ \bibinfo {pages}
  {298} (\bibinfo {year} {2017})}\BibitemShut {NoStop}%
\bibitem [{\citenamefont {Po}\ \emph {et~al.}(2017{\natexlab{b}})\citenamefont
  {Po}, \citenamefont {Vishwanath},\ and\ \citenamefont {Watanabe}}]{Po17}%
  \BibitemOpen
  \bibfield  {author} {\bibinfo {author} {\bibfnamefont {Hoi~Chun}\
  \bibnamefont {Po}}, \bibinfo {author} {\bibfnamefont {Ashvin}\ \bibnamefont
  {Vishwanath}}, \ and\ \bibinfo {author} {\bibfnamefont {Haruki}\ \bibnamefont
  {Watanabe}},\ }\bibfield  {title} {\enquote {\bibinfo {title} {Symmetry-based
  indicators of band topology in the 230 space groups},}\ }\href
  {https://www.nature.com/articles/s41467-017-00133-2} {\bibfield  {journal}
  {\bibinfo  {journal} {Nature Communications}\ }\textbf {\bibinfo {volume}
  {8}},\ \bibinfo {pages} {50} (\bibinfo {year}
  {2017}{\natexlab{b}})}\BibitemShut {NoStop}%
\bibitem [{\citenamefont {Kitaev}(2009)}]{Kitaev09}%
  \BibitemOpen
  \bibfield  {author} {\bibinfo {author} {\bibfnamefont {Alexei}\ \bibnamefont
  {Kitaev}},\ }\bibfield  {title} {\enquote {\bibinfo {title} {Periodic table
  for topological insulators and superconductors},}\ }in\ \href@noop {} {\emph
  {\bibinfo {booktitle} {AIP Conference Proceedings}}},\ Vol.\ \bibinfo
  {volume} {1134}\ (\bibinfo {organization} {AIP},\ \bibinfo {year} {2009})\
  pp.\ \bibinfo {pages} {22--30}\BibitemShut {NoStop}%
\bibitem [{\citenamefont {Morimoto}\ and\ \citenamefont
  {Furusaki}(2013)}]{Morimoto13}%
  \BibitemOpen
  \bibfield  {author} {\bibinfo {author} {\bibfnamefont {Takahiro}\
  \bibnamefont {Morimoto}}\ and\ \bibinfo {author} {\bibfnamefont {Akira}\
  \bibnamefont {Furusaki}},\ }\bibfield  {title} {\enquote {\bibinfo {title}
  {Topological classification with additional symmetries from {Clifford}
  algebras},}\ }\href {\doibase 10.1103/PhysRevB.88.125129} {\bibfield
  {journal} {\bibinfo  {journal} {Phys. Rev. B}\ }\textbf {\bibinfo {volume}
  {88}},\ \bibinfo {pages} {125129} (\bibinfo {year} {2013})}\BibitemShut
  {NoStop}%
\bibitem [{\citenamefont {Chiu}\ and\ \citenamefont {Schnyder}(2014)}]{Chiu14}%
  \BibitemOpen
  \bibfield  {author} {\bibinfo {author} {\bibfnamefont {Ching-Kai}\
  \bibnamefont {Chiu}}\ and\ \bibinfo {author} {\bibfnamefont {Andreas~P.}\
  \bibnamefont {Schnyder}},\ }\bibfield  {title} {\enquote {\bibinfo {title}
  {Classification of reflection-symmetry-protected topological semimetals and
  nodal superconductors},}\ }\href {\doibase 10.1103/PhysRevB.90.205136}
  {\bibfield  {journal} {\bibinfo  {journal} {Phys. Rev. B}\ }\textbf {\bibinfo
  {volume} {90}},\ \bibinfo {pages} {205136} (\bibinfo {year}
  {2014})}\BibitemShut {NoStop}%
\bibitem [{\citenamefont {Lu}\ and\ \citenamefont {Lee}(2014)}]{Lu14}%
  \BibitemOpen
  \bibfield  {author} {\bibinfo {author} {\bibfnamefont {Yuan-Ming}\
  \bibnamefont {Lu}}\ and\ \bibinfo {author} {\bibfnamefont {Dung-Hai}\
  \bibnamefont {Lee}},\ }\bibfield  {title} {\enquote {\bibinfo {title}
  {Inversion symmetry protected topological insulators and superconductors},}\
  }\href@noop {} {\bibfield  {journal} {\bibinfo  {journal} {arXiv preprint
  arXiv:1403.5558}\ } (\bibinfo {year} {2014})}\BibitemShut {NoStop}%
\bibitem [{\citenamefont {Shiozaki}\ and\ \citenamefont
  {Sato}(2014)}]{Shiozaki14}%
  \BibitemOpen
  \bibfield  {author} {\bibinfo {author} {\bibfnamefont {Ken}\ \bibnamefont
  {Shiozaki}}\ and\ \bibinfo {author} {\bibfnamefont {Masatoshi}\ \bibnamefont
  {Sato}},\ }\bibfield  {title} {\enquote {\bibinfo {title} {Topology of
  crystalline insulators and superconductors},}\ }\href {\doibase
  10.1103/PhysRevB.90.165114} {\bibfield  {journal} {\bibinfo  {journal} {Phys.
  Rev. B}\ }\textbf {\bibinfo {volume} {90}},\ \bibinfo {pages} {165114}
  (\bibinfo {year} {2014})}\BibitemShut {NoStop}%
\bibitem [{\citenamefont {Kruthoff}\ \emph {et~al.}(2017)\citenamefont
  {Kruthoff}, \citenamefont {de~Boer}, \citenamefont {van Wezel}, \citenamefont
  {Kane},\ and\ \citenamefont {Slager}}]{Slager17}%
  \BibitemOpen
  \bibfield  {author} {\bibinfo {author} {\bibfnamefont {Jorrit}\ \bibnamefont
  {Kruthoff}}, \bibinfo {author} {\bibfnamefont {Jan}\ \bibnamefont {de~Boer}},
  \bibinfo {author} {\bibfnamefont {Jasper}\ \bibnamefont {van Wezel}},
  \bibinfo {author} {\bibfnamefont {Charles~L.}\ \bibnamefont {Kane}}, \ and\
  \bibinfo {author} {\bibfnamefont {Robert-Jan}\ \bibnamefont {Slager}},\
  }\bibfield  {title} {\enquote {\bibinfo {title} {Topological classification
  of crystalline insulators through band structure combinatorics},}\ }\href
  {\doibase 10.1103/PhysRevX.7.041069} {\bibfield  {journal} {\bibinfo
  {journal} {Phys. Rev. X}\ }\textbf {\bibinfo {volume} {7}},\ \bibinfo {pages}
  {041069} (\bibinfo {year} {2017})}\BibitemShut {NoStop}%
\bibitem [{\citenamefont {Shiozaki}\ \emph {et~al.}(2018)\citenamefont
  {Shiozaki}, \citenamefont {Sato},\ and\ \citenamefont {Gomi}}]{Shiozaki18}%
  \BibitemOpen
  \bibfield  {author} {\bibinfo {author} {\bibfnamefont {Ken}\ \bibnamefont
  {Shiozaki}}, \bibinfo {author} {\bibfnamefont {Masatoshi}\ \bibnamefont
  {Sato}}, \ and\ \bibinfo {author} {\bibfnamefont {Kiyonori}\ \bibnamefont
  {Gomi}},\ }\bibfield  {title} {\enquote {\bibinfo {title}
  {{Atiyah-Hirzebruch} spectral sequence in band topology: General formalism
  and topological invariants for 230 space groups},}\ }\href@noop {} {\bibfield
   {journal} {\bibinfo  {journal} {arXiv preprint arXiv:1802.06694}\ }
  (\bibinfo {year} {2018})}\BibitemShut {NoStop}%
\bibitem [{\citenamefont {Sitte}\ \emph {et~al.}(2012)\citenamefont {Sitte},
  \citenamefont {Rosch}, \citenamefont {Altman},\ and\ \citenamefont
  {Fritz}}]{Sitte12}%
  \BibitemOpen
  \bibfield  {author} {\bibinfo {author} {\bibfnamefont {M.}~\bibnamefont
  {Sitte}}, \bibinfo {author} {\bibfnamefont {A.}~\bibnamefont {Rosch}},
  \bibinfo {author} {\bibfnamefont {E.}~\bibnamefont {Altman}}, \ and\ \bibinfo
  {author} {\bibfnamefont {L.}~\bibnamefont {Fritz}},\ }\bibfield  {title}
  {\enquote {\bibinfo {title} {Topological insulators in magnetic fields:
  Quantum {Hall} effect and edge channels with a nonquantized
  $\ensuremath{\theta}$ term},}\ }\href {\doibase
  10.1103/PhysRevLett.108.126807} {\bibfield  {journal} {\bibinfo  {journal}
  {Phys. Rev. Lett.}\ }\textbf {\bibinfo {volume} {108}},\ \bibinfo {pages}
  {126807} (\bibinfo {year} {2012})}\BibitemShut {NoStop}%
\bibitem [{\citenamefont {Volovik}(2010)}]{Volovik10}%
  \BibitemOpen
  \bibfield  {author} {\bibinfo {author} {\bibfnamefont {G.~E.}\ \bibnamefont
  {Volovik}},\ }\bibfield  {title} {\enquote {\bibinfo {title} {Topological
  superfluid \textsuperscript{3}{He-B} in magnetic field and ising variable},}\
  }\href {\doibase 10.1134/S0021364010040090} {\bibfield  {journal} {\bibinfo
  {journal} {JETP Letters}\ }\textbf {\bibinfo {volume} {91}},\ \bibinfo
  {pages} {201--205} (\bibinfo {year} {2010})}\BibitemShut {NoStop}%
\bibitem [{\citenamefont {Teo}\ and\ \citenamefont {Kane}(2010)}]{Teo10}%
  \BibitemOpen
  \bibfield  {author} {\bibinfo {author} {\bibfnamefont {Jeffrey C.~Y.}\
  \bibnamefont {Teo}}\ and\ \bibinfo {author} {\bibfnamefont {C.~L.}\
  \bibnamefont {Kane}},\ }\bibfield  {title} {\enquote {\bibinfo {title}
  {Topological defects and gapless modes in insulators and superconductors},}\
  }\href {\doibase 10.1103/PhysRevB.82.115120} {\bibfield  {journal} {\bibinfo
  {journal} {Phys. Rev. B}\ }\textbf {\bibinfo {volume} {82}},\ \bibinfo
  {pages} {115120} (\bibinfo {year} {2010})}\BibitemShut {NoStop}%
\bibitem [{\citenamefont {Benalcazar}\ \emph {et~al.}(2014)\citenamefont
  {Benalcazar}, \citenamefont {Teo},\ and\ \citenamefont
  {Hughes}}]{Benalcazar14}%
  \BibitemOpen
  \bibfield  {author} {\bibinfo {author} {\bibfnamefont {Wladimir~A.}\
  \bibnamefont {Benalcazar}}, \bibinfo {author} {\bibfnamefont {Jeffrey C.~Y.}\
  \bibnamefont {Teo}}, \ and\ \bibinfo {author} {\bibfnamefont {Taylor~L.}\
  \bibnamefont {Hughes}},\ }\bibfield  {title} {\enquote {\bibinfo {title}
  {Classification of two-dimensional topological crystalline superconductors
  and {Majorana} bound states at disclinations},}\ }\href {\doibase
  10.1103/PhysRevB.89.224503} {\bibfield  {journal} {\bibinfo  {journal} {Phys.
  Rev. B}\ }\textbf {\bibinfo {volume} {89}},\ \bibinfo {pages} {224503}
  (\bibinfo {year} {2014})}\BibitemShut {NoStop}%
\bibitem [{\citenamefont {Benalcazar}\ \emph
  {et~al.}(2017{\natexlab{a}})\citenamefont {Benalcazar}, \citenamefont
  {Bernevig},\ and\ \citenamefont {Hughes}}]{Benalcazar17}%
  \BibitemOpen
  \bibfield  {author} {\bibinfo {author} {\bibfnamefont {Wladimir~A}\
  \bibnamefont {Benalcazar}}, \bibinfo {author} {\bibfnamefont {B~Andrei}\
  \bibnamefont {Bernevig}}, \ and\ \bibinfo {author} {\bibfnamefont {Taylor~L}\
  \bibnamefont {Hughes}},\ }\bibfield  {title} {\enquote {\bibinfo {title}
  {Quantized electric multipole insulators},}\ }\href@noop {} {\bibfield
  {journal} {\bibinfo  {journal} {Science}\ }\textbf {\bibinfo {volume}
  {357}},\ \bibinfo {pages} {61--66} (\bibinfo {year}
  {2017}{\natexlab{a}})}\BibitemShut {NoStop}%
\bibitem [{\citenamefont {Benalcazar}\ \emph
  {et~al.}(2017{\natexlab{b}})\citenamefont {Benalcazar}, \citenamefont
  {Bernevig},\ and\ \citenamefont {Hughes}}]{Benalcazar18}%
  \BibitemOpen
  \bibfield  {author} {\bibinfo {author} {\bibfnamefont {Wladimir~A.}\
  \bibnamefont {Benalcazar}}, \bibinfo {author} {\bibfnamefont {B.~Andrei}\
  \bibnamefont {Bernevig}}, \ and\ \bibinfo {author} {\bibfnamefont
  {Taylor~L.}\ \bibnamefont {Hughes}},\ }\bibfield  {title} {\enquote {\bibinfo
  {title} {Electric multipole moments, topological multipole moment pumping,
  and chiral hinge states in crystalline insulators},}\ }\href {\doibase
  10.1103/PhysRevB.96.245115} {\bibfield  {journal} {\bibinfo  {journal} {Phys.
  Rev. B}\ }\textbf {\bibinfo {volume} {96}},\ \bibinfo {pages} {245115}
  (\bibinfo {year} {2017}{\natexlab{b}})}\BibitemShut {NoStop}%
\bibitem [{\citenamefont {Song}\ \emph
  {et~al.}(2017{\natexlab{a}})\citenamefont {Song}, \citenamefont {Fang},\ and\
  \citenamefont {Fang}}]{Song17}%
  \BibitemOpen
  \bibfield  {author} {\bibinfo {author} {\bibfnamefont {Zhida}\ \bibnamefont
  {Song}}, \bibinfo {author} {\bibfnamefont {Zhong}\ \bibnamefont {Fang}}, \
  and\ \bibinfo {author} {\bibfnamefont {Chen}\ \bibnamefont {Fang}},\
  }\bibfield  {title} {\enquote {\bibinfo {title}
  {$(d\ensuremath{-}2)$-dimensional edge states of rotation symmetry protected
  topological states},}\ }\href {\doibase 10.1103/PhysRevLett.119.246402}
  {\bibfield  {journal} {\bibinfo  {journal} {Phys. Rev. Lett.}\ }\textbf
  {\bibinfo {volume} {119}},\ \bibinfo {pages} {246402} (\bibinfo {year}
  {2017}{\natexlab{a}})}\BibitemShut {NoStop}%
\bibitem [{\citenamefont {Schindler}\ \emph {et~al.}(2018)\citenamefont
  {Schindler}, \citenamefont {Cook}, \citenamefont {Vergniory}, \citenamefont
  {Wang}, \citenamefont {Parkin}, \citenamefont {Bernevig},\ and\ \citenamefont
  {Neupert}}]{Schindler17}%
  \BibitemOpen
  \bibfield  {author} {\bibinfo {author} {\bibfnamefont {Frank}\ \bibnamefont
  {Schindler}}, \bibinfo {author} {\bibfnamefont {Ashley~M.}\ \bibnamefont
  {Cook}}, \bibinfo {author} {\bibfnamefont {Maia~G.}\ \bibnamefont
  {Vergniory}}, \bibinfo {author} {\bibfnamefont {Zhijun}\ \bibnamefont
  {Wang}}, \bibinfo {author} {\bibfnamefont {Stuart S.~P.}\ \bibnamefont
  {Parkin}}, \bibinfo {author} {\bibfnamefont {B.~Andrei}\ \bibnamefont
  {Bernevig}}, \ and\ \bibinfo {author} {\bibfnamefont {Titus}\ \bibnamefont
  {Neupert}},\ }\bibfield  {title} {\enquote {\bibinfo {title} {Higher-order
  topological insulators},}\ }\href
  {http://advances.sciencemag.org/content/4/6/eaat0346} {\bibfield  {journal}
  {\bibinfo  {journal} {Science Advances}\ }\textbf {\bibinfo {volume} {4}},\
  \bibinfo {pages} {eaat0346} (\bibinfo {year} {2018})}\BibitemShut {NoStop}%
\bibitem [{\citenamefont {Langbehn}\ \emph {et~al.}(2017)\citenamefont
  {Langbehn}, \citenamefont {Peng}, \citenamefont {Trifunovic}, \citenamefont
  {von Oppen},\ and\ \citenamefont {Brouwer}}]{Langbehn17}%
  \BibitemOpen
  \bibfield  {author} {\bibinfo {author} {\bibfnamefont {Josias}\ \bibnamefont
  {Langbehn}}, \bibinfo {author} {\bibfnamefont {Yang}\ \bibnamefont {Peng}},
  \bibinfo {author} {\bibfnamefont {Luka}\ \bibnamefont {Trifunovic}}, \bibinfo
  {author} {\bibfnamefont {Felix}\ \bibnamefont {von Oppen}}, \ and\ \bibinfo
  {author} {\bibfnamefont {Piet~W.}\ \bibnamefont {Brouwer}},\ }\bibfield
  {title} {\enquote {\bibinfo {title} {Reflection-symmetric second-order
  topological insulators and superconductors},}\ }\href {\doibase
  10.1103/PhysRevLett.119.246401} {\bibfield  {journal} {\bibinfo  {journal}
  {Phys. Rev. Lett.}\ }\textbf {\bibinfo {volume} {119}},\ \bibinfo {pages}
  {246401} (\bibinfo {year} {2017})}\BibitemShut {NoStop}%
\bibitem [{\citenamefont {Fang}\ and\ \citenamefont {Fu}(2017)}]{Fang17}%
  \BibitemOpen
  \bibfield  {author} {\bibinfo {author} {\bibfnamefont {Chen}\ \bibnamefont
  {Fang}}\ and\ \bibinfo {author} {\bibfnamefont {Liang}\ \bibnamefont {Fu}},\
  }\bibfield  {title} {\enquote {\bibinfo {title} {Rotation anomaly and
  topological crystalline insulators},}\ }\href
  {https://arxiv.org/abs/1709.01929} {\bibfield  {journal} {\bibinfo  {journal}
  {arXiv preprint arXiv:1709.01929}\ } (\bibinfo {year} {2017})}\BibitemShut
  {NoStop}%
\bibitem [{\citenamefont {Khalaf}\ \emph {et~al.}(2018)\citenamefont {Khalaf},
  \citenamefont {Po}, \citenamefont {Vishwanath},\ and\ \citenamefont
  {Watanabe}}]{Khalaf17}%
  \BibitemOpen
  \bibfield  {author} {\bibinfo {author} {\bibfnamefont {Eslam}\ \bibnamefont
  {Khalaf}}, \bibinfo {author} {\bibfnamefont {Hoi~Chun}\ \bibnamefont {Po}},
  \bibinfo {author} {\bibfnamefont {Ashvin}\ \bibnamefont {Vishwanath}}, \ and\
  \bibinfo {author} {\bibfnamefont {Haruki}\ \bibnamefont {Watanabe}},\
  }\bibfield  {title} {\enquote {\bibinfo {title} {Symmetry indicators and
  anomalous surface states of topological crystalline insulators},}\ }\href
  {\doibase 10.1103/PhysRevX.8.031070} {\bibfield  {journal} {\bibinfo
  {journal} {Phys. Rev. X}\ }\textbf {\bibinfo {volume} {8}},\ \bibinfo {pages}
  {031070} (\bibinfo {year} {2018})}\BibitemShut {NoStop}%
\bibitem [{\citenamefont {Khalaf}(2018)}]{Khalaf18}%
  \BibitemOpen
  \bibfield  {author} {\bibinfo {author} {\bibfnamefont {Eslam}\ \bibnamefont
  {Khalaf}},\ }\bibfield  {title} {\enquote {\bibinfo {title} {Higher-order
  topological insulators and superconductors protected by inversion
  symmetry},}\ }\href {\doibase 10.1103/PhysRevB.97.205136} {\bibfield
  {journal} {\bibinfo  {journal} {Phys. Rev. B}\ }\textbf {\bibinfo {volume}
  {97}},\ \bibinfo {pages} {205136} (\bibinfo {year} {2018})}\BibitemShut
  {NoStop}%
\bibitem [{\citenamefont {Geier}\ \emph {et~al.}(2018)\citenamefont {Geier},
  \citenamefont {Trifunovic}, \citenamefont {Hoskam},\ and\ \citenamefont
  {Brouwer}}]{Geier18}%
  \BibitemOpen
  \bibfield  {author} {\bibinfo {author} {\bibfnamefont {Max}\ \bibnamefont
  {Geier}}, \bibinfo {author} {\bibfnamefont {Luka}\ \bibnamefont
  {Trifunovic}}, \bibinfo {author} {\bibfnamefont {Max}\ \bibnamefont
  {Hoskam}}, \ and\ \bibinfo {author} {\bibfnamefont {Piet~W.}\ \bibnamefont
  {Brouwer}},\ }\bibfield  {title} {\enquote {\bibinfo {title} {Second-order
  topological insulators and superconductors with an order-two crystalline
  symmetry},}\ }\href {\doibase 10.1103/PhysRevB.97.205135} {\bibfield
  {journal} {\bibinfo  {journal} {Phys. Rev. B}\ }\textbf {\bibinfo {volume}
  {97}},\ \bibinfo {pages} {205135} (\bibinfo {year} {2018})}\BibitemShut
  {NoStop}%
\bibitem [{\citenamefont {Trifunovic}\ and\ \citenamefont
  {Brouwer}(2019)}]{Trifunovic18}%
  \BibitemOpen
  \bibfield  {author} {\bibinfo {author} {\bibfnamefont {Luka}\ \bibnamefont
  {Trifunovic}}\ and\ \bibinfo {author} {\bibfnamefont {Piet~W.}\ \bibnamefont
  {Brouwer}},\ }\bibfield  {title} {\enquote {\bibinfo {title} {Higher-order
  bulk-boundary correspondence for topological crystalline phases},}\ }\href
  {\doibase 10.1103/PhysRevX.9.011012} {\bibfield  {journal} {\bibinfo
  {journal} {Phys. Rev. X}\ }\textbf {\bibinfo {volume} {9}},\ \bibinfo {pages}
  {011012} (\bibinfo {year} {2019})}\BibitemShut {NoStop}%
\bibitem [{\citenamefont {van Miert}\ and\ \citenamefont
  {Ortix}(2018)}]{vanMiert18}%
  \BibitemOpen
  \bibfield  {author} {\bibinfo {author} {\bibfnamefont {Guido}\ \bibnamefont
  {van Miert}}\ and\ \bibinfo {author} {\bibfnamefont {Carmine}\ \bibnamefont
  {Ortix}},\ }\bibfield  {title} {\enquote {\bibinfo {title} {Higher-order
  topological insulators protected by inversion and rotoinversion
  symmetries},}\ }\href {\doibase 10.1103/PhysRevB.98.081110} {\bibfield
  {journal} {\bibinfo  {journal} {Phys. Rev. B}\ }\textbf {\bibinfo {volume}
  {98}},\ \bibinfo {pages} {081110} (\bibinfo {year} {2018})}\BibitemShut
  {NoStop}%
\bibitem [{\citenamefont {Watanabe}\ \emph {et~al.}(2018)\citenamefont
  {Watanabe}, \citenamefont {Po},\ and\ \citenamefont
  {Vishwanath}}]{Watanabe17}%
  \BibitemOpen
  \bibfield  {author} {\bibinfo {author} {\bibfnamefont {Haruki}\ \bibnamefont
  {Watanabe}}, \bibinfo {author} {\bibfnamefont {Hoi~Chun}\ \bibnamefont {Po}},
  \ and\ \bibinfo {author} {\bibfnamefont {Ashvin}\ \bibnamefont
  {Vishwanath}},\ }\bibfield  {title} {\enquote {\bibinfo {title} {Structure
  and topology of band structures in the 1651 magnetic space groups},}\
  }\href@noop {} {\bibfield  {journal} {\bibinfo  {journal} {Science advances}\
  }\textbf {\bibinfo {volume} {4}},\ \bibinfo {pages} {eaat8685} (\bibinfo
  {year} {2018})}\BibitemShut {NoStop}%
\bibitem [{\citenamefont {Cano}\ \emph {et~al.}(2018)\citenamefont {Cano},
  \citenamefont {Bradlyn}, \citenamefont {Wang}, \citenamefont {Elcoro},
  \citenamefont {Vergniory}, \citenamefont {Felser}, \citenamefont {Aroyo},\
  and\ \citenamefont {Bernevig}}]{Cano18}%
  \BibitemOpen
  \bibfield  {author} {\bibinfo {author} {\bibfnamefont {Jennifer}\
  \bibnamefont {Cano}}, \bibinfo {author} {\bibfnamefont {Barry}\ \bibnamefont
  {Bradlyn}}, \bibinfo {author} {\bibfnamefont {Zhijun}\ \bibnamefont {Wang}},
  \bibinfo {author} {\bibfnamefont {L.}~\bibnamefont {Elcoro}}, \bibinfo
  {author} {\bibfnamefont {M.~G.}\ \bibnamefont {Vergniory}}, \bibinfo {author}
  {\bibfnamefont {C.}~\bibnamefont {Felser}}, \bibinfo {author} {\bibfnamefont
  {M.~I.}\ \bibnamefont {Aroyo}}, \ and\ \bibinfo {author} {\bibfnamefont
  {B.~Andrei}\ \bibnamefont {Bernevig}},\ }\bibfield  {title} {\enquote
  {\bibinfo {title} {Topology of disconnected elementary band
  representations},}\ }\href {\doibase 10.1103/PhysRevLett.120.266401}
  {\bibfield  {journal} {\bibinfo  {journal} {Phys. Rev. Lett.}\ }\textbf
  {\bibinfo {volume} {120}},\ \bibinfo {pages} {266401} (\bibinfo {year}
  {2018})}\BibitemShut {NoStop}%
\bibitem [{\citenamefont {Bouhon}\ \emph {et~al.}(2018)\citenamefont {Bouhon},
  \citenamefont {Black-Schaffer},\ and\ \citenamefont {Slager}}]{Bouhon18}%
  \BibitemOpen
  \bibfield  {author} {\bibinfo {author} {\bibfnamefont {Adrien}\ \bibnamefont
  {Bouhon}}, \bibinfo {author} {\bibfnamefont {Annica~M}\ \bibnamefont
  {Black-Schaffer}}, \ and\ \bibinfo {author} {\bibfnamefont {Robert-Jan}\
  \bibnamefont {Slager}},\ }\bibfield  {title} {\enquote {\bibinfo {title}
  {Wilson loop approach to topological crystalline insulators with time
  reversal symmetry},}\ }\href@noop {} {\bibfield  {journal} {\bibinfo
  {journal} {arXiv preprint arXiv:1804.09719}\ } (\bibinfo {year}
  {2018})}\BibitemShut {NoStop}%
\bibitem [{\citenamefont {Song}\ \emph
  {et~al.}(2017{\natexlab{b}})\citenamefont {Song}, \citenamefont {Huang},
  \citenamefont {Fu},\ and\ \citenamefont {Hermele}}]{Hermele17}%
  \BibitemOpen
  \bibfield  {author} {\bibinfo {author} {\bibfnamefont {Hao}\ \bibnamefont
  {Song}}, \bibinfo {author} {\bibfnamefont {Sheng-Jie}\ \bibnamefont {Huang}},
  \bibinfo {author} {\bibfnamefont {Liang}\ \bibnamefont {Fu}}, \ and\ \bibinfo
  {author} {\bibfnamefont {Michael}\ \bibnamefont {Hermele}},\ }\bibfield
  {title} {\enquote {\bibinfo {title} {Topological phases protected by point
  group symmetry},}\ }\href {\doibase 10.1103/PhysRevX.7.011020} {\bibfield
  {journal} {\bibinfo  {journal} {Phys. Rev. X}\ }\textbf {\bibinfo {volume}
  {7}},\ \bibinfo {pages} {011020} (\bibinfo {year}
  {2017}{\natexlab{b}})}\BibitemShut {NoStop}%
\bibitem [{\citenamefont {Huang}\ \emph {et~al.}(2017)\citenamefont {Huang},
  \citenamefont {Song}, \citenamefont {Huang},\ and\ \citenamefont
  {Hermele}}]{Huang17}%
  \BibitemOpen
  \bibfield  {author} {\bibinfo {author} {\bibfnamefont {Sheng-Jie}\
  \bibnamefont {Huang}}, \bibinfo {author} {\bibfnamefont {Hao}\ \bibnamefont
  {Song}}, \bibinfo {author} {\bibfnamefont {Yi-Ping}\ \bibnamefont {Huang}}, \
  and\ \bibinfo {author} {\bibfnamefont {Michael}\ \bibnamefont {Hermele}},\
  }\bibfield  {title} {\enquote {\bibinfo {title} {Building crystalline
  topological phases from lower-dimensional states},}\ }\href {\doibase
  10.1103/PhysRevB.96.205106} {\bibfield  {journal} {\bibinfo  {journal} {Phys.
  Rev. B}\ }\textbf {\bibinfo {volume} {96}},\ \bibinfo {pages} {205106}
  (\bibinfo {year} {2017})}\BibitemShut {NoStop}%
\bibitem [{\citenamefont {Fulga}\ \emph {et~al.}(2016)\citenamefont {Fulga},
  \citenamefont {Avraham}, \citenamefont {Beidenkopf},\ and\ \citenamefont
  {Stern}}]{Fulga16}%
  \BibitemOpen
  \bibfield  {author} {\bibinfo {author} {\bibfnamefont {I.~C.}\ \bibnamefont
  {Fulga}}, \bibinfo {author} {\bibfnamefont {N.}~\bibnamefont {Avraham}},
  \bibinfo {author} {\bibfnamefont {H.}~\bibnamefont {Beidenkopf}}, \ and\
  \bibinfo {author} {\bibfnamefont {A.}~\bibnamefont {Stern}},\ }\bibfield
  {title} {\enquote {\bibinfo {title} {Coupled-layer description of topological
  crystalline insulators},}\ }\href {\doibase 10.1103/PhysRevB.94.125405}
  {\bibfield  {journal} {\bibinfo  {journal} {Phys. Rev. B}\ }\textbf {\bibinfo
  {volume} {94}},\ \bibinfo {pages} {125405} (\bibinfo {year}
  {2016})}\BibitemShut {NoStop}%
\bibitem [{\citenamefont {Wen}\ and\ \citenamefont {Zee}(1992)}]{Wen92}%
  \BibitemOpen
  \bibfield  {author} {\bibinfo {author} {\bibfnamefont {X.~G.}\ \bibnamefont
  {Wen}}\ and\ \bibinfo {author} {\bibfnamefont {A.}~\bibnamefont {Zee}},\
  }\bibfield  {title} {\enquote {\bibinfo {title} {Shift and spin vector: New
  topological quantum numbers for the {Hall} fluids},}\ }\href {\doibase
  10.1103/PhysRevLett.69.953} {\bibfield  {journal} {\bibinfo  {journal} {Phys.
  Rev. Lett.}\ }\textbf {\bibinfo {volume} {69}},\ \bibinfo {pages} {953--956}
  (\bibinfo {year} {1992})}\BibitemShut {NoStop}%
\bibitem [{\citenamefont {Kane}\ and\ \citenamefont
  {Mele}(2005{\natexlab{a}})}]{Kane05a}%
  \BibitemOpen
  \bibfield  {author} {\bibinfo {author} {\bibfnamefont {Charles~L}\
  \bibnamefont {Kane}}\ and\ \bibinfo {author} {\bibfnamefont {Eugene~J}\
  \bibnamefont {Mele}},\ }\bibfield  {title} {\enquote {\bibinfo {title}
  {Quantum spin {Hall} effect in graphene},}\ }\href@noop {} {\bibfield
  {journal} {\bibinfo  {journal} {\prl}\ }\textbf {\bibinfo {volume} {95}},\
  \bibinfo {pages} {226801} (\bibinfo {year} {2005}{\natexlab{a}})}\BibitemShut
  {NoStop}%
\bibitem [{\citenamefont {Kane}\ and\ \citenamefont
  {Mele}(2005{\natexlab{b}})}]{Kane05b}%
  \BibitemOpen
  \bibfield  {author} {\bibinfo {author} {\bibfnamefont {Charles~L}\
  \bibnamefont {Kane}}\ and\ \bibinfo {author} {\bibfnamefont {Eugene~J}\
  \bibnamefont {Mele}},\ }\bibfield  {title} {\enquote {\bibinfo {title}
  {{$Z_2$} topological order and the quantum spin {Hall} effect},}\ }\href@noop
  {} {\bibfield  {journal} {\bibinfo  {journal} {\prl}\ }\textbf {\bibinfo
  {volume} {95}},\ \bibinfo {pages} {146802} (\bibinfo {year}
  {2005}{\natexlab{b}})}\BibitemShut {NoStop}%
\bibitem [{\citenamefont {R\"uegg}\ \emph {et~al.}(2013)\citenamefont
  {R\"uegg}, \citenamefont {Coh},\ and\ \citenamefont {Moore}}]{Coh2013}%
  \BibitemOpen
  \bibfield  {author} {\bibinfo {author} {\bibfnamefont {Andreas}\ \bibnamefont
  {R\"uegg}}, \bibinfo {author} {\bibfnamefont {Sinisa}\ \bibnamefont {Coh}}, \
  and\ \bibinfo {author} {\bibfnamefont {Joel~E.}\ \bibnamefont {Moore}},\
  }\bibfield  {title} {\enquote {\bibinfo {title} {Corner states of topological
  fullerenes},}\ }\href {\doibase 10.1103/PhysRevB.88.155127} {\bibfield
  {journal} {\bibinfo  {journal} {Phys. Rev. B}\ }\textbf {\bibinfo {volume}
  {88}},\ \bibinfo {pages} {155127} (\bibinfo {year} {2013})}\BibitemShut
  {NoStop}%
\bibitem [{\citenamefont {Haldane}(1988)}]{Haldane88}%
  \BibitemOpen
  \bibfield  {author} {\bibinfo {author} {\bibfnamefont {F.~D.~M.}\
  \bibnamefont {Haldane}},\ }\bibfield  {title} {\enquote {\bibinfo {title}
  {Model for a quantum {Hall} effect without {Landau} levels: Condensed-matter
  realization of the "parity anomaly"},}\ }\href {\doibase
  10.1103/PhysRevLett.61.2015} {\bibfield  {journal} {\bibinfo  {journal}
  {Phys. Rev. Lett.}\ }\textbf {\bibinfo {volume} {61}},\ \bibinfo {pages}
  {2015--2018} (\bibinfo {year} {1988})}\BibitemShut {NoStop}%
\bibitem [{\citenamefont {Lammert}\ and\ \citenamefont
  {Crespi}(2000)}]{graphenecone}%
  \BibitemOpen
  \bibfield  {author} {\bibinfo {author} {\bibfnamefont {Paul~E.}\ \bibnamefont
  {Lammert}}\ and\ \bibinfo {author} {\bibfnamefont {Vincent~H.}\ \bibnamefont
  {Crespi}},\ }\bibfield  {title} {\enquote {\bibinfo {title} {Topological
  phases in graphitic cones},}\ }\href {\doibase 10.1103/PhysRevLett.85.5190}
  {\bibfield  {journal} {\bibinfo  {journal} {Phys. Rev. Lett.}\ }\textbf
  {\bibinfo {volume} {85}},\ \bibinfo {pages} {5190--5193} (\bibinfo {year}
  {2000})}\BibitemShut {NoStop}%
\bibitem [{\citenamefont {R\"uegg}\ and\ \citenamefont
  {Lin}(2013)}]{haldanemodelcone}%
  \BibitemOpen
  \bibfield  {author} {\bibinfo {author} {\bibfnamefont {Andreas}\ \bibnamefont
  {R\"uegg}}\ and\ \bibinfo {author} {\bibfnamefont {Chungwei}\ \bibnamefont
  {Lin}},\ }\bibfield  {title} {\enquote {\bibinfo {title} {Bound states of
  conical singularities in graphene-based topological insulators},}\ }\href
  {\doibase 10.1103/PhysRevLett.110.046401} {\bibfield  {journal} {\bibinfo
  {journal} {Phys. Rev. Lett.}\ }\textbf {\bibinfo {volume} {110}},\ \bibinfo
  {pages} {046401} (\bibinfo {year} {2013})}\BibitemShut {NoStop}%
\bibitem [{\citenamefont {Brouder}\ \emph {et~al.}(2007)\citenamefont
  {Brouder}, \citenamefont {Panati}, \citenamefont {Calandra}, \citenamefont
  {Mourougane},\ and\ \citenamefont {Marzari}}]{Brouder07}%
  \BibitemOpen
  \bibfield  {author} {\bibinfo {author} {\bibfnamefont {Christian}\
  \bibnamefont {Brouder}}, \bibinfo {author} {\bibfnamefont {Gianluca}\
  \bibnamefont {Panati}}, \bibinfo {author} {\bibfnamefont {Matteo}\
  \bibnamefont {Calandra}}, \bibinfo {author} {\bibfnamefont {Christophe}\
  \bibnamefont {Mourougane}}, \ and\ \bibinfo {author} {\bibfnamefont {Nicola}\
  \bibnamefont {Marzari}},\ }\bibfield  {title} {\enquote {\bibinfo {title}
  {Exponential localization of {Wannier} functions in insulators},}\ }\href
  {\doibase 10.1103/PhysRevLett.98.046402} {\bibfield  {journal} {\bibinfo
  {journal} {Phys. Rev. Lett.}\ }\textbf {\bibinfo {volume} {98}},\ \bibinfo
  {pages} {046402} (\bibinfo {year} {2007})}\BibitemShut {NoStop}%
\bibitem [{\citenamefont {Marzari}\ and\ \citenamefont
  {Vanderbilt}(1997)}]{Marzari97}%
  \BibitemOpen
  \bibfield  {author} {\bibinfo {author} {\bibfnamefont {Nicola}\ \bibnamefont
  {Marzari}}\ and\ \bibinfo {author} {\bibfnamefont {David}\ \bibnamefont
  {Vanderbilt}},\ }\bibfield  {title} {\enquote {\bibinfo {title} {Maximally
  localized generalized {Wannier} functions for composite energy bands},}\
  }\href {\doibase 10.1103/PhysRevB.56.12847} {\bibfield  {journal} {\bibinfo
  {journal} {Phys. Rev. B}\ }\textbf {\bibinfo {volume} {56}},\ \bibinfo
  {pages} {12847--12865} (\bibinfo {year} {1997})}\BibitemShut {NoStop}%
\bibitem [{\citenamefont {Hahn}(2005)}]{Wyckoff}%
  \BibitemOpen
  \bibfield  {author} {\bibinfo {author} {\bibfnamefont {Theo}\ \bibnamefont
  {Hahn}},\ }\href@noop {} {\emph {\bibinfo {title} {International tables for
  crystallography, Vol. A. space-group symmetry}}}\ (\bibinfo  {publisher}
  {Springer},\ \bibinfo {year} {2005})\BibitemShut {NoStop}%
\bibitem [{\citenamefont {Song}\ \emph
  {et~al.}(2018{\natexlab{a}})\citenamefont {Song}, \citenamefont {Zhang},\
  and\ \citenamefont {Fang}}]{Fang17semimetals}%
  \BibitemOpen
  \bibfield  {author} {\bibinfo {author} {\bibfnamefont {Zhida}\ \bibnamefont
  {Song}}, \bibinfo {author} {\bibfnamefont {Tiantian}\ \bibnamefont {Zhang}},
  \ and\ \bibinfo {author} {\bibfnamefont {Chen}\ \bibnamefont {Fang}},\
  }\bibfield  {title} {\enquote {\bibinfo {title} {Diagnosis for nonmagnetic
  topological semimetals in the absence of spin-orbital coupling},}\ }\href
  {\doibase 10.1103/PhysRevX.8.031069} {\bibfield  {journal} {\bibinfo
  {journal} {Phys. Rev. X}\ }\textbf {\bibinfo {volume} {8}},\ \bibinfo {pages}
  {031069} (\bibinfo {year} {2018}{\natexlab{a}})}\BibitemShut {NoStop}%
\bibitem [{\citenamefont {Fang}\ \emph {et~al.}(2012)\citenamefont {Fang},
  \citenamefont {Gilbert},\ and\ \citenamefont {Bernevig}}]{Fang12}%
  \BibitemOpen
  \bibfield  {author} {\bibinfo {author} {\bibfnamefont {Chen}\ \bibnamefont
  {Fang}}, \bibinfo {author} {\bibfnamefont {Matthew~J.}\ \bibnamefont
  {Gilbert}}, \ and\ \bibinfo {author} {\bibfnamefont {B.~Andrei}\ \bibnamefont
  {Bernevig}},\ }\bibfield  {title} {\enquote {\bibinfo {title} {Bulk
  topological invariants in noninteracting point group symmetric insulators},}\
  }\href {\doibase 10.1103/PhysRevB.86.115112} {\bibfield  {journal} {\bibinfo
  {journal} {Phys. Rev. B}\ }\textbf {\bibinfo {volume} {86}},\ \bibinfo
  {pages} {115112} (\bibinfo {year} {2012})}\BibitemShut {NoStop}%
\bibitem [{\citenamefont {Alexandradinata}\ \emph {et~al.}(2014)\citenamefont
  {Alexandradinata}, \citenamefont {Dai},\ and\ \citenamefont
  {Bernevig}}]{Alexandradinata14}%
  \BibitemOpen
  \bibfield  {author} {\bibinfo {author} {\bibfnamefont {A.}~\bibnamefont
  {Alexandradinata}}, \bibinfo {author} {\bibfnamefont {Xi}~\bibnamefont
  {Dai}}, \ and\ \bibinfo {author} {\bibfnamefont {B.~Andrei}\ \bibnamefont
  {Bernevig}},\ }\bibfield  {title} {\enquote {\bibinfo {title} {Wilson-loop
  characterization of inversion-symmetric topological insulators},}\ }\href
  {\doibase 10.1103/PhysRevB.89.155114} {\bibfield  {journal} {\bibinfo
  {journal} {Phys. Rev. B}\ }\textbf {\bibinfo {volume} {89}},\ \bibinfo
  {pages} {155114} (\bibinfo {year} {2014})}\BibitemShut {NoStop}%
\bibitem [{\citenamefont {Alexandradinata}\ and\ \citenamefont
  {H\"oller}(2018)}]{Alexandradinata18}%
  \BibitemOpen
  \bibfield  {author} {\bibinfo {author} {\bibfnamefont {A.}~\bibnamefont
  {Alexandradinata}}\ and\ \bibinfo {author} {\bibfnamefont {J.}~\bibnamefont
  {H\"oller}},\ }\bibfield  {title} {\enquote {\bibinfo {title} {No-go theorem
  for topological insulators and high-throughput identification of {Chern}
  insulators},}\ }\href {\doibase 10.1103/PhysRevB.98.184305} {\bibfield
  {journal} {\bibinfo  {journal} {Phys. Rev. B}\ }\textbf {\bibinfo {volume}
  {98}},\ \bibinfo {pages} {184305} (\bibinfo {year} {2018})}\BibitemShut
  {NoStop}%
\bibitem [{\citenamefont {Turner}\ \emph {et~al.}(2012)\citenamefont {Turner},
  \citenamefont {Zhang}, \citenamefont {Mong},\ and\ \citenamefont
  {Vishwanath}}]{Turner12}%
  \BibitemOpen
  \bibfield  {author} {\bibinfo {author} {\bibfnamefont {Ari~M.}\ \bibnamefont
  {Turner}}, \bibinfo {author} {\bibfnamefont {Yi}~\bibnamefont {Zhang}},
  \bibinfo {author} {\bibfnamefont {Roger S.~K.}\ \bibnamefont {Mong}}, \ and\
  \bibinfo {author} {\bibfnamefont {Ashvin}\ \bibnamefont {Vishwanath}},\
  }\bibfield  {title} {\enquote {\bibinfo {title} {Quantized response and
  topology of magnetic insulators with inversion symmetry},}\ }\href {\doibase
  10.1103/PhysRevB.85.165120} {\bibfield  {journal} {\bibinfo  {journal} {Phys.
  Rev. B}\ }\textbf {\bibinfo {volume} {85}},\ \bibinfo {pages} {165120}
  (\bibinfo {year} {2012})}\BibitemShut {NoStop}%
\bibitem [{\citenamefont {Ezawa}(2016)}]{Ezawa16}%
  \BibitemOpen
  \bibfield  {author} {\bibinfo {author} {\bibfnamefont {Motohiko}\
  \bibnamefont {Ezawa}},\ }\bibfield  {title} {\enquote {\bibinfo {title}
  {Hourglass fermion surface states in stacked topological insulators with
  nonsymmorphic symmetry},}\ }\href {\doibase 10.1103/PhysRevB.94.155148}
  {\bibfield  {journal} {\bibinfo  {journal} {Phys. Rev. B}\ }\textbf {\bibinfo
  {volume} {94}},\ \bibinfo {pages} {155148} (\bibinfo {year}
  {2016})}\BibitemShut {NoStop}%
\bibitem [{\citenamefont {Song}\ \emph
  {et~al.}(2018{\natexlab{b}})\citenamefont {Song}, \citenamefont {Zhang},
  \citenamefont {Fang},\ and\ \citenamefont {Fang}}]{SongFang17}%
  \BibitemOpen
  \bibfield  {author} {\bibinfo {author} {\bibfnamefont {Zhida}\ \bibnamefont
  {Song}}, \bibinfo {author} {\bibfnamefont {Tiantian}\ \bibnamefont {Zhang}},
  \bibinfo {author} {\bibfnamefont {Zhong}\ \bibnamefont {Fang}}, \ and\
  \bibinfo {author} {\bibfnamefont {Chen}\ \bibnamefont {Fang}},\ }\bibfield
  {title} {\enquote {\bibinfo {title} {Quantitative mappings between symmetry
  and topology in solids},}\ }\href@noop {} {\bibfield  {journal} {\bibinfo
  {journal} {Nature communications}\ }\textbf {\bibinfo {volume} {9}},\
  \bibinfo {pages} {3530} (\bibinfo {year} {2018}{\natexlab{b}})}\BibitemShut
  {NoStop}%
\bibitem [{\citenamefont {Schnyder}\ \emph {et~al.}(2009)\citenamefont
  {Schnyder}, \citenamefont {Ryu}, \citenamefont {Furusaki},\ and\
  \citenamefont {Ludwig}}]{Schnyder09}%
  \BibitemOpen
  \bibfield  {author} {\bibinfo {author} {\bibfnamefont {Andreas~P.}\
  \bibnamefont {Schnyder}}, \bibinfo {author} {\bibfnamefont {Shinsei}\
  \bibnamefont {Ryu}}, \bibinfo {author} {\bibfnamefont {Akira}\ \bibnamefont
  {Furusaki}}, \ and\ \bibinfo {author} {\bibfnamefont {Andreas W.~W.}\
  \bibnamefont {Ludwig}},\ }\bibfield  {title} {\enquote {\bibinfo {title}
  {Classification of topological insulators and superconductors},}\ }in\
  \href@noop {} {\emph {\bibinfo {booktitle} {AIP Conference Proceedings}}},\
  Vol.\ \bibinfo {volume} {1134}\ (\bibinfo {organization} {AIP},\ \bibinfo
  {year} {2009})\ pp.\ \bibinfo {pages} {10--21}\BibitemShut {NoStop}%
\bibitem [{\citenamefont {Ryu}\ \emph {et~al.}(2010)\citenamefont {Ryu},
  \citenamefont {Schnyder}, \citenamefont {Furusaki},\ and\ \citenamefont
  {Ludwig}}]{Ryu10}%
  \BibitemOpen
  \bibfield  {author} {\bibinfo {author} {\bibfnamefont {Shinsei}\ \bibnamefont
  {Ryu}}, \bibinfo {author} {\bibfnamefont {Andreas~P.}\ \bibnamefont
  {Schnyder}}, \bibinfo {author} {\bibfnamefont {Akira}\ \bibnamefont
  {Furusaki}}, \ and\ \bibinfo {author} {\bibfnamefont {Andreas W.~W.}\
  \bibnamefont {Ludwig}},\ }\bibfield  {title} {\enquote {\bibinfo {title}
  {Topological insulators and superconductors: tenfold way and dimensional
  hierarchy},}\ }\href@noop {} {\bibfield  {journal} {\bibinfo  {journal} {New
  J. Phys.}\ }\textbf {\bibinfo {volume} {12}},\ \bibinfo {pages} {065010}
  (\bibinfo {year} {2010})}\BibitemShut {NoStop}%
\bibitem [{\citenamefont {Chiu}\ \emph {et~al.}(2016)\citenamefont {Chiu},
  \citenamefont {Teo}, \citenamefont {Schnyder},\ and\ \citenamefont
  {Ryu}}]{Chiu16}%
  \BibitemOpen
  \bibfield  {author} {\bibinfo {author} {\bibfnamefont {Ching-Kai}\
  \bibnamefont {Chiu}}, \bibinfo {author} {\bibfnamefont {Jeffrey C.~Y.}\
  \bibnamefont {Teo}}, \bibinfo {author} {\bibfnamefont {Andreas~P.}\
  \bibnamefont {Schnyder}}, \ and\ \bibinfo {author} {\bibfnamefont {Shinsei}\
  \bibnamefont {Ryu}},\ }\bibfield  {title} {\enquote {\bibinfo {title}
  {Classification of topological quantum matter with symmetries},}\ }\href
  {\doibase 10.1103/RevModPhys.88.035005} {\bibfield  {journal} {\bibinfo
  {journal} {Rev. Mod. Phys.}\ }\textbf {\bibinfo {volume} {88}},\ \bibinfo
  {pages} {035005} (\bibinfo {year} {2016})}\BibitemShut {NoStop}%
\bibitem [{\citenamefont {Isobe}\ and\ \citenamefont {Fu}(2015)}]{Isobe15}%
  \BibitemOpen
  \bibfield  {author} {\bibinfo {author} {\bibfnamefont {Hiroki}\ \bibnamefont
  {Isobe}}\ and\ \bibinfo {author} {\bibfnamefont {Liang}\ \bibnamefont {Fu}},\
  }\bibfield  {title} {\enquote {\bibinfo {title} {Theory of interacting
  topological crystalline insulators},}\ }\href {\doibase
  10.1103/PhysRevB.92.081304} {\bibfield  {journal} {\bibinfo  {journal} {Phys.
  Rev. B}\ }\textbf {\bibinfo {volume} {92}},\ \bibinfo {pages} {081304}
  (\bibinfo {year} {2015})}\BibitemShut {NoStop}%
\bibitem [{\citenamefont {Else}\ \emph {et~al.}(2019)\citenamefont {Else},
  \citenamefont {Po},\ and\ \citenamefont {Watanabe}}]{DominicPoWatanabe}%
  \BibitemOpen
  \bibfield  {author} {\bibinfo {author} {\bibfnamefont {Dominic~V.}\
  \bibnamefont {Else}}, \bibinfo {author} {\bibfnamefont {Hoi~Chun}\
  \bibnamefont {Po}}, \ and\ \bibinfo {author} {\bibfnamefont {Haruki}\
  \bibnamefont {Watanabe}},\ }\bibfield  {title} {\enquote {\bibinfo {title}
  {Fragile topological phases in interacting systems},}\ }\href {\doibase
  10.1103/PhysRevB.99.125122} {\bibfield  {journal} {\bibinfo  {journal} {Phys.
  Rev. B}\ }\textbf {\bibinfo {volume} {99}},\ \bibinfo {pages} {125122}
  (\bibinfo {year} {2019})}\BibitemShut {NoStop}%
\bibitem [{\citenamefont {Levin}\ and\ \citenamefont {Stern}(2012)}]{Levin12}%
  \BibitemOpen
  \bibfield  {author} {\bibinfo {author} {\bibfnamefont {Michael}\ \bibnamefont
  {Levin}}\ and\ \bibinfo {author} {\bibfnamefont {Ady}\ \bibnamefont
  {Stern}},\ }\bibfield  {title} {\enquote {\bibinfo {title} {Classification
  and analysis of two-dimensional abelian fractional topological insulators},}\
  }\href {\doibase 10.1103/PhysRevB.86.115131} {\bibfield  {journal} {\bibinfo
  {journal} {Phys. Rev. B}\ }\textbf {\bibinfo {volume} {86}},\ \bibinfo
  {pages} {115131} (\bibinfo {year} {2012})}\BibitemShut {NoStop}%
\bibitem [{\citenamefont {Gu}\ and\ \citenamefont {Levin}(2014)}]{Gu14}%
  \BibitemOpen
  \bibfield  {author} {\bibinfo {author} {\bibfnamefont {Zheng-Cheng}\
  \bibnamefont {Gu}}\ and\ \bibinfo {author} {\bibfnamefont {Michael}\
  \bibnamefont {Levin}},\ }\bibfield  {title} {\enquote {\bibinfo {title}
  {Effect of interactions on two-dimensional fermionic symmetry-protected
  topological phases with ${Z}_{2}$ symmetry},}\ }\href {\doibase
  10.1103/PhysRevB.89.201113} {\bibfield  {journal} {\bibinfo  {journal} {Phys.
  Rev. B}\ }\textbf {\bibinfo {volume} {89}},\ \bibinfo {pages} {201113}
  (\bibinfo {year} {2014})}\BibitemShut {NoStop}%
\bibitem [{\citenamefont {Thorngren}\ and\ \citenamefont
  {Else}(2018)}]{thorngren2018gauging}%
  \BibitemOpen
  \bibfield  {author} {\bibinfo {author} {\bibfnamefont {Ryan}\ \bibnamefont
  {Thorngren}}\ and\ \bibinfo {author} {\bibfnamefont {Dominic~V.}\
  \bibnamefont {Else}},\ }\bibfield  {title} {\enquote {\bibinfo {title}
  {Gauging spatial symmetries and the classification of topological crystalline
  phases},}\ }\href {\doibase 10.1103/PhysRevX.8.011040} {\bibfield  {journal}
  {\bibinfo  {journal} {Phys. Rev. X}\ }\textbf {\bibinfo {volume} {8}},\
  \bibinfo {pages} {011040} (\bibinfo {year} {2018})}\BibitemShut {NoStop}%
\bibitem [{\citenamefont {Rhim}\ \emph {et~al.}(2017)\citenamefont {Rhim},
  \citenamefont {Behrends},\ and\ \citenamefont {Bardarson}}]{Rhim17}%
  \BibitemOpen
  \bibfield  {author} {\bibinfo {author} {\bibfnamefont {Jun-Won}\ \bibnamefont
  {Rhim}}, \bibinfo {author} {\bibfnamefont {Jan}\ \bibnamefont {Behrends}}, \
  and\ \bibinfo {author} {\bibfnamefont {Jens~H.}\ \bibnamefont {Bardarson}},\
  }\bibfield  {title} {\enquote {\bibinfo {title} {Bulk-boundary correspondence
  from the intercellular zak phase},}\ }\href {\doibase
  10.1103/PhysRevB.95.035421} {\bibfield  {journal} {\bibinfo  {journal} {Phys.
  Rev. B}\ }\textbf {\bibinfo {volume} {95}},\ \bibinfo {pages} {035421}
  (\bibinfo {year} {2017})}\BibitemShut {NoStop}%
\bibitem [{\citenamefont {Benalcazar}\ \emph {et~al.}(2018)\citenamefont
  {Benalcazar}, \citenamefont {Li},\ and\ \citenamefont
  {Hughes}}]{BenalcazarHughes}%
  \BibitemOpen
  \bibfield  {author} {\bibinfo {author} {\bibfnamefont {Wladimir~A}\
  \bibnamefont {Benalcazar}}, \bibinfo {author} {\bibfnamefont {Tianhe}\
  \bibnamefont {Li}}, \ and\ \bibinfo {author} {\bibfnamefont {Taylor~L}\
  \bibnamefont {Hughes}},\ }\bibfield  {title} {\enquote {\bibinfo {title}
  {Quantization of fractional corner charge in $ c\_n $-symmetric topological
  crystalline insulators},}\ }\href@noop {} {\bibfield  {journal} {\bibinfo
  {journal} {arXiv preprint arXiv:1809.02142}\ } (\bibinfo {year}
  {2018})}\BibitemShut {NoStop}%
\bibitem [{\citenamefont {Berry}\ and\ \citenamefont
  {Mondragon}(1987)}]{berryinfmass}%
  \BibitemOpen
  \bibfield  {author} {\bibinfo {author} {\bibfnamefont {Michael~V.}\
  \bibnamefont {Berry}}\ and\ \bibinfo {author} {\bibfnamefont {R.~J.}\
  \bibnamefont {Mondragon}},\ }\bibfield  {title} {\enquote {\bibinfo {title}
  {Neutrino billiards: time-reversal symmetry-breaking without magnetic
  fields},}\ }\href {\doibase 10.1098/rspa.1987.0080} {\bibfield  {journal}
  {\bibinfo  {journal} {Proceedings of the Royal Society of London A:
  Mathematical, Physical and Engineering Sciences}\ }\textbf {\bibinfo {volume}
  {412}},\ \bibinfo {pages} {53--74} (\bibinfo {year} {1987})}\BibitemShut
  {NoStop}%
\bibitem [{\citenamefont {Akhmerov}\ and\ \citenamefont
  {Beenakker}(2008)}]{akhmerov2008boundary}%
  \BibitemOpen
  \bibfield  {author} {\bibinfo {author} {\bibfnamefont {A.~R.}\ \bibnamefont
  {Akhmerov}}\ and\ \bibinfo {author} {\bibfnamefont {C.~W.~J.}\ \bibnamefont
  {Beenakker}},\ }\bibfield  {title} {\enquote {\bibinfo {title} {Boundary
  conditions for {Dirac} fermions on a terminated honeycomb lattice},}\ }\href
  {\doibase 10.1103/PhysRevB.77.085423} {\bibfield  {journal} {\bibinfo
  {journal} {Phys. Rev. B}\ }\textbf {\bibinfo {volume} {77}},\ \bibinfo
  {pages} {085423} (\bibinfo {year} {2008})}\BibitemShut {NoStop}%
\bibitem [{\citenamefont {Roy}\ and\ \citenamefont {Stone}(2010)}]{roy&stone}%
  \BibitemOpen
  \bibfield  {author} {\bibinfo {author} {\bibfnamefont {Abhishek}\
  \bibnamefont {Roy}}\ and\ \bibinfo {author} {\bibfnamefont {Michael}\
  \bibnamefont {Stone}},\ }\bibfield  {title} {\enquote {\bibinfo {title}
  {Fullerenes, zero-modes and self-adjoint extensions},}\ }\href
  {http://stacks.iop.org/1751-8121/43/i=1/a=015203} {\bibfield  {journal}
  {\bibinfo  {journal} {Journal of Physics A: Mathematical and Theoretical}\
  }\textbf {\bibinfo {volume} {43}},\ \bibinfo {pages} {015203} (\bibinfo
  {year} {2010})}\BibitemShut {NoStop}%
\bibitem [{\citenamefont {Wen}(2004)}]{wen2004book}%
  \BibitemOpen
  \bibfield  {author} {\bibinfo {author} {\bibfnamefont {Xiao-Gang}\
  \bibnamefont {Wen}},\ }\href@noop {} {\emph {\bibinfo {title} {Quantum field
  theory of many-body systems: from the origin of sound to an origin of light
  and electrons}}}\ (\bibinfo  {publisher} {Oxford University Press},\ \bibinfo
  {address} {Oxford},\ \bibinfo {year} {2004})\BibitemShut {NoStop}%
\bibitem [{\citenamefont {Gross}\ \emph {et~al.}(1985)\citenamefont {Gross},
  \citenamefont {Harvey}, \citenamefont {Martinec},\ and\ \citenamefont
  {Rohm}}]{gross1985heterotic}%
  \BibitemOpen
  \bibfield  {author} {\bibinfo {author} {\bibfnamefont {David~J.}\
  \bibnamefont {Gross}}, \bibinfo {author} {\bibfnamefont {Jeffrey~A.}\
  \bibnamefont {Harvey}}, \bibinfo {author} {\bibfnamefont {Emil}\ \bibnamefont
  {Martinec}}, \ and\ \bibinfo {author} {\bibfnamefont {Ryan}\ \bibnamefont
  {Rohm}},\ }\bibfield  {title} {\enquote {\bibinfo {title} {Heterotic string
  theory (i). the free heterotic string},}\ }\href {\doibase
  https://doi.org/10.1016/0550-3213(85)90394-3} {\bibfield  {journal} {\bibinfo
   {journal} {Nuclear Physics B}\ }\textbf {\bibinfo {volume} {256}},\ \bibinfo
  {pages} {253 -- 284} (\bibinfo {year} {1985})}\BibitemShut {NoStop}%
\bibitem [{\citenamefont {Haldane}(1995)}]{haldane1995stability}%
  \BibitemOpen
  \bibfield  {author} {\bibinfo {author} {\bibfnamefont {F.~D.~M.}\
  \bibnamefont {Haldane}},\ }\bibfield  {title} {\enquote {\bibinfo {title}
  {Stability of chiral {Luttinger} liquids and abelian quantum {Hall}
  states},}\ }\href {\doibase 10.1103/PhysRevLett.74.2090} {\bibfield
  {journal} {\bibinfo  {journal} {Phys. Rev. Lett.}\ }\textbf {\bibinfo
  {volume} {74}},\ \bibinfo {pages} {2090--2093} (\bibinfo {year}
  {1995})}\BibitemShut {NoStop}%
\end{thebibliography}%

\appendix
\section{Edge theory}
\label{Edge}
In this appendix, we explain the details of the derivation of the edge theory presented in Sec.~\ref{EdgeTheory}. Our starting point is the continuum Hamiltonian (\ref{HD}) where the edge is implemented by taking the mass parameter $m_0$ to change spatially $m_0 \rightarrow M(\br)$ such that $M(\br) = m_0$ deep inside the sample and $M(\br) = -m_0$ outside it. We decompose the momentum as $\bk = k_t \bt + k_n \bn$ with $\bt$ denoting the unit vector along the tangent to the edge $\bt = (-\sin \varphi, \cos \varphi, 0)$ and denote by $\lambda$ the spatial direction normal to the edge in the plane. Making the substitution $k_n \rightarrow -i \partial_\lambda$, we get
\begin{align}
&\H = - i v_F \bn \cdot \tilde \bsigma \partial_\lambda + v_F k_t \bt \cdot \tilde \bsigma - M(\lambda) \sigma_z \gamma_z \tau_z, \nonumber \\ 
&\tilde \bsigma = (\sigma_x \gamma_z, \sigma_y, \sigma_z \gamma_z). 
\label{Hdl}
\end{align}
We now seek solutions of Eq.~\eqref{Hdl} which are exponentially localized to the edge region. This is achieved by the ansatz
\beq
\Psi(k_t,\lambda) = e^{-\frac{1}{v_F} \int_0^\lambda d\lambda' M(\lambda')} \psi(k_t),
\eeq
which gives
\begin{multline}
\H \Psi(k_t,\lambda) =\\  [-M(\lambda)\sigma_z \gamma_z \tau_z (1 - i \sigma_z \gamma_z \tau_z\bn \cdot \tilde \bsigma) + v_F k_t \bt \cdot \tilde \bsigma] \Psi(k_t,\lambda).
\end{multline}
The first term in the square brackets can be eliminated by choosing $\psi(k_t)$ to satisfy $\psi(k_t) = \tilde P \psi(k_t)$ with the projection operator $\tilde P$ defined as
\beq
\tilde P = \frac{1}{2}(1 + i \sigma_z \gamma_z \tau_z\bn \cdot \tilde \bsigma) = \frac{1}{2} (1 + \cos \varphi \, \sigma_y \tau_z - \sin \varphi \, \sigma_x \gamma_z \tau_z).
\eeq
The projection operator can be diagonalized by introducing the unitary rotation matrix
\beq
U = e^{i \frac{\pi}{4} (\cos \varphi \sigma_y - \sin \varphi \sigma_x \gamma_z) \bn \cdot \btau} e^{i \frac{\pi}{4} \bn \cdot \btau} ,
\eeq
leading to
\beq
P = U^\dagger \tilde P U = \frac{1}{2} (1 - \tau_z).
\eeq
Applying the rotation $U$ followed by the projection $P$, we obtain the edge Hamiltonian
\begin{multline}
P U^\dagger \H U P \rightarrow \H_{\rm edge} = \\ v_F (k_x \sin \varphi - k_y \cos \varphi) (\sigma_x \gamma_z \sin \varphi - \sigma_y \cos \varphi) .
\end{multline}
Here, the arrow indicates picking out the non-zero block of the Hamiltonian. The edge Hamiltonian can be simplified when written in terms of the tangent vector $\bt$ leading to Eq.~\ref{Hedge}.

\section{Continuum Theory of Haldane Model Disclinations}\label{HaldaneContinuumTheory}
\subsection{Infinite mass boundary condition}
In the continuum theory approach to Haldane model disclinations, we will be using an \emph{infinite mass boundary condition} \cite{berryinfmass} which we now briefly review. 

Consider two general Dirac theories
\begin{align}
H_m&=v\boldsymbol{\alpha}\cdot\vec{p}+m\beta, \\
H_M&=v\boldsymbol{\alpha}\cdot\vec{p}+M\beta'
\end{align}
sitting at two sides of a boundary, where $\{\alpha_i,\beta\}$ and $\{\alpha_i,\beta'\}$ are two sets of anticommuting gamma matrices satisfying $\alpha_i^2=\beta^2={\beta'}^2=1$. We assume $v$, $m$ and $M$ to be positive, which is always possible by a proper definition of $\boldsymbol{\alpha}$, $\beta$ and $\beta'$. 
We can regard the $H_M$ side as the ``outside'' and send $M\rightarrow+\infty$. This imposes a boundary condition on wave functions at the $H_m$ side, which we now derive. 

For a given boundary point, we define a normal vector $\bn$ pointing towards the $H_M$ side and a perpendicular tangent vector $\bt$ (pointing towards either of the two directions). We also define a Cartesian coordinate system $(\lambda,\mu)$ such that $\lambda$ and $\mu$ are along $\bn$ and $\bt$, respectively, and that the boundary point has coordinates $(0,0)$. The eigenvalue equation can now be written as 
\begin{align}
&[v(\alpha_t p_t+\alpha_n p_n)+m\beta]\psi=E\psi, \\
&[v(\alpha_t p_t-\alpha_n p_{-n})+M\beta']\psi=E\psi, \label{Mequation}
\end{align}
and the second equation implies 
\begin{align}
\left( -\rmi\frac{v}{M}\alpha_n\alpha_t p_t+\frac{v}{M}\partial_{-\lambda}-\rmi\alpha_n\beta' \right)\psi=(-\rmi\alpha_n)\frac{E}{M}\psi. 
\end{align}
We assume that a limit $(\tilde\psi,\tilde E)=\lim_{M\rightarrow+\infty}(\psi,E)$ exists and that $\tilde{\psi},\tilde E,\lim_{M\rightarrow+\infty} p_t\psi$ are all finite. Also assume for a moment that $\tilde{\psi}(0,0)\neq 0$. Then in the above equation, only the second and third terms at the left-hand side can have significant contribution, or more precisely, 
\begin{equation}
\lim_{M\rightarrow+\infty}\left(\frac{v}{M}\partial_{-\lambda}-\rmi\alpha_n\beta'\right)\psi=0. 
\end{equation}
This implies that $\tilde\psi$ cannot have any component of eigenvalue $-1$ when decomposed into eigenvectors of $\rmi\alpha_n\beta'$, otherwise $\tilde\psi$ will blow up at $\lambda>0$. In other words, we must have
\begin{equation}
\rmi\alpha_n\beta'\tilde{\psi}=\tilde{\psi}
\label{InfiniteMassBdryCondition}
\end{equation}
at boundary points where $\tilde\psi\neq 0$. Since this linear equation is trivially satisfied when $\tilde\psi=0$, it actually applies to all boundary points, and this is exactly the infinite mass boundary condition that we are looking for. We emphasize that this boundary condition is \emph{not} the most general boundary condition \cite{akhmerov2008boundary,roy&stone}. We will drop the tilde on $\psi$ hereafter. 

As a consistency check, we note from Ref. \onlinecite{akhmerov2008boundary} that the infinite mass boundary condition \eqref{InfiniteMassBdryCondition} together with the fact $\{\rmi\alpha_n\beta',\alpha_n\}=0$ imply $\psi^\dagger\alpha_n\psi=0$ for any $\psi$, which means there is no outgoing current at the boundary and guarantees the Hermiticity of the Hamiltonian. 

Following Ref.~\onlinecite{haldanemodelcone}, we postulate that Haldane model disclinations can be described by an infinite mass boundary condition at the disclination hole. Requiring zero Chern number and the $C_6$ rotation symmetry pins down the mass term gamma matrix in the hole to $\sigma_x\gamma_x$ up to a sign. This term can in fact be realized at lattice level as the Fries-Kekul\'{e} structure \cite{roy&stone}. Now we need to decide the sign of this mass term. We would like to demand the relative sign between the mass term and the Fermi velocity to be fixed. This guarantees that, if we reverse the sign of the Haldane model Hamiltonian, the whole field theory spectrum is also reversed. We then write
\begin{equation}
H_\text{hole}=v_F(p_x\sigma_x\gamma_z+p_y\sigma_y)-\sgn(v_F)M\sigma_x\gamma_x. 
\end{equation}
The sign of $M$ is still unfixed, and we have to compare the field theory solution with the actual spectrum to remove this ambiguity. In fact from our analysis in Sec.~\ref{LargerHole}, both signs of $M$ are allowed depending on the atomic detail of the disclination hole. In the minimal hole case, $M\rightarrow+\infty$ turns out to be the good choice. We therefore obtain the boundary condition
\begin{equation}
(\gamma_y,-\sigma_z\gamma_x)\cdot\bn\psi=\psi.  
\end{equation}

\subsection{The continuum theory}
In the branch cut gauge of a disclination, the Hamiltonian takes its normal form 
\begin{equation}
H=-\rmi v_F(\sigma_x\gamma_z\partial_x+\sigma_y\partial_y)-m\sigma_z\gamma_z, 
\end{equation}
but the (envelope) wave function $\psi(\vec{r})$ satisfies a nontrivial angular boundary condition: using the representation of $\hat C_6$, and taking into account the phase jump due to magnetic flux, one can show that 
\begin{equation}
	\psi(\phi=2\pi)=\left(\sigma_x\gamma_x\rme^{\rmi(2\pi/3)\sigma_z\gamma_z}\right)^{n_\Omega}\rme^{\rmi\Delta}\psi(\phi=0),  
\end{equation}
where we have defined a rescaled polar coordinate $\phi\equiv\varphi/(1-n_\Omega/6)$ and $\Delta=n_\Omega\pi L_z/3+2\pi\Phi$. Following the strategy of Ref.~\onlinecite{graphenecone,haldanemodelcone}, we now do a unitary transformation to reduce the problem to a solvable form. Define $\psi_4=U_4U_3U_2U_1U_0\psi\equiv\mathcal{U}\psi$, where 
\begin{align}
	U_0&=\left( \frac{1+\gamma_z}{2}+\frac{1-\gamma_z}{2}\sigma_x \right),\\
	U_1&=\rme^{\rmi\pi\sigma_z(1-\gamma_z)/4},\\
	U_2&=\rme^{\rmi\theta\sigma_z/2},~\text{with $\theta=\pi/2+\phi(1-n_\Omega/6)$,}\\
	U_3&=\exp\left( -\rmi\phi\left( \frac{\Delta}{2\pi}+\frac{n_\Omega}{4}\gamma_y \right) \right),\\
	U_4&=\frac{1}{\sqrt{2}}(1-\rmi\gamma_x). 
\end{align}
The effect of $U_0$ and $U_1$ is to simplify the gamma matrices, $U_2$ performs a local frame rotation such that the Hamiltonian takes a simple form in polar coordinates, $U_3$ replaces a complicated angular boundary condition with a gauge field, and $U_4$ finally block diagonalizes the Hamiltonian. 
The transformed Hamiltonian is
\begin{align}
	H_4&=(-\rmi v_F)\left[ \frac{1}{(1-n_\Omega/6)r}\sigma_x\left( \partial_\phi+\frac{1}{4}\rmi n_\Omega\gamma_z+\rmi\frac{\Delta}{2\pi} \right)\right.\nonumber\\
	&\left.-\sigma_y\left( \partial_r+\frac{1}{2r} \right) \right]-m\sigma_z, 
	\label{ContinuumTheoryH4}
\end{align}
and the boundary conditions for $\psi_4$ are
\begin{align}
	\psi_4(\phi=2\pi)&=-\psi_4(\phi=0),\\
	\gamma_z\sigma_x\psi_4(r=\rho)&=\psi_4(r=\rho),  
\end{align}
where $\rho$ is the disclination hole radius (consider a round hole centered at $r=0$). The eigenvalue problem of $H_4$ is now straightforward to solve by a partial wave expansion $\psi_4=\sum_j\chi^{(j)}(r)\rme^{\rmi j\phi}$ with $j\in \mathbb{Z}+1/2$. 

Note that $\gamma_z$ commutes with the Hamiltonian $H_4$ in Eq.~\ref{ContinuumTheoryH4}, the theory therefore splits into two sectors $\gamma_z=\pm\gamma$ where $\gamma\equiv\sgn(v_F)\sgn(m)$. It turns out that bound states ($|E|<|m|$) satisfying the boundary condition at $r=\rho$ are only possible in the $\gamma_z=-\gamma$ sector. The radial wave function $\chi$ for such a bound state is 
\begin{equation}
	\chi_{\nu,E}=
	\begin{pmatrix}
	K_{\nu-1/2}(\kappa r)\\
	-\frac{m+E}{\kappa v_F}K_{\nu+1/2}(\kappa r)
	\end{pmatrix}, 
\end{equation}
subject to the constraint
\begin{equation}
	\frac{K_{\nu-1/2}(\kappa\rho)}{K_{\nu+1/2}(\kappa\rho)}=\sqrt\frac{m+E}{m-E}, 
	\label{BoundStateEneryEq}
\end{equation}
where $\kappa=\frac{1}{|v_F|}\sqrt{m^2-E^2}$ and 
\begin{equation}
\nu=\frac{1}{1-n_\Omega/6}\left( j-\frac{1}{4}n_\Omega\gamma+\frac{\Delta}{2\pi} \right).  
\end{equation}
We then see that, at $L_z=0$, a zero energy bound state exists when $\Phi=n_\Omega\gamma/4+\text{half integer}$, which is exactly what we found in Sec.~\ref{ZeroEnergyStatesSection}. Moreover, one can study the rotation property of the zero energy bound state at $n_\Omega=0,\Phi=1/2$, and the result is also consistent with the $\pi/3$ periodicity we found previously. Recall that this enables us to compute the ground state spin change due to torus monopole fluxes. 

Let us take a look at the effective theories of the two sectors $\gamma_z=\pm\gamma$. The Hamiltonians and boundary conditions at $r=\rho$ are 
\begin{align}
H_{4\pm}&=(-\rmi v_F)\left[ \frac{1}{(1-n_\Omega/6)r}\sigma_x\left( \partial_\phi+\rmi\Phi_\pm\right)\right.\nonumber\\
&\left.-\sigma_y\left( \partial_r+\frac{1}{2r} \right) \right]-m\sigma_z,\\
\pm&\gamma\sigma_x\psi_{4\pm}(r=\rho)=\psi_{4\pm}(r=\rho), 
\end{align}
where $\Phi_\pm$ are now just numbers and can be read off from Eq.~\ref{ContinuumTheoryH4}. The $1/(1-n_\Omega/6)$ factor can be dropped out since it just renormalizes the angular Fermi velocity and does not affect any topological properties. Thus if we consider a pair of the original disclination systems, the two sectors correspond to two Haldane models \emph{without} disclination. Interestingly, boundary conditions for these two sectors at the disclination hole represent infinite mass Haldane models with the \emph{same} and \emph{opposite} signs of mass, respectively, which is important for correctly reproduce the disclination charge as we did in Sec.~\ref{ResponsesContinuumApproach}. 

\section{Edge stability of internal $\mathbb{Z}_n$ shift insulator}\label{InternalZnShift}
\subsection{The model}
In this section, we study the topological classification of the internal $\mathbb{Z}_n$ shift insulator which is similar to the shift insulator model considered in the main text but with the rotation symmetry replaced by an internal $\mathbb{Z}_n$ symmetry. This symmetry is generated by $Z_n$ with the action
\begin{equation}
	Z_nc_{\br,+}Z_n^{-1}=\rme^{\rmi\frac{2\pi}{n}}c_{\br,+},\quad Z_nc_{\br,-}Z_n^{-1}=c_{\br,-}. 
	\label{ZnTransformation}
\end{equation}
We also require the charge $U(1)$ symmetry generated by $Q$ with $[Q,c_{\br,\pm}]=c_{\br,\pm}$. 

In the noninteracting theory, the internal $\mathbb{Z}_n~(n\geq2)$ shift insulator admits a $\mathbb{Z}$ classification: no matter how many copies we have, backscattering terms of the form $c^\dagger_+c_-$ are never allowed and therefore the edge is always gapless. 

\subsection{Edge stability with interaction}\label{InternalShiftClassification}
We now study the effect of interaction on the classification from the edge stability approach. 
Consider $M$ copies of the internal $\mathbb{Z}_n$ shift insulator. The bulk is characterized by the following effective Chern-Simons theory (see, for example, Ref.~\onlinecite{wen2004book}): 
\begin{equation}
\mathcal{L}_\text{bulk}~d^3x=\frac{K_{IJ}}{4\pi}a_I\wedge \rmd a_J-\frac{1}{2\pi}\tau_I A\wedge \rmd a_I, 
\end{equation}
where $A$ is the external electromagnetic
potential. The $K$ matrix is a $2M\times 2M$ symmetric nondegenerate integer matrix of the form $K=\mathrm{diag}(1,1,\cdots,1,-1,\cdots,-1)$, and we take the charge vector $\tau$ to be $\tau=(1,1,\cdots,1)^T$. Quasiparticle excitations in this system are described by integer gauge charge vectors $l$, and the physical electric charge of each excitation is given by 
\begin{equation}
q_l=l^TK^{-1}\tau
\end{equation}
in the unit of $e$. In particular, local excitations are of the form $l=K\Lambda$ for some integer vector $\Lambda$. We can only use these local degrees of freedom to construct local interaction terms. 

The corresponding edge theory is a Luttinger Liquid (chiral compact boson) theory characterized by 
\begin{align}
\mathcal{L}_\text{edge}&=\frac{1}{4\pi}(K_{IJ}\partial_x\Phi_I\partial_t\Phi_J-V_{IJ}\partial_x\Phi_I\partial_x\Phi_J)\nonumber\\
&+\frac{1}{2\pi}\epsilon^{\mu\nu}\tau_I \partial_\mu\Phi_I A_\nu,
\end{align}
and the chirality condition \cite{gross1985heterotic} $(K\partial_t-V\partial_x)\Phi=0$, where $V_{IJ}$ is a positive-definite velocity matrix. Note that in $1+1$ dimensions, $\epsilon^{\mu\nu}$ is numerically equal to $-\epsilon_{\mu\nu}$, i.e. $\epsilon^{01}=-1$. 
Quasiparticle creation operators are of the form $\rme^{-\rmi l^T\Phi}$. Given our choice of the charge vector $\tau$, creation operators for $p_\pm$ electrons are given by $\rme^{\rmi\Phi_k}$ and $\rme^{-\rmi\Phi_{k+M}}$, respectively, with $1\leq k\leq M$. 

To study the interaction effect at the edge, we follow the method in Ref.~\onlinecite{Levin12,Gu14}. Consider interaction terms of the following form. 
\begin{equation}
U(\Lambda)=U(x)\cos(\Lambda^TK\Phi-\alpha(x)). 
\end{equation}
In order to gap out the edge without breaking the $U(1)\times\mathbb{Z}_n$ symmetry, we need to add $M$ such terms, i.e. $\sum_{i=1}^MU(\Lambda_i)$, with linearly independent integer vectors $\Lambda_i$ satisfying the following conditions: 
\begin{enumerate}
	\item Explicitly preserve the symmetry. 
	
	For the $U(1)$ symmetry $\Phi\mapsto \Phi+\varphi K^{-1}\tau$, we need 
	\begin{equation}
	\Lambda_i^T\tau=0. 
	\end{equation}
	For the $\mathbb{Z}_n$ symmetry $\Phi\mapsto\Phi+(2\pi/n)K^{-1}\chi$ where $\chi^T=(1,1,\cdots,1,0,\cdots,0)$, we need
	\begin{equation}
	\Lambda_i^T\chi=0\mod n. 
	\end{equation} 
	\item Do not spontaneously break the symmetry. 
	
	Note that the interaction terms will freeze the values of $\Lambda_i^TK\Phi$. If, for some coprime integers $a_1,\cdots,a_M$, the linear combination $\sum_i a_i\Lambda_i$ is nonprimitive, i.e. $\sum_i a_i\Lambda_i=k\Lambda$ for some integer vector $\Lambda$ and an integer $k>1$, then the value of $\Lambda^TK\Phi$ is also frozen and this may spontaneously break the $\mathbb{Z}_n$ symmetry. It is proven in Ref.~\onlinecite{Levin12} that, a nonprimitive linear combination $\sum_i a_i\Lambda_i$ exists for some coprime integers $a_1,\cdots,a_M$, if and only if the set of $
	\begin{pmatrix}
	2M\\
	M
	\end{pmatrix}
	$ $M\times M$ minors of the matrix $(\Lambda_1,\cdots,\Lambda_M)$ have a nontrivial common divisor. We will require ruling out this possibility. Also note that having at least one nonzero $M\times M$ minor implies linear independence, so we do not need to check that separately. 
	\item Fully gap out the edge. 
	
	This is guaranteed by the following Haldane null vector condition \cite{haldane1995stability}: 
	\begin{equation}
	\Lambda^T_iK\Lambda_j=0~~~\forall i,j. 
	\end{equation}
\end{enumerate}
Our task now reduces to finding the set of $\Lambda_i$ vectors obeying the above constraints. 

\begin{theorem}\label{oddcase}
	When $n$ is odd, the edge of $M=n$ copies of $\mathbb{Z}_n$ shift insulator can be gapped out. In particular, if $n$ is an odd prime number, this implies $\mathbb{Z}_n$ classification (assuming that a single copy is nontrivial). 
\end{theorem}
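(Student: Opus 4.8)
The plan is to verify the three conditions of the edge-gapping criterion stated just above (explicit symmetry preservation, no spontaneous breaking, Haldane null vectors) for $M=n$ copies by writing down an explicit set of null vectors, and then to extract the $\mathbb{Z}_n$ statement for odd primes by an elementary group-theoretic argument. It is convenient to write a candidate gapping vector as $\Lambda=(u;v)$ with $u,v\in\mathbb{Z}^{n}$ the components in the $p_+$ and $p_-$ sectors. With $K=\mathrm{diag}(1,\dots,1,-1,\dots,-1)$ ($n$ entries of each sign), $\tau=(1,\dots,1)^{T}$ and $\chi=(1,\dots,1,0,\dots,0)^{T}$, the conditions read: (1) symmetry, $\mathbf{1}\cdot u+\mathbf{1}\cdot v=0$ and $\mathbf{1}\cdot u\equiv 0\pmod{n}$; (2) no breaking, the $n\times n$ minors of $[\Lambda_1|\cdots|\Lambda_n]$ have no common divisor; (3) Haldane null, $u_i\cdot u_j=v_i\cdot v_j$ for all $i,j$. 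Here $\cdot$ is the ordinary Euclidean dot product and $\mathbf{1}=(1,\dots,1)$.

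The set I would try is $\Lambda_n=(\mathbf{1};-\mathbf{1})$ together with $\Lambda_i=(u_i;u_i)$ for $i=1,\dots,n-1$, where $u_i=e_i-e_{i+1}$ are the simple roots forming a $\mathbb{Z}$-basis of the root lattice $A_{n-1}=\{x\in\mathbb{Z}^{n}:\sum_a x_a=0\}$. With this choice conditions (1) and (3) are immediate: every $u_i$ has coordinate sum $0$ and is orthogonal to $\mathbf{1}$, while $\mathbf{1}\cdot\mathbf{1}=n$, and since $v_i=u_i$ for $i<n$ and $v_n=-\mathbf{1}$ the Gram matrices of the $u$'s and of the $v$'s coincide entry by entry. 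So by construction the entire content of the theorem is concentrated in condition (2), and this is the only place the parity of $n$ can enter.

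For condition (2) I would avoid computing minors directly and instead invoke the elementary lattice fact that, for the \emph{unimodular} form $K$ of signature $(n,n)$, an isotropic rank-$n$ sublattice $L\subseteq\mathbb{Z}^{2n}$ is a direct summand (equivalently, has coprime $n\times n$ minors) if and only if it is Lagrangian, $L=L^{\perp}$; the nontrivial direction comes from putting $K$ in block form relative to a complement of the saturation. Since $L=\langle\Lambda_1,\dots,\Lambda_n\rangle$ is isotropic of rank $n$ by construction, it suffices to prove $L^{\perp}\subseteq L$. Taking $w=(w^{+};w^{-})\in L^{\perp}$: the relations $w^{T}K\Lambda_i=0$ for $i<n$ say that $w^{+}-w^{-}$ is orthogonal to every simple root, hence $w^{+}-w^{-}=c\,\mathbf{1}$ for some $c\in\mathbb{Z}$, and $w^{T}K\Lambda_n=0$ gives $\mathbf{1}\cdot(w^{+}+w^{-})=0$. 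Eliminating $w^{+}$ yields $2\,\mathbf{1}\cdot w^{-}=-cn$, and this is the crux: \emph{because $n$ is odd, $c$ must be even}, say $c=2c'$. One then checks that $w-c'\Lambda_n$ has two equal, coordinate-sum-zero blocks, so it lies in $\langle\Lambda_1,\dots,\Lambda_{n-1}\rangle$, and therefore $w\in L$. I expect this parity step to be the main thing to get right: for even $n$ the same lattice fails to be primitive, which is the lattice-level shadow of the fact that $n$ copies need not be trivial there (the interacting classification is strictly larger than $\mathbb{Z}_n$), so the bookkeeping should be arranged so that this contrast appears cleanly rather than by accident.

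Once the $n$ null vectors are in hand, the edge-gapping criterion shows that $n$ copies of the $\mathbb{Z}_n$ shift insulator have a fully gapped, $U(1)\times\mathbb{Z}_n$-symmetric edge, i.e.\ form the trivial phase. For the last clause, the non-interacting classification is $\mathbb{Z}$ with stacking as the group law, so (restricting, as in the main text, to the reduction of this classification) the interacting classification is a quotient $\mathbb{Z}/k\mathbb{Z}$; we have just shown that $n$ times the generator vanishes, so $k\mid n$, while the standing assumption that a single copy is nontrivial gives $k\neq 1$. When $n$ is prime this forces $k=n$, so the interacting classification is exactly $\mathbb{Z}_n$.
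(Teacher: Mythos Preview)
Your proof is correct and uses the same lattice of null vectors as the paper (your simple roots $e_i-e_{i+1}$ and the paper's $e_1-e_{i+1}$ are related by a unimodular change of basis, and both are paired with $\Lambda_n=(\mathbf{1};-\mathbf{1})$), so conditions (1) and (3) are handled identically.

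The genuine difference is in condition (2). The paper verifies primitivity by brute force: it computes two explicit $n\times n$ minors of the $\Lambda$-matrix, finds them equal to $n$ and $\pm(n-2)$, and observes that $\gcd(n,n-2)=1$ for odd $n$. You instead prove the Lagrangian property $L=L^{\perp}$ directly and invoke the (easy direction of the) general fact that a Lagrangian sublattice of a unimodular lattice is automatically primitive. Your route is more conceptual: it avoids any determinant computation, and it isolates the parity of $n$ in a single transparent step ($2\,\mathbf{1}\cdot w^{-}=-cn$ forces $c$ even precisely when $n$ is odd), which also makes it immediately clear \emph{why} the same construction fails for even $n$ (one finds $w\in L^{\perp}\setminus L$ with $c$ odd, exhibiting the index-2 failure of primitivity). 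The paper's minor computation is shorter on the page but somewhat opaque about where oddness enters. The final prime-$n$ argument is the same in both.
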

\begin{proof}
	We claim that the following matrix of $\Lambda^T_i$'s will do the job. 
	\begin{equation}
	\begin{pmatrix}
	\Lambda^T_1\\
	\vdots\\
	\Lambda^T_n
	\end{pmatrix}=
	\begin{pmatrix}[ccccc|ccccc]
	1 & -1 & & & &  1 & -1 & & & \\
	1 & & -1 & & &  1 & & -1 & & \\
	\vdots & & & \ddots & &   \vdots & & & \ddots & \\
	1 & & & & -1 &   1 & & & & -1\\
	1 & 1 & 1 & \cdots & 1 &   -1 & -1 & -1 & \cdots & -1 
	\end{pmatrix}. 
	\end{equation}
	Explicit symmetry and the null vector condition can be directly checked. To check the primitivity condition, note that the determinant for the first through $n$-th columns is $n$, and the determinant for the second through $(n+1)$-th columns is $(-1)^{n-1}(n-2)$. Since $n$ and $(n-2)$ are already coprime for an odd $n$, we conclude that the set of $n\times n$ minors of the matrix above do not have a nontrivial common divisor. 
	
	The above result does not in general imply $\mathbb{Z}_n$ classification, because it is possible that fewer number of copies of the system is already trivial. However, when $n$ is an odd prime number, since $\mathbb{Z}_n$ does not have any nontrivial proper subgroup, we can safely conclude $\mathbb{Z}_n$ classification as long as a single copy is nontrivial. At least for the interaction term considered here, there is no symmetry allowed choice which can gap out the edge states of a single $\mathbb{Z}_n$ shift insulator. 
\end{proof}

\begin{theorem}\label{evencase}
	When $n$ is even, the edge of $M=2n$ copies of $\mathbb{Z}_{n}$ shift insulator can be gapped out. In particular, if $n=2p$ with $p$ being an odd prime number, this implies $\mathbb{Z}_{4p}$ classification.  
\end{theorem}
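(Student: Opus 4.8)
The plan is to follow the strategy used for Theorem~\ref{oddcase}: write down an explicit $2n\times 4n$ integer matrix whose rows are the coefficient vectors $\Lambda_1^T,\dots,\Lambda_{2n}^T$ of the $M=2n$ gapping terms $U(\Lambda_i)$, where $K=\mathrm{diag}(1,\dots,1,-1,\dots,-1)$ (with $2n$ of each sign), the charge vector is $\tau=(1,\dots,1)^T$, and the $\Z_n$ vector is $\chi=(1,\dots,1,0,\dots,0)^T$ with ones on the first $2n$ ($p_+$) components. First I would arrange the two easy conditions of Sec.~\ref{InternalShiftClassification} to hold by inspection: each $\Lambda_i$ is built with vanishing total entry sum, $\Lambda_i^T\tau=0$, and with $p_+$-block entry sum an integer multiple of $n$, $\Lambda_i^T\chi\equiv 0\bmod n$, so that $U(1)\times\Z_n$ is explicitly preserved; and each $\Lambda_i$ is made ``balanced'' between the two chirality blocks up to signs, so that the Haldane null condition $\Lambda_i^TK\Lambda_j=0$ holds termwise and the $\Lambda_i$ span a maximal isotropic sublattice, hence fully gap the edge. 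Assembling a matrix with these properties is routine.

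The real content is the no-spontaneous-symmetry-breaking (primitivity) condition, i.e.\ that the $\binom{4n}{2n}$ maximal minors of the matrix have greatest common divisor $1$. In Theorem~\ref{oddcase} this worked because two of the minors were $n$ and $\pm(n-2)$, coprime exactly when $n$ is odd. For even $n$ one has $\gcd(n,n-2)=2$, so the minimal ``single-cycle'' construction on $n$ copies is genuinely obstructed, and more is true: any construction obtained by juxtaposing two mutually decoupled $n$-copy blocks of the Theorem~\ref{oddcase} type forces every maximal minor to factor as a product of two smaller minors each of which is \emph{even} (checked by reducing a per-block matrix mod $2$ and using that $n$ is even), hence divisible by $4$, so it can never be primitive. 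This is precisely why one must use $2n$ copies and couple them in a non-separable way: the matrix has to be engineered so that among its maximal minors there is a small collection of overall gcd $1$ --- for instance a minor equal to $\pm 2n$ together with one of odd value, or two minors differing by an odd number. Producing such a matrix and exhibiting the witnessing minors is the step I expect to be the main obstacle; once they are in hand the remaining verification is mechanical.

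Granting the construction, $M=2n=4p$ copies can be symmetrically gapped with no spontaneous symmetry breaking, so the interacting classification --- which is cyclic of some order $m$, being a quotient of the non-interacting $\Z$ by the subgroup of stacks that interactions trivialize --- satisfies $m\mid 4p$. When $p$ is an odd prime the divisors of $4p$ are $1,2,4,p,2p,4p$, and every proper divisor divides either $4$ or $2p$; hence it suffices to show that neither $4$ copies nor $2p$ copies admit an admissible (symmetric, SSB-free, edge-gapping) set of $\Lambda_i$, which rules out $m\in\{1,2,4,p,2p\}$ and forces $m=4p$. As in the closing remark of the proof of Theorem~\ref{oddcase}, the non-existence for $2k<2n$ copies is argued by the same minor/gcd analysis in reverse: the evenness of $n$ leaves a residual common factor of $2$ in the maximal minors that so few copies cannot remove, and the $\Z_n$ constraint $\Lambda_i^T\chi\equiv 0\bmod n$ (with $n=2p\ge 6$) is unusable on only four right-moving modes. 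This gives exactly $\Z_{4p}$; for $n=6$ ($p=3$) it reproduces the $\Z_{12}$ classification of the shift insulator found in Sec.~\ref{Interaction}.
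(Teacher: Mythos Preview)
Your framework is correct, but both essential steps are left as intentions rather than carried out.

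\textbf{The construction.} You say that ``producing such a matrix and exhibiting the witnessing minors is the step I expect to be the main obstacle''; the paper simply does it. The first $2n-1$ rows are exactly as in Theorem~\ref{oddcase} (now of length $4n$), and the only change is the last row: instead of $(1,1,\dots,1\,|\,-1,\dots,-1)$ one takes the alternating pattern $(1,0,1,0,\dots,1,0\,|\,0,-1,0,-1,\dots,0,-1)$. With this choice the first-block sum is $n$ (so $\Lambda_{2n}^T\chi\equiv 0\bmod n$), the total sum is zero, and the null condition $\Lambda_i^T K\Lambda_j=0$ holds as before. The minor of columns $1,\dots,2n$ equals $n$ and the minor of columns $2,\dots,2n{+}1$ equals $\pm(n-1)$; since $\gcd(n,n-1)=1$ for \emph{every} $n$, primitivity follows. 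Your speculation about needing ``a minor equal to $\pm 2n$ together with one of odd value'' is more elaborate than what is actually required.

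\textbf{The $\Z_{4p}$ claim.} The paper does not attempt to prove directly that no admissible $\Lambda$-set exists at $M=4$ or $M=2p$. It instead invokes the subgroup $\Z_2\subset\Z_{2p}$: any $\Z_{2p}$-symmetric gapping term is automatically $\Z_2$-symmetric, and the $\Z_2$ classification is known to be $\Z_4$ from Ref.~\onlinecite{Gu14}, so $4\mid m$. Combined with $m\mid 4p$ (and, implicitly, $p\mid m$ from $\Z_p\subset\Z_{2p}$ via Theorem~\ref{oddcase}), this forces $m=4p$. Your proposed route is not wrong in principle, but the argument you sketch for ruling out $M=4$ --- that the constraint $\Lambda^T\chi\equiv 0\bmod 2p$ ``is unusable on only four right-moving modes'' --- is false as stated: vectors with $p_+$-block $(0,0,0,0)$ or $(p,-p,0,0)$ satisfy it, so the obstruction is not the one you name.
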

\begin{proof}
	Consider the following matrix. 
	\begin{equation}
	\begin{pmatrix}
	\Lambda^T_1\\
	\vdots\\
	\Lambda^T_{2n}
	\end{pmatrix}=
	\begin{pmatrix}[ccccc|ccccc]
	1 & -1 & & & &  1 & -1 & & & \\
	1 & & -1 & & &  1 & & -1 & & \\
	\vdots & & & \ddots & &   \vdots & & & \ddots & \\
	1 & & & & -1 &   1 & & & & -1\\
	1 & 0 & 1 & \cdots & 0 &   0 & -1 & 0 & \cdots & -1 
	\end{pmatrix}. 
	\end{equation}
	Explicit symmetry and the null vector condition can be directly checked. To check the primitivity condition, note that the determinant for the first through $2n$-th columns is $n$, and the determinant for the second through $(2n+1)$-th columns is $(-1)^{2n-1}(n-1)$. Since $n$ and $(n-1)$ are already coprime, we conclude that the set of $2n\times 2n$ minors of the matrix above do not have a nontrivial common divisor. 
	
	If $n=2p$ with $p$ being an odd prime number, we can say more about the classification. Suppose the actual classification is $\mathbb{Z}_m$, then $m$ must be a divisor of $4p$. Also, there should exist a $\mathbb{Z}_{2p}$ symmetric interaction term which can gap out the edge of $m$ copies of the system. Note that a $\mathbb{Z}_{2p}$ symmetric term is also $\mathbb{Z}_2$ symmetric, and the result of Ref.~\onlinecite{Gu14} implies that the classification in the $\mathbb{Z}_2$ case is $\mathbb{Z}_4$, so $m$ must be a multiple of $4$. The only allowed choice of $m$ is then $4p$. 
\end{proof}
\subsection{Relation to $C_6$ shift insulators}
Let us now try to match $C_6$ rotation and internal $\mathbb{Z}_6$ shift insulators. According to Theorem \ref{evencase} in Sec.~\ref{InternalShiftClassification}, the $\mathbb{Z}_6$ shift insulator admits $\mathbb{Z}_{12}$ classification, but this does not imply the same classification for all $C_6$ shift insulators because a $C_6$ shift insulator is not necessarily mapped to a single copy ($M=1$) of the internal $\mathbb{Z}_6$ shift insulator. To establish the correct mapping, we can compare the charge response to symmetry defects. In the $C_6$ shift insulator case, the symmetry defects are nothing but disclinations and the number of trapped electrons is given by $({n_\Omega}/{6})\sum_l c_{a,l}$ using the atomic insulator representation discussed in Sec.~\ref{WannierRep}. In the case of internal $\mathbb{Z}_6$ shift insulator, a symmetry defect is a local object such that the braiding of electrons around it reproduces the symmetry transformation in Eq.~\ref{ZnTransformation}. The defect can then be identified as the \emph{fractional} gauge charge vector $l_d=\frac{1}{6}(1,\cdots,1,0,\cdots,0)^T$ or physically a $2\pi/6$ flux in the ``$+$'' sector, and its braiding statistics with an electron can be verified using the formula $\theta_{ll'}=2\pi l^T K^{-1}l'$. 
From quantum Hall effect, we know that the number of electrons trapped by this single defect is $-\frac{1}{6}M$. Therefore, if we identify the ``+'' sector $2\pi/6$ flux with the disclination $n_\Omega=1$ (or $-1$) in the rotation symmetry case, we conclude that the correspondence between $C_6$ and internal shift insulators must satisfy $\sum_l c_{a,l}=\mp M\mod6$. This relation is already enough for showing that the topological classifications of $C_6$ and internal shift insulators match with each other. For $C_6$ shift insulators with only the rotation symmetry (class A, and no translation symmetry), the classification is determined by $N_a-2 L_a\mod 12$ and $L_a\mod3$, where $N_a=\sum_lc_{a,l}$ and $L_a=\sum_l lc_{a,l}$ are respectively the total number and total angular momentum of $a$-type atomic orbitals at one plaquette center. Since the $C_6$ shift insulator model has spinless time-reversal symmetry (although for now we do not consider it as a protecting symmetry), we always have $L_a=0\mod3$ and the only nontrivial invariant left is $N_a-2L_a\mod 12$. From the disclination charge formula $N_a=6\Delta Q|_{n_\Omega=1}=-2\sgn(\lambda)L_z+3\sgn(t)\mod 6$, we know that $N_a$ is always an odd number, and therefore we have
\begin{align}
	&\gcd(12,N_a-2L_a)=\gcd(12,\mathrm{mod}(N_a,6))\nonumber\\
	&=\gcd(12,\mathrm{mod}(M,6))=\gcd(12,M), 
\end{align}
where $\gcd$ abbreviates greatest common divisor. This proves that the classification of $C_6$ and internal $\mathbb{Z}_6$ shift insulators are indeed the same. 
\end{document}